\documentclass[11pt]{article}

\usepackage[margin=1in]{geometry}

\usepackage{amsmath,amssymb,amsthm,mathtools,bm}
\usepackage{booktabs,array,enumitem,microtype}
\usepackage{float}
\usepackage{xcolor}
\usepackage{tikz}
\usetikzlibrary{arrows.meta,calc,matrix,positioning}
\usepackage{pgfplots}
\usepgfplotslibrary{fillbetween}
\pgfplotsset{compat=1.18}
\usepackage{authblk}

\usepackage[colorlinks=true,linkcolor=blue!55!black,citecolor=blue!55!black,urlcolor=blue!55!black]{hyperref}
\usepackage[nameinlink,capitalise,noabbrev]{cleveref}

\usepackage{braket}
\usepackage{mathrsfs}

\newtheorem{theorem}{Theorem}[section]
\newtheorem{lemma}[theorem]{Lemma}
\newtheorem{corollary}[theorem]{Corollary}
\newtheorem{proposition}[theorem]{Proposition}
\theoremstyle{definition}
\newtheorem{definition}[theorem]{Definition}
\newtheorem{remark}[theorem]{Remark}

\def\H{\mathcal{H}}
\def\D{\mathcal{D}}

\newcommand{\proj}[1]{\ket{#1} \bra{#1}}
\newcommand{\tc}{\mathrm{Tr}}
\newcommand{\U}{\mathcal{U}}
\newcommand{\cC}{\mathcal C}
\newcommand{\cM}{\mathcal M}
\newcommand{\cT}{\mathcal T}
\newcommand{\E}{\mathbb E}
\newcommand{\Prb}{\mathbb P}
\newcommand{\Tr}{\operatorname{Tr}}
\newcommand{\supp}{\operatorname{supp}}
\newcommand{\one}{\mathbf 1}
\newcommand{\ketbra}[1]{\lvert #1\rangle\!\langle #1\rvert}
\newcommand{\dtr}[2]{d\!\left(#1,#2\right)}
\newcommand{\tv}[2]{d_{\mathrm{TV}}\!\left(#1,#2\right)}

\DeclareMathOperator{\rank}{rank}
\DeclareMathOperator{\Var}{Var}
\DeclareMathOperator{\Cov}{Cov}

\title{Quantum Broadcast Channels with Mutually Confidential Messages}

\author[1,3]{Paula Belzig}
\author[1,2,4]{Sukanya Ghosal}
\author[1,4,5]{Farzin Salek\thanks{The authors are listed alphabetically by surname; this order does not indicate their relative contributions. SG and FS made the principal contributions to the research and writing of this paper.

Correspondence to FS: \href{mailto:farzin.salek@gmail.com}{farzin.salek@gmail.com}.}}
\author[1,2]{Graeme Smith}

\affil[1]{Institute for Quantum Computing, University of Waterloo,
Waterloo, ON N2L 3G1, Canada}

\affil[2]{Department of Applied Mathematics, University of Waterloo,
Waterloo, ON N2L 3G1, Canada}

\affil[3]{Department of Combinatorics and Optimization,
University of Waterloo, Waterloo, ON N2L 3G1, Canada}

\affil[4]{Perimeter Institute for Theoretical Physics,
Waterloo, ON N2L 2Y5, Canada}

\affil[5]{Dahlem Center for Complex Quantum Systems,
Freie Universit\"at Berlin, Berlin, Germany}

\begin{document}
\maketitle

\begin{abstract}
We study the transmission of two independent confidential classical messages over a quantum broadcast channel, one for each receiver. Each message must remain secret from the other receiver, including when that receiver knows its own message. For classical inputs and quantum outputs, we establish the classical Marton-type inner bound under average reliability and conditional strong secrecy. The encoder selects pairs of codewords from independently generated codebooks using normalized likelihood weights. We prove reliability through a change-of-distribution argument. Our main technical result is a bipartite classical–quantum resolvability theorem that accounts for the dependence created by pair selection and establishes secrecy for the same encoder. We also obtain a multi-letter capacity characterization and extend it to arbitrary quantum inputs under secrecy against the other receiver, together with the Stinespring environment. We recover confidential capacity regions for deterministic classical and degraded classical–quantum channels, with an explicit evaluation for the classical Blackwell channel. For coherent isometric extensions of injective deterministic
classical broadcast channels, we show that the confidential
classical capacity region equals that of the corresponding
classical channel. We then compare confidential classical communication with quantum transmission. For the coherent isometric extension of the Blackwell channel, we determine the unassisted quantum-capacity region and show that some achievable confidential classical rate pairs lie outside it. The Platypus channel provides another example of this separation.

\end{abstract}

\section{Introduction}

In a broadcast channel with confidential messages, a receiver may be
entitled to decode one message while being kept ignorant of another.
The classical broadcast channel with confidential messages was
introduced in \cite{CsiszarKorner1978}.
For two mutually confidential messages, a Marton-type inner bound was derived in \cite{LiuMaricSpasojevicYates2008}, and strong secrecy and stealth were subsequently established in \cite{BjelakovicMohammadiStanczak2016}.

Here, we consider the corresponding problem with quantum outputs. The messages are independent and classical, and each must remain secret from the unintended receiver even when that receiver knows its own message. Our main result gives a classical-quantum (cq) realization of the classical inner bound under this conditional strong-secrecy requirement.

Classical communication over quantum broadcast channels has been
studied through random coding and polar coding in both asymptotic and
one-shot settings
\cite{SavovWilde2015,RadhakrishnanSenWarsi2016,HircheMorgan2015}.
Different message configurations lead to different coding problems.
The combination of common, individualized, and confidential messages
was considered in \cite{SalekHsiehFonollosa2020}, while hierarchical
message sets and nonunique decoding for three receivers were studied
in \cite{SalekHaydenHayashi2026}.  Compound broadcast channels with
common and confidential messages were treated in
\cite{BocheJanssenSaeedinaeeni2020}.  Here, both receivers have their own
confidential message, and there is no common message. Entanglement-assisted private communication over quantum broadcast
channels was studied in \cite{QiSharmaWilde2018}.

The main difficulty in the decoding error and secrecy analyses lies in the dependence created by pair selection. Marton coding starts with independently generated marginal codebooks and selects a pair intended to reproduce a correlated joint distribution \cite{Marton1979}.  The selected indices depend on both codebooks. In particular, varying a randomization index in one book can change the selected codeword in the other.  A secrecy argument
must therefore account for the selected pair and its dependence on the full codebook.

Channel resolvability provides a general method for establishing strong
secrecy \cite{BlochLaneman2013}.  Likelihood encoding connects the
distribution induced by a random code with a target distribution
\cite{Cuff2013,SongCuffPoor2016}, and has been used for broadcast
channels with privacy-leakage constraints and with receiver cooperation
\cite{GoldfeldKramerPermuter2017,GoldfeldKramerPermuterCuff2017}.
Related random-binning methods with quantum side information also
appear in common-randomness and conference-key distillation from
multipartite quantum states \cite{SalekWinter2022,SalekWinter2025}.

We use a normalized likelihood selector for both reliability and
secrecy.  For reliability, we follow the change-of-distribution
approach of \cite{GouiaaPadakandla2026}.  Under a planted distribution,
the transmitted pair has the target joint law and competing codewords
retain the independence needed for packing.  Each receiver decodes
the complete fine index and retains only its intended message.
For secrecy, we compare the same selector with two oriented posterior
distributions.  Each orientation makes one of the correlation indices uniform and leaves a one-sided likelihood encoder for the other.  Ordinary cq soft
covering then gives a bipartite resolvability bound for the actual
selected output.  This bound is the main technical contribution of
the paper.  Together with the packing argument, it establishes
reliability and both secrecy requirements for one codebook.  The proof
uses neither pinching nor the ``Marton overcounting" technique from \cite{RadhakrishnanSenWarsi2016}.

We also bound the classical auxiliary alphabets.  The role of
cardinality bounds in evaluating classical Marton regions has been
studied using perturbation and concave-envelope methods
\cite{GohariAnantharam2012,AnantharamGohariNair2019}.
For the present region, entropy concavity gives a support reduction
that controls both averaged rate bounds.  Applying the coding theorem
to block channels then yields a regularized capacity characterization.
We derive a separate outer bound involving a quantum auxiliary.
Conditioning on such a system is generally richer than averaging over
a classical variable \cite{BeigiGohari2014}; this distinction also
appears in channel comparisons with quantum side information
\cite{HircheRouzeFranca2022}. For Hadamard quantum broadcast channels, the structure of the
degrading map supplies a classical auxiliary that enables
single-letter capacity characterizations for several communication
tasks \cite{WangDasWilde2017}.

Finally, we allow arbitrary quantum input preparations while continuing
to transmit classical messages.  In this model, secrecy is required
against the unintended receiver together with the channel environment.
We distinguish this task from quantum-information transmission, whose
relation to private classical communication has also been studied in
the one-shot setting \cite{SalekAnshuHsiehJainFonollosa2020}.
Converse bounds for entanglement generation over quantum broadcast
channels provide another approach to the quantum task
\cite{HaydenLeungRallSalek2026}.  Deterministic and degraded channels
give applications of our confidential-message theorem.  The classical
Blackwell channel, its coherent isometric extension, and the Platypus
channel illustrate the distinction between confidential classical
and quantum communication.

\section{Notation}
\label{sec:model}

All alphabets are finite, all Hilbert spaces are finite-dimensional, and all
logarithms are base two.  For states \(\rho,\sigma\), we define the \textit{trace distance}
\[
 \dtr{\rho}{\sigma}:=\frac12\lVert\rho-\sigma\rVert_1,
\]
and for probability distributions \(P,Q\), we define the \textit{total variation distance} as follows.
\[
 \tv{P}{Q}:=\frac12\sum_z|P(z)-Q(z)|.
\]
Both quantities contract under channels.  In particular, two input laws at
total variation distance \(\epsilon\) induce average cq output states at
trace distance at most \(\epsilon\).  The state of a uniform classical
register \(M\) is denoted by
\(\pi^M=|\mathcal M|^{-1}\sum_{m \in \mathcal{M}}\ketbra m\), and \([L]=\{1,\ldots,L\}\).
The binary entropy is
\(h_2(t)=-t\log t-(1-t)\log(1-t)\), with \(0\log0=0\).

For a bipartite quantum state $\rho^{AB}$, the quantum mutual information is defined as $I(A;B)_\rho = H(A)_\rho + H(B)_\rho - H(AB)_\rho$ where $H(C)_\sigma = - \tc \{\sigma \log \sigma\}$ is the von Neumann entropy of the quantum state $\sigma^C$. 

\section{Main result: Achievable rates for confidential communication over cq broadcast channels}

\subsection{Operational model}

Let
\[
 W:x\longmapsto\rho_x^{B_1B_2}
\]
be a memoryless cq broadcast channel and denote
\(\rho_{x^n}^{B_1^nB_2^n}=\bigotimes_{t=1}^n
\rho_{x_t}^{B_{1,t}B_{2,t}}\).
The communication model contains two independent confidential classical
messages and no common message, and it is illustrated in \cref{cq-fig-operational-model}.

\begin{figure}[t]
\centering
\begin{tikzpicture}[
  font=\small,
  message/.style={
    draw, rounded corners=1.5pt, minimum width=10mm,
    minimum height=8mm, inner sep=2pt, fill=black!2
  },
  block/.style={
    draw, rounded corners=2pt, align=center, minimum height=13mm,
    inner sep=4pt, fill=blue!4
  },
  system/.style={
    draw, rounded corners=1.5pt, minimum width=12mm,
    minimum height=8mm, inner sep=2pt, fill=black!2
  },
  decoder/.style={
    draw, rounded corners=2pt, align=center, minimum width=18mm,
    minimum height=10mm, inner sep=3pt, fill=blue!4
  },
  wire/.style={
    -{Latex[length=2.1mm,width=1.4mm]}, semithick
  },
  secrecy/.style={
    draw=black!65, dashed, rounded corners=2pt, align=center,
    text width=6.15cm, minimum height=15mm, inner sep=4pt,
    fill=black!1
  }
]
\node[message] (m1) {$M_1$};
\node[message, below=7mm of m1] (m2) {$M_2$};
\coordinate (mcenter) at ($(m1)!0.5!(m2)$);

\node[block, right=11mm of mcenter, text width=2.45cm] (encoder)
  {Likelihood-selected encoder};
\node[block, right=17mm of encoder, text width=3.25cm] (channel)
  {cq broadcast channel\\
  $W^{\otimes n} :x^n \longmapsto\rho_{x^n}^{B^n_1B^n_2}$ };

\node[system] (b1)
  at ([xshift=15mm,yshift=7mm]channel.east) {$B_1^n$};
\node[system] (b2)
  at ([xshift=15mm,yshift=-7mm]channel.east) {$B_2^n$};

\node[decoder, right=7mm of b1] (dec1) {$\mathcal D_1^{(n)}$};
\node[decoder, right=7mm of b2] (dec2) {$\mathcal D_2^{(n)}$};
\node[message, right=6mm of dec1] (hat1) {$\widehat M_1$};
\node[message, right=6mm of dec2] (hat2) {$\widehat M_2$};

\draw[wire] (m1.east) -- (encoder.west);
\draw[wire] (m2.east) -- (encoder.west);
\draw[wire] (encoder.east) --
  node[above,font=\footnotesize] {$x^n$}
  (channel.west);
\draw[wire] (channel.east) -- (b1.west);
\draw[wire] (channel.east) -- (b2.west);
\draw[wire] (b1.east) -- (dec1.west);
\draw[wire] (b2.east) -- (dec2.west);
\draw[wire] (dec1.east) -- (hat1.west);
\draw[wire] (dec2.east) -- (hat2.west);

\coordinate (figurecenter) at ($(m1.west)!0.5!(hat2.east)$);
\node[secrecy, anchor=north east] (sec1)
  at ([xshift=-3mm,yshift=-15mm]figurecenter)
  {\textbf{Secrecy of $M_1$}\\[-0.5mm]
   $B_2^n$ with side information $M_2$\\[-0.5mm]
   $I(M_1;B_2^n|M_2)\longrightarrow0$};
\node[secrecy, anchor=north west] (sec2)
  at ([xshift=3mm,yshift=-15mm]figurecenter)
  {\textbf{Secrecy of $M_2$}\\[-0.5mm]
   $B_1^n$ with side information $M_1$\\[-0.5mm]
   $I(M_2;B_1^n|M_1)\longrightarrow0$};
\end{tikzpicture}
\caption{Confidential-message model for cq channels. The two secrecy
conditions protect $M_1$ from $B_2^n$ given $M_2$, and
$M_2$ from $B_1^n$ given $M_1$.}
\label{cq-fig-operational-model}
\end{figure}

\begin{definition}[Confidential-message code]
An \(n\)-block code consists of message sets \(\cM_1,\cM_2\), a stochastic
encoder
\[
 E_n:\cM_1\times\cM_2\longrightarrow\mathcal P(\mathcal X^n),
\]
and POVMs \(\{\Lambda^{(i)}_{m_i}:m_i\in\cM_i\}\) on \(B_i^n\),
\(i=1,2\).  For independent uniform messages, the induced state is
\begin{align}
 \Omega^{M_1M_2B_1^nB_2^n}
 :=\frac1{|\cM_1||\cM_2|}\sum_{m_1,m_2}
 \ketbra{m_1,m_2}\otimes
 \sum_{x^n}E_n(x^n|m_1,m_2)\, \rho_{x^n}^{B_1^nB_2^n}.
 \label{eq:operational-state}
\end{align}
Writing \(\rho_{m_1,m_2}^{B_i^n}\) for the corresponding conditional
marginal, receiver \(i\)'s average error is
\begin{equation}
 P_{e,i}^{(n)}
 :=1-\frac1{|\cM_1||\cM_2|}\sum_{m_1,m_2}
 \Tr\!\left[\Lambda^{(i)}_{m_i}\rho_{m_1,m_2}^{B_i^n}\right].
 \label{eq:average-error}
\end{equation}
The conditional trace-secrecy parameters are
\begin{align}
 \Delta_{1,n}
 &:=\dtr{\Omega^{M_1M_2B_2^n}}
 {\pi^{M_1}\otimes\Omega^{M_2B_2^n}},
 \label{eq:delta-one}\\
 \Delta_{2,n}
 &:=\dtr{\Omega^{M_1M_2B_1^n}}
 {\pi^{M_2}\otimes\Omega^{M_1B_1^n}}.
 \label{eq:delta-two}
\end{align}
\end{definition}

\begin{definition}[Achievability]
A pair \((R_1,R_2)\) is achievable with conditional strong secrecy if
there is a sequence of confidential-message codes such that
\[
 \liminf_{n\to\infty}\frac1n\log|\cM_i|\ge R_i,
 \qquad P_{e,i}^{(n)}\longrightarrow0,
\]
for \(i=1,2\), and
\begin{equation}
 I(M_1;B_2^n|M_2)_\Omega\longrightarrow0,
 \qquad
 I(M_2;B_1^n|M_1)_\Omega\longrightarrow0,
 \label{eq:strong-secrecy-definition}
\end{equation}
where $\Omega$ is the state from \eqref{eq:operational-state}.
\end{definition}

The \textit{confidential classical capacity region} is the closure of the set of all rate pairs achievable in the sense of the preceding definition.

The trace-distance and conditional strong-secrecy criteria give
the same capacity region. Indeed, trace secrecy implies vanishing
normalized mutual-information leakage, which suffices for the
converse in Theorem~\ref{reg-thm-exact-capacity}; its achievability
proof establishes conditional strong secrecy. Conversely,
conditional strong secrecy implies trace secrecy by Pinsker's
inequality.

Conditioning on $M_i$ in \eqref{eq:strong-secrecy-definition} ensures that the other message remains secret even if the receiver $i$ knows its own message perfectly.  Moreover, any fine index produced by that receiver's measurement is obtained from \(B_i^n\) by a qc channel. Then, the data processing inequality ensures that such measurement outcomes cannot reveal more information about the other message than \(B_i^n\) itself, and the other message remains secret even when the receiver $i$ has access to this index.

\subsection{Main result}
\label{sec:main-result}

The following theorem gives a cq realization of the classical confidential-message inner bound from \cite{LiuMaricSpasojevicYates2008}.  

Fix finite \(U,V_1,V_2\) and a distribution
\begin{equation}
 p(u,v_1,v_2,x)=p(u)p(v_1,v_2|u)p(x|u,v_1,v_2).
 \label{eq:factorization}
\end{equation}
Let
\begin{align}
 \omega^{UV_1V_2XB_1B_2}
 :=\sum_{u,v_1,v_2,x} p(u)p(v_1,v_2|u)p(x|u,v_1,v_2)
 \ketbra{u,v_1,v_2,x} \otimes\rho_x^{B_1B_2}.
 \label{eq:single-letter-state}
\end{align}
The auxiliaries \(V_1,V_2\)
describe the correlated codeword pair, while \(U\) specifies time sharing. 

\begin{theorem}[Strong-secrecy inner bound]
\label{thm:main}
For every state \(\omega\) in \eqref{eq:single-letter-state}, every
nonnegative pair satisfying
\begin{align}
 R_1&<I(V_1;B_1|U)_\omega-I(V_1;V_2|U)_\omega
       -I(V_1;B_2|V_2,U)_\omega,
 \label{eq:main-rate-one}\\
 R_2&<I(V_2;B_2|U)_\omega-I(V_1;V_2|U)_\omega
       -I(V_2;B_1|V_1,U)_\omega
 \label{eq:main-rate-two}
\end{align}
is achievable. If one rate is zero, the corresponding message set may be taken to be a singleton, and only the other inequality is required. The closure of the resulting union, over all finite auxiliaries and distributions \eqref{eq:factorization}, is an inner bound on the confidential classical capacity region. The same inner region is obtained by restricting the union to
\[
 |\mathcal U|\le 3,\qquad
 |\mathcal V_1|\le|\mathcal X|,\qquad
 |\mathcal V_2|\le|\mathcal X|.
\]
\end{theorem}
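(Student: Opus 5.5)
We prove the theorem first for a trivial time-sharing variable \(U\) (constant). The general case then follows by conditioning on a time-sharing sequence \(u^n\) of type close to \(p(u)\). Each codebook is generated conditionally i.i.d. given \(u^n\), and every mutual information below becomes its conditional version.

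\emph{Code construction.} For each message \(m_i\in[2^{nR_i}]\) we draw a sub-codebook \(\{v_i^n(m_i,k_i,\ell_i)\}\) of i.i.d.\ \(p(v_i)\) sequences. The index \(k_i\in[2^{nL_i}]\) is a correlation (binning) index and \(\ell_i\in[2^{nS_i}]\) is a randomization index. Given \((m_1,m_2)\), the encoder selects a fine pair \((k_1,\ell_1,k_2,\ell_2)\) with probability proportional to the likelihood ratio
\[
 \prod_t \frac{p(v_{1t},v_{2t})}{p(v_{1t})\,p(v_{2t})}
\]
evaluated at the selected codewords, normalized over all pairs. It then draws \(x^n\sim p(x^n|v_1^n,v_2^n)\).

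\emph{Reliability.} We use the change-of-distribution argument. Under the planted law, a uniform fine pair is drawn jointly from \(p(v_1,v_2)\) and the remaining codewords are i.i.d.\ from their marginals. The total variation distance between the actual random-code law and the planted law tends to zero as soon as
\[
 L_1+L_2+S_1+S_2 > I(V_1;V_2)
\]
(with corresponding bounds for each subset, which classical soft covering supplies). Under the planted law, receiver \(i\) uses a sequential or Hayashi–Nagaoka packing decoder for the full index \((m_i,k_i,\ell_i)\). This succeeds if
\[
 R_i+L_i+S_i < I(V_i;B_i).
\]

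\emph{Secrecy.} This is the main obstacle, because the selected \(\ell_2\) and \(k_2\) depend on the \(V_1\) codebook. We plan to prove a bipartite cq resolvability lemma. Fix \(m_2\) and consider the state of \(B_2^n\) averaged over \((k_1,\ell_1)\).

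The key device is an oriented posterior. We rewrite the selector law as follows: first draw \((k_2,\ell_2)\) uniformly, then draw \((k_1,\ell_1)\) by a one-sided likelihood encoder with weights \(p(v_1^n|v_2^n)/p(v_1^n)\). This rewriting is valid up to vanishing total variation once the normalizations concentrate, which holds under the same covering condition. Given \(v_2^n\), the effective code for \(B_2\) is then a one-sided likelihood encoder over \(2^{n(L_1+S_1)}\) candidate codewords. Applying standard cq soft covering, conditional on \(v_2^n\), to the channel \(v_1\mapsto \sum_x p(x|v_1,v_2)\rho^{B_2}_x\) makes the output independent of \(m_1\) provided
\[
 L_1+S_1 > I(V_1;V_2)+I(V_1;B_2|V_2),
\]
and the error decays exponentially. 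The symmetric orientation handles \(M_2\). Exponential decay of the trace distance, combined with the Alicki–Fannes continuity bound, converts the \(n\)-factor into vanishing \(I(M_1;B_2^n|M_2)\). Expurgating to a single good codebook uses Markov's inequality on the four expectations.

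\emph{Rate region and cardinality.} Fourier–Motzkin elimination of \(L_i,S_i\) yields \eqref{eq:main-rate-one}–\eqref{eq:main-rate-two}. If one rate is zero, we drop that message and its secrecy constraint.

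For the cardinality bounds we use the support lemma. For \(U\), we preserve \(p(x)\) (or, alternatively, the two rate expressions plus one normalization), yielding \(|\mathcal U|\le 3\) via Carathéodory applied to the two rate functionals. For \(V_i\), we use concavity of entropy to replace \(p(x|v_1,v_2)\) by a decomposition in which each \(V_i\) has support at most \(|\mathcal X|\) without decreasing either bound. This last step should be checked carefully.
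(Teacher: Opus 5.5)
Your architecture matches the paper's: likelihood selection from independently generated books, a planted law for fine-index packing, an oriented posterior that makes the $V_2$-index uniform, and soft covering plus a continuity bound for secrecy. Letting the selector act on the whole pair $(k_i,\ell_i)$ is harmless, since only $L_i+S_i$ ever enters; the paper instead draws $s_i$ uniformly and likelihood-selects only $k_i$. Handling an inactive user by discarding its decoder and its secrecy constraint also works.

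\textbf{Gap in the secrecy step.} The gap lies in the step the paper treats as its main technical contribution. Standard cq soft covering ``conditional on $v_2^n$'' does not apply. Given $v_2^n$, the $V_1$ codewords are drawn from $p_1^{\otimes n}$, not from $p_{V_1|V_2}^{\otimes n}(\cdot|v_2^n)$, and they are chosen with posterior weights rather than uniformly. Your threshold $L_1+S_1>I(V_1;V_2)+I(V_1;B_2|V_2)$ is the right heuristic, but it needs an argument.

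The paper compares two experiments:
(a) draw $V_2^n\sim p_2^{\otimes n}$, then pick the $V_1$ codeword from the posterior;
(b) pick the $V_1$ codeword uniformly, then draw $V_2^n\sim p_{V_2|V_1}^{\otimes n}$ from it.
By Bayes' rule the two kernels agree given $V_2^n$. Hence the distance between (a) and (b) equals $\tv{p_2^{\otimes n}}{K^{-1}\sum_i p_{V_2|V_1}^{\otimes n}(\cdot|V_1^n(i))}$, taken over the $K$ codewords of the bin. This is exponentially small once $L_1+S_1>I(V_1;V_2)$.

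In (b), the $V_2^nB_2^n$ state is an ordinary soft-covering code for the augmented channel $v_1\mapsto\sum_{v_2}p(v_2|v_1)\ketbra{v_2}\otimes\rho^{B_2}_{v_1,v_2}$, which requires $L_1+S_1>I(V_1;V_2B_2)$. Block diagonality in $V_2^n$ then gives the bound only on average over $v_2^n\sim p_2^{\otimes n}$, not for each fixed $v_2^n$. That suffices because each $V_2$ codeword is independent of the $V_1$ book. A bespoke likelihood-weighted second-moment bound could replace this argument, but it is not the standard lemma.

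Your ``rewriting'' error must also be exponentially small, not merely vanishing, so that the continuity bound can absorb $\log M_1=O(n)$. It is bounded by an average of column-degree deviations whose expectation is the same $V_1\to V_2$ covering distance. It therefore needs $L_1+S_1>I(V_1;V_2)$, which your secrecy threshold implies, rather than the sum-rate condition used for reliability.

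\textbf{Gap in the cardinality step.} The reduction of $V_i$ is missing its key idea.
\begin{itemize}
\item For each $u$, hold $q_k(v_2,x)=p(v_2,x|V_1=k)$ fixed and vary $t=p_{V_1}$ over the polytope $\{t\ge0:\sum_kt_kq_k(x)=p_X(x)\}$. Its extreme points have at most $|\mathcal X|$ positive entries.
\item On this polytope, $H(B_1)$ and $H(B_2)$ are constant. Both $I(V_1;B_1)-I(V_1;V_2B_2)$ and $I(V_2;B_2)-I(V_2;V_1B_1)$ equal a constant, plus a linear function of $t$, minus $H\bigl(\sum_kt_k\sigma_k^{V_2B_2}\bigr)$, where $\sigma_k^{V_2B_2}=\sum_{v_2,x}q_k(v_2,x)\ketbra{v_2}\otimes\rho^{B_2}_x$. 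So both are convex in $t$.
\item Decompose $t$ into extreme points and record the label in a refinement of $U$. This cannot decrease either $U$-averaged bound. Individual components may become negative, which is harmless because only the averages enter the theorem.
\item Repeat with the receivers exchanged, then relabel symbols within each component using the $u$-dependent prefix.
\end{itemize}
Only after these steps apply Carath\'eodory to the per-component pairs of rate functionals in $\mathbb R^2$ to obtain $|\mathcal U|\le3$. The $V_i$ reduction refines $U$, so it must come first. Preserving $p_X$ in the $U$ step is unnecessary and would only yield $|\mathcal U|\le|\mathcal X|+1$.
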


The stochastic prefix may be absorbed into the pair channel by defining
\begin{equation}
 \rho_{u,v_1,v_2}^{B_1B_2}
 :=\sum_xp(x|u,v_1,v_2)\,\rho_x^{B_1B_2}.
 \label{eq:prefixed-pair-channel}
\end{equation}
All subsequent state calculations are performed for this induced channel.
The choice \(p(x|v_1,v_2)\) is included as a special case, while any explicit
\(u\)-dependence can be absorbed into the auxiliaries \((U,V_i)\).

For simplicity of notation, we will subsequently write \(p_i=p_{V_i}\).

\begin{remark}[Classical-output specialization]
If the states \(\rho_x^{B_1B_2}\) commute in fixed product bases and are
identified with a classical broadcast law \(p(y_1,y_2|x)\),
\cref{thm:main} has the single-letter form of the confidential-message
broadcast inner bound in Theorem~4 of
\cite{LiuMaricSpasojevicYates2008}.  Here, we show an inner bound of this form for
quantum outputs with conditional strong secrecy.
\end{remark}

\section{Proof of the main result}
\label{sec:achievability}

We first prove every coding lemma for constant \(U\). The same likelihood-selected code
is used in the reliability and secrecy arguments.  We then choose the
auxiliary rates and a codebook satisfying both requirements in Section~\ref{sec:rate-selection}, restore
the time-sharing variable in Section~\ref{sec:time-sharing}, and establish the auxiliary-alphabet bounds in Section~\ref{sec:cardinality}.

Thus, fix a joint law
\(p_{V_1V_2}\), denote \(p_i=p_{V_i}\), and write
\begin{equation}
 \omega^{V_1V_2B_1B_2}
 =\sum_{v_1,v_2}p(v_1,v_2)\ketbra{v_1,v_2}
 \otimes\rho_{v_1,v_2}^{B_1B_2}.
 \label{eq:no-u-state}
\end{equation}
For channel copies, we use the memoryless extension
\begin{equation}
 \rho_{v_1^n,v_2^n}^{B_1^nB_2^n}
 :=\bigotimes_{t=1}^n\rho_{v_{1,t},v_{2,t}}^{B_{1,t}B_{2,t}},
 \label{eq:pair-channel-extension}
\end{equation}
and a superscript \(B_j^n\) denotes its corresponding marginal.  Whenever a
conditional distribution such as \(p(v_{\bar\jmath}|v_j)\) is evaluated at a
zero-marginal letter, choose it arbitrarily.  Such values never affect an
average under the stated joint law.
All mutual informations in \cref{sec:code,sec:reliability,sec:resolvability}
refer to this state.

\subsection{Likelihood-selected double-binned code}
\label{sec:code}

Besides the nonnegative message rates \(R_i\), introduce nonnegative
secrecy-randomization rates \(R_i'\) and correlation-index rates
\(\widetilde R_i\).  Let
\[
 M_i=\lfloor2^{nR_i}\rfloor,\qquad
 J_i=\lceil2^{nR_i'}\rceil,\qquad
 K_i=\lceil2^{n\widetilde R_i}\rceil.
\]
The floors and ceilings have no asymptotic rate effect.
Independently generate the marginal codebooks
\begin{equation}
 V_i^n(m_i,s_i,k_i)\overset{\mathrm{ind}}{\sim} p_i^{\otimes n},\qquad
 (m_i,s_i,k_i)\in[M_i]\times[J_i]\times[K_i],
 \label{eq:codebook-generation}
\end{equation}
for \(i=1,2\).  The realized pair of books is denoted by \(\cC\). We will refer to \((m_i,s_i,k_i)\) as each receiver's \emph{fine index}.

Define the likelihood ratio
\begin{equation}
 L_n(v_1^n,v_2^n)
 :=\frac{p_{V_1V_2}^{\otimes n}(v_1^n,v_2^n)}
 {p_1^{\otimes n}(v_1^n)p_2^{\otimes n}(v_2^n)},
 \label{eq:likelihood-ratio}
\end{equation}
with value zero off the support of the denominator, and the corresponding information-density
\begin{equation}
 \imath_n(v_1^n;v_2^n):=\log L_n(v_1^n,v_2^n),
 \qquad \log0:=-\infty.
 \label{eq:information-density}
\end{equation}
For a message pair \((m_1,m_2)\), the encoder first chooses \(S_1,S_2\)
independently and uniformly.  Conditioned on these four indices, it selects
\((K_1^\star,K_2^\star)\) according to
\begin{align}
 Q_\cC(k_1,k_2|m_1,m_2,s_1,s_2)
 :=\frac{L_n(V_1^n(m_1,s_1,k_1),V_2^n(m_2,s_2,k_2))}
 {\displaystyle\sum_{a=1}^{K_1}\sum_{b=1}^{K_2}
 L_n(V_1^n(m_1,s_1,a),V_2^n(m_2,s_2,b))}.
 \label{eq:global-selector}
\end{align}
If the denominator vanishes, \(Q_\cC\) is defined to be uniform on
\([K_1]\times[K_2]\) (\emph{universal fallback rule}).

\begin{figure}[t]
\centering
\begin{tikzpicture}[
  font=\small,
  book/.style={
    draw=blue!55!black,
    fill=blue!4,
    rounded corners=2pt,
    line width=.55pt,
    align=center,
    inner sep=5pt,
    text width=5.05cm
  },
  selector/.style={
    draw=teal!60!black,
    fill=teal!4,
    rounded corners=2pt,
    line width=.6pt,
    align=center,
    inner sep=6pt,
    text width=11.3cm
  },
  flow/.style={
    -{Latex[length=2.2mm,width=1.5mm]},
    line width=.6pt,
    draw=black!65
  },
  every node/.style={text=black}
]

\node[book] (book1) {First marginal codebook\\[-1mm]
  $A_a:=V_1^n(m_1,s_1,a),\quad a\in[K_1],$\\
  $A_a\overset{\mathrm{ind}}{\sim}p_1^{\otimes n}$};

\node[book,right=8mm of book1] (book2) {Second marginal codebook\\[-1mm]
  $B_b:=V_2^n(m_2,s_2,b),\quad b\in[K_2],$\\
  $B_b\overset{\mathrm{ind}}{\sim}p_2^{\otimes n}$};

\node[
  font=\footnotesize,
  below=5mm of $(book1.south)!0.5!(book2.south)$
] (weight-title)
  {$L_{ab}:=L_n(A_a,B_b)\geq0$ for every index pair $(a,b)$};

\matrix (weights) [
  matrix of math nodes,
  below=1.5mm of weight-title,
  nodes={
    draw=blue!22,
    fill=blue!2,
    minimum width=2.05cm,
    minimum height=7.2mm,
    inner sep=1.5pt,
    anchor=center
  },
  row 1/.style={nodes={fill=black!5,font=\footnotesize}},
  column 1/.style={nodes={fill=black!5,font=\footnotesize}},
  column sep=-\pgflinewidth,
  row sep=-\pgflinewidth
] {
        & b=1         & b=2         & \cdots & b=K_2         \\
 a=1   & L_{1,1}     & L_{1,2}     & \cdots & L_{1,K_2}     \\
 a=2   & L_{2,1}     & L_{2,2}     & \cdots & L_{2,K_2}     \\
 \vdots& \vdots      & \vdots      & \ddots & \vdots        \\
 a=K_1 & L_{K_1,1}   & L_{K_1,2}   & \cdots & L_{K_1,K_2}   \\
};

\draw[flow]
  (book1.south) -- ++(0,-3mm) -| (weights.north west);
\draw[flow]
  (book2.south) -- ++(0,-3mm) -| (weights.north east);

\node[selector,below=7mm of weights] (selector) {$\displaystyle
  Q_{\cC}(a,b\mid m_1,m_2,s_1,s_2)
  =\frac{L_{ab}}
  {\sum_{c=1}^{K_1}\sum_{d=1}^{K_2}L_{cd}},
  \qquad
  (K_1^\star,K_2^\star)\sim Q_{\cC}.$};

\draw[flow]
  (weights.south)
  -- node[right,font=\footnotesize] {normalize all cells}
  (selector.north);

\end{tikzpicture}
\caption{Likelihood selection for fixed
$(m_1,m_2,s_1,s_2)$.  The two independently generated marginal
codebooks form a table of weights $L_{ab}$, and the encoder samples
$(K_1^\star,K_2^\star)$ proportionally to these weights.}
\label{fig:global-likelihood-selector}
\end{figure}

The cloud-mixing argument below compares this selected distribution with a planted distribution that is constructed to have the desired joint law \(p_{V_1V_2}^{\otimes n}\).  The encoder then draws each \(X_t\) from \(p(\cdot|V_{1,t},V_{2,t})\) (or from \(p(\cdot|u,V_{1,t},V_{2,t})\) in the time-shared construction), and sends \(X^n\).

For a fixed message pair, the average state at receiver \(j\) is
\begin{align}
 \rho_{m_1,m_2,\cC}^{B_j^n}
 =\frac1{J_1J_2}\sum_{s_1,s_2}\sum_{k_1,k_2}
 Q_\cC(k_1,k_2|m_1,m_2,s_1,s_2) \, \rho_{V_1^n(m_1,s_1,k_1),V_2^n(m_2,s_2,k_2)}^{B_j^n}.
 \label{eq:exact-induced-state}
\end{align}
The factor \(\frac1{J_1J_2}\) averages over the secrecy-randomizing indices \((s_1,s_2)\), while the correlation indices $(k_1,k_2)$ are averaged according to \(Q_\cC\).

\subsection{Reliability}
\label{sec:reliability}
Using the encoder obtained from the actual selector in \eqref{eq:global-selector}, we will prove reliability for the decoder in the following way: First, we will compare the selected distribution with a planted distribution via a cloud-mixing estimate in Lemma~\ref{lem:cloud-mixing}. Under the latter distribution, the selected pair has the joint law \(p_{V_1V_2}^{\otimes n}\). Then, each receiver can therefore use an ordinary square-root measurement for its complete fine index $ (m_i,s_i,k_i)$ to decode in Theorem~\ref{thm:reliability}.

\subsubsection{Cloud mixing}

The following lemma is a bipartite, likelihood-weighted analogue of the cloud-mixing argument from \cite{GouiaaPadakandla2026,cuff2009communication} in order to relate the law induced by likelihood-selected encoder with a planted law\footnote{The cloud mixing lemma also appears under the name ``soft-covering lemma'', see also \cite{SongCuffPoor2016} and \cite{Cuff2013}.}.

\begin{lemma}[Bipartite cloud mixing]
\label[lemma]{lem:cloud-mixing}
Independently generate two collections,
\[
 A_1^n,\ldots,A_{K_1}^n\sim p_1^{\otimes n},\qquad
 B_1^n,\ldots,B_{K_2}^n\sim p_2^{\otimes n}.
\]
Denote
\[
 L_{ab}:=L_n(A_a^n,B_b^n),\qquad
 Z_\cC:=\frac1{K_1K_2}\sum_{a,b}L_{ab}.
\]
If
\begin{equation}
 \widetilde R_1>0,\qquad \widetilde R_2>0,\qquad
 \widetilde R_1+\widetilde R_2>I(V_1;V_2),
 \label{eq:cloud-mixing-rates}
\end{equation}
then \(\E|Z_\cC-1|=o(1)\).  Consequently, for the normalized selector
\(Q_\cC(a,b)=L_{ab}/\sum_{c,d}L_{cd}\), with the stated uniform fallback,
\begin{equation}
 \E\sum_{a,b}\left|Q_\cC(a,b)-\frac{L_{ab}}{K_1K_2}\right|=o(1).
 \label{eq:selector-change-law}
\end{equation}
\end{lemma}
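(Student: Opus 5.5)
The plan has two parts. First, show that $Z_\cC$ concentrates at its mean $\E Z_\cC=1$. Then deduce \eqref{eq:selector-change-law} by a normalization identity.

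For the consequence, suppose $Z_\cC>0$. Then $Q_\cC(a,b)=L_{ab}/(K_1K_2Z_\cC)$, so
\[
 \sum_{a,b}\left|Q_\cC(a,b)-\frac{L_{ab}}{K_1K_2}\right|
 =\left|\frac1{Z_\cC}-1\right|Z_\cC=|1-Z_\cC|.
\]
When $Z_\cC=0$, every $L_{ab}=0$ and the fallback is uniform, so the sum equals $1=|1-Z_\cC|$. Taking expectations turns \eqref{eq:selector-change-law} into $\E|Z_\cC-1|=o(1)$. Since $\E L_{ab}=\sum p_1p_2L_n=1$ for every pair, we have $\E Z_\cC=1$.

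To control $\E|Z_\cC-1|$ I would use a truncation (typicality) argument rather than a raw second moment. The raw variance involves $\E L^2=2^{nD_2}$, which can be far larger than $2^{nI}$. Fix $\delta>0$ and split $L_{ab}=L_{ab}\one\{\imath_n\le n(I+\delta)\}+L_{ab}\one\{\imath_n>n(I+\delta)\}=:L'_{ab}+L''_{ab}$, and write $Z_\cC=Z'+Z''$ accordingly. For the atypical part, $\E Z''=\Prb_{p_{V_1V_2}^{\otimes n}}[\imath_n>n(I+\delta)]\to0$ by the law of large numbers, since under the joint law $\imath_n$ is an i.i.d.\ sum with mean $nI(V_1;V_2)$. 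It then remains to show $\E|Z'-\E Z'|\to0$, because $\E Z'\to1$.

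For this I bound $\Var(Z')=(K_1K_2)^{-2}\sum_{(a,b),(c,d)}\Cov(L'_{ab},L'_{cd})$ and sort the pairs by overlap, using Cauchy--Schwarz only at the end.
\begin{itemize}[leftmargin=*]
\item Disjoint pairs ($a\neq c$, $b\neq d$) are independent and contribute zero covariance.
\item For identical pairs, $\E L'^2\le 2^{n(I+\delta)}\E L'\le 2^{n(I+\delta)}$. This contributes at most $2^{n(I+\delta)}/(K_1K_2)$, which vanishes once $\widetilde R_1+\widetilde R_2>I+\delta$.
\item For pairs sharing $a$ with $b\ne d$, we get $\E[L'_{ab}L'_{ad}]\le\E_{A}\big[(\E_B L_n(A,B))^2\big]$. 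But $\E_B L_n(a,B)=\sum_b p_2(b)\,p(a,b)/(p_1(a)p_2(b))=1$ exactly. So this covariance is at most $1-(\E L')^2\le 2\E Z''\to0$, and these terms contribute $o(1)$ after the normalization by $(K_1K_2)^{-2}$ (there are $K_1K_2^2$ of them).
\item Pairs sharing $b$ are handled symmetrically, and again contribute $o(1)$.
\end{itemize}
Hence $\Var(Z')\to0$, and $\E|Z_\cC-1|\le\sqrt{\Var Z'}+|\E Z'-1|+\E Z''=o(1)$. Letting $\delta\downarrow0$ completes the argument.

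The main obstacle is the shared-index terms. The naive bound is $\E[L_{ab}L_{ad}]=\E_A[(\E_B L)^2]$, which equals $1$ only without truncation. With truncation one needs $\E L'_{ab}L'_{ad}-(\E L')^2$ to vanish, which the above gives via $\E L'L'\le\E LL=1$ and $\E L'\ge1-o(1)$. I would double-check this step, since this is where positivity of $\widetilde R_i$ might be needed in place of the exact cancellation. As written, the argument actually seems to require only the sum-rate condition plus $K_i\ge1$. The positivity assumptions in \eqref{eq:cloud-mixing-rates} may be there for the rest of the paper, so I would not rely on them here.
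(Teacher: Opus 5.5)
Your proof is correct and follows the same route as the paper. You truncate $L_{ab}$ at information density $n(I(V_1;V_2)+\delta)$, with one fixed small $\delta$, so no limit $\delta\downarrow0$ is needed. You bound the variance of the truncated average by sorting index pairs into identical, row-sharing, column-sharing and disjoint ones. You then conclude with $\E|Z_\cC-1|\le 2(1-\mu_n)+\sqrt{\Var(Z')}$, where $\mu_n=\E L'_{11}$, and pass to the selector via $\sum_{a,b}|Q_\cC(a,b)-L_{ab}/(K_1K_2)|=|1-Z_\cC|$.

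The only difference is the shared-index covariances. The paper bounds them merely by $1$ and therefore needs $1/K_1+1/K_2\to0$, which is exactly where $\widetilde R_1,\widetilde R_2>0$ is used. Your sharper bound $\Cov(L'_{ab},L'_{ad})\le 1-\mu_n^2$ makes those terms vanish for any $K_1,K_2\ge1$, so your remark that the sum-rate condition alone suffices is correct.
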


\begin{proof}
Let \(C=I(V_1;V_2)\), choose \(\delta>0\) such that
\(\widetilde R_1+\widetilde R_2>C+2\delta\), and truncate
\[
 \widetilde L_{ab}:=L_{ab}\one\{\log L_{ab}\le n(C+\delta)\}.
\]
Put \(\widetilde Z=(K_1K_2)^{-1}\sum_{a,b}\widetilde L_{ab}\) and
\(\mu_n=\E\widetilde L_{11}\). By definition of $L_n$, weighting the product distribution \(p_{V_1}^{\otimes n}\otimes p_{V_2}^{\otimes n}\) by $L_n$ yields \(p_{V_1V_2}^{\otimes n}\), and this change of measure gives
\[
 \mu_n=\Prb_{p_{V_1V_2}^{\otimes n}}
 \{\imath_n(V_1^n;V_2^n)\le n(C+\delta)\}\longrightarrow1.
\]
Under the target joint law \(p_{V_1V_2}^{\otimes n}\), the single-letter information density is finite
and bounded on its support. The $n-$letter information density is therefore a sum of bounded finite-alphabet random variables, and the convergence to one follows from the weak law of large numbers.

For this truncation, we have
\[
 \E \widetilde L_{11}^{2}\le 2^{n(C+\delta)}\mu_n.
\]

To prove concentration of $\widetilde Z$, note that the variables \(\widetilde L_{ab}\) form a \(K_1\times K_2\) array, where only the terms sharing a row or a column can be dependent.
For a common row, define
\(g(a^n)=\E_{B^n}\widetilde L(a^n,B^n)\).
For every \(a^n\in\supp p_1^{\otimes n}\),
\[
 0\le g(a^n)\le\E_{B^n}L(a^n,B^n)=1.
\]
Since \(A_1^n\) lies in this support almost surely,
\[
 \Cov(\widetilde L_{11},\widetilde L_{12})
 =\E g(A_1^n)^2-\mu_n^2\le1.
\]
The same bound holds for a common column.  Counting coincident index
patterns yields
\begin{equation}
 \Var(\widetilde Z)
 \le\frac{2^{n(C+\delta)}}{K_1K_2}+\frac1{K_1}+\frac1{K_2}=o(1).
 \label{eq:cloud-variance}
\end{equation}
Since \(Z_\cC\ge\widetilde Z\), \(\E Z_\cC=1\), and
\(\E\widetilde Z=\mu_n\),
\[
 \E|Z_\cC-1|
 \le 2(1-\mu_n)+\sqrt{\Var(\widetilde Z)}=o(1).
\]
When \(\sum_{a,b}L_{ab}>0\), nonnegativity gives
\[
 \sum_{a,b}\left|\frac{L_{ab}}{\sum_{c,d}L_{cd}}
 -\frac{L_{ab}}{K_1K_2}\right|=|1-Z_\cC|.
\]
If the denominator is zero, both sides equal one under the fallback rule. Taking expectations proves \eqref{eq:selector-change-law}.
\end{proof}

While the weights in \eqref{eq:selector-change-law} need not be normalized, we can define a normalized joint distribution by retaining the random-codebook ensemble. To see this,
let \(P_{\rm cb}(d\cC)\) be the product codebook-ensemble law, put
\(a_i=(m_i,s_i)\in[M_i]\times[J_i]\), and define the tilted measure
\begin{align}
 d\widetilde P(\cC,a_1,a_2,k_1,k_2)
 :=\frac{L_n(V_1^n(a_1,k_1),V_2^n(a_2,k_2))}
 {M_1J_1M_2J_2K_1K_2}\,P_{\rm cb}(d\cC).
 \label{eq:global-planted-measure}
\end{align}
For every fixed selected fine-index pair, we have that
\begin{equation}
 p_1^{\otimes n}(v_1^n)p_2^{\otimes n}(v_2^n)L_n(v_1^n,v_2^n)
 =p_{V_1V_2}^{\otimes n}(v_1^n,v_2^n).
 \label{eq:planted-identity}
\end{equation}
Using this identity, we prove that $\widetilde P$ is indeed normalized and has the desired joint distribution:
\begin{lemma}[Factorization under the planted law]
\label{lem:planted-factorization}
Let \(\mathsf L_i=[M_i]\times[J_i]\times[K_i]\) be the fine-index set and
write \(I^\star=(\ell_1^\star,\ell_2^\star)\) for the selected fine-index
pair in \eqref{eq:global-planted-measure}.  Under \(\widetilde P\),
\(I^\star\) is uniform on \(\mathsf L_1\times\mathsf L_2\).  For every fixed
\(I^\star=(\ell_1^\star,\ell_2^\star)\), write \(\boldsymbol v\) for the
codebook realization \(\{v_i^n(\ell_i):i=1,2,\ \ell_i\in\mathsf L_i\}\).
Its conditional probability mass is
\begin{align}
 &\widetilde P(\cC=\boldsymbol v
       \mid I^\star=(\ell_1^\star,\ell_2^\star))\nonumber\\
 &\quad=p_{V_1V_2}^{\otimes n}
   (v_1^n(\ell_1^\star),v_2^n(\ell_2^\star))
   \prod_{\ell_1\ne\ell_1^\star}p_1^{\otimes n}(v_1^n(\ell_1))
   \prod_{\ell_2\ne\ell_2^\star}p_2^{\otimes n}(v_2^n(\ell_2)).
 \label{eq:planted-factorization}
\end{align}
Thus the selected pair has the target joint distribution, while every
competing codeword is independent of that pair and of the other competitors,
with its original marginal distribution.
\end{lemma}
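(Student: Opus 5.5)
The plan is a direct computation with the tilted measure \eqref{eq:global-planted-measure}. We fix a selected fine-index pair \(I^\star=(\ell_1^\star,\ell_2^\star)\), with \(\ell_i^\star=(a_i,k_i)\) and \(a_i=(m_i,s_i)\), and a codebook realization \(\boldsymbol v\). Because all alphabets are finite, the codebook ensemble has a probability mass function
\[
 P_{\rm cb}(\boldsymbol v)=\prod_{\ell_1\in\mathsf L_1}p_1^{\otimes n}(v_1^n(\ell_1))\prod_{\ell_2\in\mathsf L_2}p_2^{\otimes n}(v_2^n(\ell_2)).
\]
The joint mass of \((\cC,I^\star)=(\boldsymbol v,(\ell_1^\star,\ell_2^\star))\) under \(\widetilde P\) is therefore
\[
 \frac{1}{|\mathsf L_1||\mathsf L_2|}\, L_n(v_1^n(\ell_1^\star),v_2^n(\ell_2^\star))\,P_{\rm cb}(\boldsymbol v).
\]
This uses \(|\mathsf L_i|=M_iJ_iK_i\).

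The first step is to split off the two factors of \(P_{\rm cb}(\boldsymbol v)\) indexed by the selected fine indices \(\ell_1^\star\) and \(\ell_2^\star\). Multiplying them by \(L_n\) and applying \eqref{eq:planted-identity} gives the factor \(p_{V_1V_2}^{\otimes n}(v_1^n(\ell_1^\star),v_2^n(\ell_2^\star))\). Off the support of \(p_1^{\otimes n}\otimes p_2^{\otimes n}\), both sides vanish: \(L_n\) is zero there by convention, and \(p_{V_1V_2}^{\otimes n}\) is zero there because its marginals are \(p_1,p_2\). So the identity holds pointwise. The joint mass then equals \((|\mathsf L_1||\mathsf L_2|)^{-1}\) times the right-hand side of \eqref{eq:planted-factorization}.

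Next I sum over all codebook realizations \(\boldsymbol v\). The right-hand side of \eqref{eq:planted-factorization} is a product of probability mass functions over disjoint sets of codewords: one joint pmf on the selected pair, and one marginal pmf on each competitor. It therefore sums to one. Hence \(\widetilde P(I^\star=(\ell_1^\star,\ell_2^\star))=(|\mathsf L_1||\mathsf L_2|)^{-1}\). This shows simultaneously that \(\widetilde P\) is normalized (sum also over the index pairs) and that \(I^\star\) is uniform. Dividing the joint mass by this marginal gives the conditional law \eqref{eq:planted-factorization}.

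The final claim, that the competitors are independent of the selected pair and of each other with their original marginals, is read off from the product form. There is no genuine obstacle here. The only point requiring care is the bookkeeping of the zero-support convention for \(L_n\), so that \eqref{eq:planted-identity} holds for every realization.
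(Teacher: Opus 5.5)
Your proposal is correct and follows essentially the same route as the paper: you write the \(\widetilde P\)-mass as \((|\mathsf L_1||\mathsf L_2|)^{-1}L_n P_{\rm cb}\), absorb the two selected factors of the product codebook law via \eqref{eq:planted-identity}, and read off the uniformity of \(I^\star\), the normalization, and the product form. The only cosmetic difference is that you obtain the uniform marginal by summing the resulting product of pmfs to one, whereas the paper gets it directly from \(\E_{P_{\rm cb}}L_n=1\) (the same fact), and your explicit check that \eqref{eq:planted-identity} holds off the support is a detail the paper leaves implicit.
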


\begin{proof}
For each fixed index pair, its probability under \(\widetilde P\) is
\(1/(|\mathsf L_1||\mathsf L_2|)\), since the expectation of its likelihood
ratio under \(P_{\rm cb}\) is one.  Conditioned on that index pair, the
Radon--Nikodym derivative of \(\widetilde{P}\) with respect to \(P_{\rm cb}\) is the likelihood
ratio of the two selected codewords.  Substituting the product codebook law
and using \eqref{eq:planted-identity} gives
\eqref{eq:planted-factorization}.  This also proves that \(\widetilde P\) is
normalized.
\end{proof}

\subsubsection{A cq packing lemma}

We use the following Holevo-Schumacher-Westmoreland (HSW) packing lemma
\cite{Holevo1998,SchumacherWestmoreland1997}, stated with the decoder
normalization used below.

\begin{lemma}[Fine-index cq packing]
\label{lem:cq-packing}
Let \(v\mapsto\sigma_v^B\) be a finite-dimensional cq channel and let
\(L_n\) be a sequence of positive integers.  Generate independent
\(V^n(\ell)\sim p_V^{\otimes n}\), \(\ell\in[L_n]\).  If
\[
 \limsup_{n\to\infty}\frac1n\log L_n<I(V;B)_\sigma,
\]
there are codebook-dependent POVM elements
\(\{\Lambda_\ell\}_{\ell=1}^{L_n}\) whose ensemble-average decoding error
for a uniformly chosen \(\ell\) tends to zero.
\end{lemma}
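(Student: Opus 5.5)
This is the standard HSW packing argument; I would prove it with a Hayashi–Nagaoka square-root measurement built from a conditionally typical (or hypothesis-testing) projector. Write \(\sigma^{B}=\sum_v p_V(v)\sigma_v\), and let \(\sigma_{v^n}=\bigotimes_t\sigma_{v_t}\) and \(\sigma^{\otimes n}\) be the n-fold product state. For a fixed \(\delta>0\) with \(\limsup\frac1n\log L_n<I(V;B)_\sigma-2\delta\), define the information-spectrum projectors
\[
 \Pi_{v^n}:=\bigl\{\sigma_{v^n}-2^{n(I(V;B)_\sigma-\delta)}\sigma^{\otimes n}\ge0\bigr\}.
\]
The decoder is the square-root measurement
\[
 \Lambda_\ell:=\Bigl(\sum_{\ell'}\Pi_{V^n(\ell')}\Bigr)^{-1/2}\Pi_{V^n(\ell)}\Bigl(\sum_{\ell'}\Pi_{V^n(\ell')}\Bigr)^{-1/2},
\]
completed by the residual element \(\one-\sum_\ell\Lambda_\ell\), which counts as an error. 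By the Hayashi--Nagaoka operator inequality \(\one-(S+T)^{-1/2}S(S+T)^{-1/2}\le2(\one-S)+4T\), the error for message \(\ell\) is at most
\[
 2\Tr[(\one-\Pi_{V^n(\ell)})\sigma_{V^n(\ell)}]+4\sum_{\ell'\ne\ell}\Tr[\Pi_{V^n(\ell')}\sigma_{V^n(\ell)}].
\]

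I then take the expectation over the i.i.d. codebook. In the cross terms, \(V^n(\ell')\) is independent of \(V^n(\ell)\), so \(\E\,\sigma_{V^n(\ell)}=\sigma^{\otimes n}\). The defining operator inequality of \(\Pi_{v^n}\) gives \(\Tr[\Pi_{v^n}\sigma^{\otimes n}]\le2^{-n(I-\delta)}\Tr[\Pi_{v^n}\sigma_{v^n}]\le2^{-n(I-\delta)}\). The total cross contribution is therefore at most \(4L_n2^{-n(I-\delta)}\), which tends to zero by the rate hypothesis. The choice of normalization (uniform \(\ell\)) is immaterial here, since the bound holds for each \(\ell\) separately after averaging over the codebook.

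The main obstacle is the first term, \(\E\Tr[(\one-\Pi_{V^n})\sigma_{V^n}]\to0\). This is a quantum Stein-type or information-spectrum statement. I would reduce it to the classical-quantum state \(\omega^{VB}=\sum_v p(v)\ketbra v\otimes\sigma_v\) and the product state \(\omega^V\otimes\sigma^B\). The quantity is exactly the type-I error of the Neyman--Pearson test for \(\omega^{\otimes n}\) versus \((\omega^V\otimes\sigma^B)^{\otimes n}\) at threshold \(n(D-\delta)\), where \(D=D(\omega^{VB}\|\omega^V\otimes\sigma^B)=I(V;B)\). By the quantum information-spectrum result (Hiai--Petz / Ogawa--Nagaoka; for example via the Rényi bound \(\Tr[\rho\{\rho\le e^{a}\tau\}]\le e^{sa}\Tr\rho^{1-s}\tau^{s}\) with small \(s>0\) and continuity of Rényi divergence at \(s=0\)), this error vanishes. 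Should the operator-order technicalities be delicate, an alternative is conditional and unconditional typical projectors with the standard gentle-operator argument. Either route gives a vanishing ensemble-average error.
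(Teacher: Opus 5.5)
Your proof is correct, but it uses different detection operators from the paper. The paper sandwiches a conditionally typical projector of $\sigma_{v^n}$ between typical projectors of $\bar\sigma^{\otimes n}$, setting $\Gamma_{v^n}=\Pi\Pi_{v^n}\Pi$. It controls the miss term by typicality and the gentle measurement lemma. It controls the cross term through $\rank\Pi_{v^n}\le2^{n(H(B|V)+o(1))}$ and $\Pi\bar\sigma^{\otimes n}\Pi\le2^{-n(H(B)-o(1))}\Pi$, with a slowly vanishing tolerance $\delta_n$.

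You use the information-spectrum projectors $\{\sigma_{v^n}\ge2^{n(I-\delta)}\bar\sigma^{\otimes n}\}$ instead, where your $\sigma^B$ is the paper's $\bar\sigma$. The cross term then follows in one line from the operator inequality that defines these projectors. Your block-diagonal identification of the averaged miss term with the type-I error of the test between $(\omega^{VB})^{\otimes n}$ and $(\omega^V\otimes\bar\sigma)^{\otimes n}$ is exact. The bound $\Tr[\rho\{\rho\le c\tau\}]\le c^s\Tr\rho^{1-s}\tau^s$, together with $D_{1-s}(\omega^{VB}\Vert\omega^V\otimes\bar\sigma)\to I(V;B)$ as $s\downarrow0$, therefore gives exponential decay for small enough $s>0$.

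Your route avoids the typical-subspace bookkeeping, yields an explicit error exponent, and is essentially one-shot, which would suit the finite-block extension raised in the paper's discussion. The paper's route needs only elementary typicality estimates and matches the machinery of its soft-covering appendix. Your projectors also have the two properties that Theorem~\ref{thm:reliability} later uses: vanishing average miss against $\sigma_{V^n}$, and cross term at most $2^{-n(I-\delta)}$ against $\bar\sigma^{\otimes n}$. They would therefore slot into the rest of the argument unchanged.

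One small point: the statement asks for an $L_n$-outcome POVM, and the coarse-graining \eqref{eq:message-coarse-graining} needs the fine-index elements to sum to the identity. Instead of a separate error outcome, add $\one-\sum_\ell\Lambda_\ell$ to one fixed index $\ell_0$, as the paper does; this can only lower the error.
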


\begin{proof}
Write \(L=L_n\) at blocklength \(n\).
Choose a sequence of typicality tolerances $\delta_n$ such that \(\delta_n\downarrow0\) slowly
enough that \(n\delta_n^2\to\infty\), and let \(\Pi\) be the corresponding
typical projector of
\(\bar\sigma^{\otimes n}\), where \(\bar\sigma=\sum_vp(v)\sigma_v\).
Let \(\Pi_{v^n}\) be the conditional typical projector of
\(\sigma_{v^n}=\bigotimes_t\sigma_{v_t}\), and denote
\[
 \Gamma_{v^n}=\Pi\Pi_{v^n}\Pi
\]
for typical \(v^n\), with \(\Gamma_{v^n}=0\) otherwise.  Standard typical
subspace estimates~\cite{Wilde2017} give a sequence \(\eta_n\to0\) such
that
\begin{align}
& 0 \le\Gamma_{v^n}\le I,\label{eq:gamma-effect}\\
& \E_{V^n}\Tr[(I-\Gamma_{V^n})\sigma_{V^n}]=o(1),
 \label{eq:packing-miss}\\
& \E_{\widetilde V^n}\Tr[\Gamma_{\widetilde V^n}
                  \bar\sigma^{\otimes n}]
 \le2^{-n(I(V;B)-\eta_n)}.
 \label{eq:packing-cross}
\end{align}
Equation~\eqref{eq:packing-cross} follows from
\(\rank\Pi_{v^n}\le2^{n(H(B|V)+o(1))}\) and
\(\Pi\bar\sigma^{\otimes n}\Pi
\le2^{-n(H(B)-o(1))}\Pi\); while \eqref{eq:packing-miss} follows from typicality and
the gentle measurement lemma.

For each codeword, set \(\Gamma_\ell=\Gamma_{V^n(\ell)}\),
\(G=\sum_{r=1}^L\Gamma_r\), and first define the sub-POVM
\[
 \widetilde\Lambda_\ell=G^{-1/2}\Gamma_\ell G^{-1/2},
\]
where the inverse is taken on \(\supp G\).  Choose an arbitrary
\(\ell_0\in[L]\) and complete it to an \(L\)-outcome POVM by setting
\begin{equation}
 \Lambda_{\ell_0}=\widetilde\Lambda_{\ell_0}
   +I-\sum_{r=1}^L\widetilde\Lambda_r,
 \qquad
 \Lambda_\ell=\widetilde\Lambda_\ell\quad(\ell\ne\ell_0).
 \label{eq:packing-povm-completion}
\end{equation}
Indeed,
\(\sum_{r=1}^L\widetilde\Lambda_r=\Pi_{\supp G}\le I\), so the
added complement is positive and the completed operators sum to the
identity.
The Hayashi--Nagaoka inequality \cite{HayashiNagaoka2003} gives, for every
realization of the codebook and every transmitted index \(\ell\),
\[
 I-\widetilde\Lambda_\ell
 \le2(I-\Gamma_\ell)+4\sum_{r\ne\ell}\Gamma_r.
\]
For \(\ell=\ell_0\), the added positive operator can only decrease the
error; for every other index the final and raw elements coincide.  Thus the
same upper bound controls the completed POVM.
Average the first term using \eqref{eq:packing-miss}.  Every competitor is
independent of the transmitted state, whose ensemble average is
\(\bar\sigma^{\otimes n}\), so \eqref{eq:packing-cross} bounds each cross
term.  Hence
\[
 \E P_e\le o(1)+4L\,2^{-n(I(V;B)-\eta_n)}\longrightarrow0
\]
under the stated strict rate inequality.
\end{proof}

\subsubsection{Reliability bound}

Define the effective channel seen by receiver \(j\) as
\begin{equation}
 \sigma_{v_j}^{B_j}
 :=\sum_{v_{\bar\jmath}}p(v_{\bar\jmath}|v_j)
 \rho_{v_1,v_2}^{B_j},\qquad \bar\jmath=3-j.
 \label{eq:effective-channel}
\end{equation}
The corresponding Holevo information is \(I(V_j;B_j)_\sigma\).

For every fine index \(\ell_j=(m_j,s_j,k_j)\), build the detection operator
of \cref{lem:cq-packing} for the codeword \(V_j^n(\ell_j)\), and normalize
the square-root measurement over all \(M_jJ_jK_j\) fine indices.  Receiver
\(j\) records the fine index and discards \((s_j,k_j)\).  Equivalently, its
message POVM is the valid coarse-graining
\begin{equation}
 \Lambda_{m_j}^{(j)}=\sum_{s_j,k_j}\Lambda_{m_j,s_j,k_j}^{(j)}.
 \label{eq:message-coarse-graining}
\end{equation}
The sum occurs after the fine-index square-root measurement has been
normalized.

\begin{theorem}[Reliability of the normalized likelihood encoder]
\label{thm:reliability}
For the code and encoder of \cref{sec:code}, the ensemble-average decoding
errors at both receivers tend to zero if
\begin{align}
 &\widetilde R_1>0,\qquad \widetilde R_2>0,
 \label{eq:positive-correlation-rates}\\
 &\widetilde R_1+\widetilde R_2>I(V_1;V_2),
 \label{eq:reliability-covering}\\
& R_j+R_j'+\widetilde R_j<I(V_j;B_j),\qquad j=1,2.
 \label{eq:reliability-packing}
\end{align}
\end{theorem}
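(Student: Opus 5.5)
The plan is a change of distribution: bound the ensemble-average error under the actual selector by the error under the planted law $\widetilde P$ of \eqref{eq:global-planted-measure}, plus a total-variation correction. Under $\widetilde P$, decoding reduces to the ordinary packing problem of \cref{lem:cq-packing}.

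Fix receiver $j$. For a realized codebook $\cC$ and indices $(a_1,a_2)$ with $a_i=(m_i,s_i)$, let $e_j(\cC,a_1,a_2,k_1,k_2)\in[0,1]$ be the probability that the fine-index POVM of receiver $j$ fails to output $(a_j,k_j)$ when the state is $\rho^{B_j^n}_{V_1^n(a_1,k_1),V_2^n(a_2,k_2)}$. The message error is at most the fine-index error, because \eqref{eq:message-coarse-graining} is a coarse-graining. The ensemble-average error under the actual encoder is therefore at most
\[
\E_{\cC}\frac{1}{M_1J_1M_2J_2}\sum_{a_1,a_2}\sum_{k_1,k_2}Q_\cC(k_1,k_2|a_1,a_2)\,e_j(\cC,a_1,a_2,k_1,k_2).
\]
By \eqref{eq:global-planted-measure}, the quantity $\widetilde{\E}\,e_j$ has the same form with $Q_\cC$ replaced by $L_{k_1k_2}/(K_1K_2)$. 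Since $0\le e_j\le1$, the two expressions differ by at most the average over $(a_1,a_2)$ of $\E\sum_{k_1,k_2}|Q_\cC-L_{k_1k_2}/(K_1K_2)|$. For each fixed $(a_1,a_2)$, the sub-books $\{V_1^n(a_1,\cdot)\}$ and $\{V_2^n(a_2,\cdot)\}$ satisfy the hypotheses of \cref{lem:cloud-mixing}. Under \eqref{eq:positive-correlation-rates}--\eqref{eq:reliability-covering}, this term is $o(1)$ uniformly in $(a_1,a_2)$. The fallback rule is already covered by that lemma.

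It remains to show $\widetilde{\E}\,e_j\to0$. By \cref{lem:planted-factorization}, under $\widetilde P$ the selected pair $(\ell_1^\star,\ell_2^\star)$ is uniform. Conditioned on it, $(V_1^n(\ell_1^\star),V_2^n(\ell_2^\star))\sim p_{V_1V_2}^{\otimes n}$, and all other codewords are independent with their marginal laws. Receiver $j$'s decoder depends only on its own book $\{V_j^n(\ell_j)\}$. Averaging over the partner codeword $V_{\bar\jmath}^n(\ell_{\bar\jmath}^\star)$, which is independent of the competitors in book $j$, turns the received state into $\sigma_{V_j^n(\ell_j^\star)}^{\otimes}$, where the effective channel \eqref{eq:effective-channel} is applied letter by letter because $p_{V_1V_2}^{\otimes n}$ is i.i.d. The remaining book-$\bar\jmath$ codewords do not affect $e_j$.

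Thus receiver $j$ sees exactly the setting of \cref{lem:cq-packing}: $M_jJ_jK_j$ independent codewords drawn from $p_j^{\otimes n}$, a uniform transmitted index, and the cq channel $\sigma^{B_j}$. The one subtle point is that the packing proof must average over the codebook with the transmitted codeword fixed. This works here because the Hayashi--Nagaoka bound is applied pointwise in the codebook and is linear in the state. Rate condition \eqref{eq:reliability-packing} then gives $\widetilde{\E}\,e_j=o(1)$.

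I expect the main obstacle to be the bookkeeping in this last step. One must check that the planted conditional law produces precisely the i.i.d. packing ensemble for book $j$, with the transmitted state being the $\sigma$-channel output and competitors independent of it, and that this holds jointly with the selected index. The cloud-mixing comparison must also be applied with $e_j$ allowed to depend on the entire codebook, including the other receiver's book. This is legitimate because the bound in \eqref{eq:selector-change-law} is uniform over bounded functions.
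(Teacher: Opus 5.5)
Your proposal is correct and follows the paper's proof. It uses the same pointwise change of law from $Q_\cC$ to the planted weights $L_n/(K_1K_2)$, controlled by \cref{lem:cloud-mixing}, and then applies \cref{lem:planted-factorization} and averages out the partner codeword to get the effective channel $\sigma^{B_j}$ with competitors independent of the transmitted state; the only cosmetic difference is that you invoke \cref{lem:cq-packing} as a black box (valid, since conditioned on $I^\star$ book $j$ is i.i.d.\ $p_j^{\otimes n}$ and the averaged state is exactly $\sigma_{V_j^n(\ell_j^\star)}^{B_j^n}$), whereas the paper re-applies the Hayashi--Nagaoka bound explicitly under the planted law.
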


\begin{proof}
Fix receiver \(j\) and abbreviate \(a_i=(m_i,s_i)\) and
\(\ell_i=(a_i,k_i)\).  Its fine-index error for a fixed codebook is
\begin{align}
 P_{e,j}(\cC)=\frac1{M_1J_1M_2J_2}\sum_{a_1,a_2}\sum_{k_1,k_2}
 &Q_\cC(k_1,k_2|a_1,a_2)\nonumber\\[-1mm]
 &\times\Tr[(I-\Lambda_{a_j,k_j}^{(j)})
 \rho_{V_1^n(a_1,k_1),V_2^n(a_2,k_2)}^{B_j^n}].
 \label{eq:actual-fine-error}
\end{align}
The message error is no larger.  Replace \(Q_\cC\) in this expression by the
planted weight \(L_n/(K_1K_2)\), and denote the resulting nonnegative
functional by \(\widetilde P_{e,j}(\cC)\).  Although its weights need not
normalize for fixed \(\cC\), the trace factor lies in \([0,1]\), so
\begin{equation}
 P_{e,j}(\cC)\le\widetilde P_{e,j}(\cC)+\Delta_n(\cC),
 \label{eq:error-change-law}
\end{equation}
where \(\Delta_n\) is the average, over \((a_1,a_2)\), of the L1-distance in \eqref{eq:selector-change-law}.  By
\cref{lem:cloud-mixing}, \(\E\Delta_n=o(1)\).

It remains to analyze the planted term.  Its codebook-ensemble expectation
is an error expectation under the normalized measure
\eqref{eq:global-planted-measure}.  Let
\(I^\star=(\ell_1^\star,\ell_2^\star)\) denote the selected fine-index pair,
put \(V_i^{n,\star}=V_i^n(\ell_i^\star)\), and let
\(\Gamma_{v_j^n}^{(j)}\) denote the detection operator used to form
receiver~\(j\)'s square-root measurement.  Also write
\[
 \sigma_{v_j^n}^{B_j^n}:=\bigotimes_{t=1}^n
 \sigma_{v_{j,t}}^{B_{j,t}},
 \qquad
 \bar\sigma_j^{B_j}:=\sum_{v_j}p_j(v_j)\sigma_{v_j}^{B_j}.
\]
By \cref{lem:planted-factorization}, conditional on \(I^\star\), the
selected pair has law \(p_{V_1V_2}^{\otimes n}\), whereas every competing
word in receiver~\(j\)'s marginal codebook is independent of that pair and
has law \(p_j^{\otimes n}\).  Averaging the unobserved selected auxiliary
therefore gives
\begin{align}
 &\E_{\widetilde P}\!\left[
 \Tr\!\left[(I-\Gamma_{V_j^{n,\star}}^{(j)})
 \rho_{V_1^{n,\star},V_2^{n,\star}}^{B_j^n}\right]
 \middle| I^\star\right]\nonumber\\
 &\qquad=
 \E_{V_j^n\sim p_j^{\otimes n}}
 \Tr\!\left[(I-\Gamma_{V_j^n}^{(j)})
 \sigma_{V_j^n}^{B_j^n}\right]
 =o(1),
 \label{eq:planted-packing-miss}
\end{align}
and, for every \(\ell_j\ne\ell_j^\star\),
\begin{align}
 &\E_{\widetilde P}\!\left[
 \Tr\!\left[\Gamma_{V_j^n(\ell_j)}^{(j)}
 \rho_{V_1^{n,\star},V_2^{n,\star}}^{B_j^n}\right]
 \middle| I^\star\right]\nonumber\\
 &\qquad=
 \E_{\widetilde V_j^n\sim p_j^{\otimes n}}
 \Tr\!\left[\Gamma_{\widetilde V_j^n}^{(j)}
 (\bar\sigma_j^{B_j})^{\otimes n}\right]
 \le 2^{-n[I(V_j;B_j)-\eta_n]}.
 \label{eq:planted-packing-cross}
\end{align}
Both estimates are uniform in the fixed value of \(I^\star\), since the
conditional law in \cref{lem:planted-factorization} is index-independent.
The square-root normalization depends on the full marginal codebook, but the
Hayashi--Nagaoka inequality is applied pointwise.  For each codebook
realization it gives
\begin{align}
 &\Tr\!\left[(I-\Lambda_{\ell_j^\star}^{(j)})
 \rho_{V_1^{n,\star},V_2^{n,\star}}^{B_j^n}\right]\nonumber\\
 &\quad\le
 2\Tr\!\left[(I-\Gamma_{V_j^{n,\star}}^{(j)})
 \rho_{V_1^{n,\star},V_2^{n,\star}}^{B_j^n}\right]
 +4\sum_{\ell_j\ne\ell_j^\star}
 \Tr\!\left[\Gamma_{V_j^n(\ell_j)}^{(j)}
 \rho_{V_1^{n,\star},V_2^{n,\star}}^{B_j^n}\right].
 \label{eq:planted-pointwise-hn}
\end{align}
If \(\ell_j^\star\) is the outcome to which the POVM complement was added,
its error is only smaller.  Taking expectations in
\eqref{eq:planted-pointwise-hn}, using
\eqref{eq:planted-packing-miss}--\eqref{eq:planted-packing-cross}, and
counting the competing fine indices gives
\begin{equation}
 \E\widetilde P_{e,j}
 \le o(1)+4M_jJ_jK_j\,2^{-n[I(V_j;B_j)-\eta_n]}.
 \label{eq:planted-packing-bound}
\end{equation}
Because
\(n^{-1}\log(M_jJ_jK_j)\to R_j+R_j'+\widetilde R_j\), the right side
vanishes under \eqref{eq:reliability-packing}.  Combining
this with \eqref{eq:error-change-law} proves that the same applies for the likelihood-selected law.
The probability that at least one receiver errs is at most their sum by the union bound.
\end{proof}

The two stronger individual covering conditions used for secrecy below
imply \eqref{eq:positive-correlation-rates}--
\eqref{eq:reliability-covering}, as we show in Section \ref{sec-auxialiary-rate-elimination}.  Thus, the same codebook ensemble and
normalized selector serve both reliability and secrecy.

\subsection{Secrecy}
\label{sec:resolvability}


For the secrecy analysis, we compare the likelihood-selector $Q_\cC$ with two oriented posterior
laws, one for each unintended receiver. For receiver 2, we consider the right-oriented selector $Q^\to$, which chooses the $V_2$-index uniformly and then chooses the $V_1$-index conditionally using the likelihood weights. The selector-balancing argument in Lemma~\ref{lem:selector-balancing} shows that $Q^\to$ is close to $Q_\cC$. For receiver 1, the same argument uses the left-oriented selector $Q^\leftarrow$.

These comparisons reduce the analysis to one-sided likelihood resolvability for both receivers.

\subsubsection{A cq soft-covering lemma}

We use the following standard cq soft-covering lemma
\cite{ChengGao2024}. A proof using typical projectors and
a second-moment estimate is given in
Appendix~\ref{app:cq-soft-covering}.

\begin{lemma}[Memoryless cq soft covering]
\label{lem:cq-soft-covering}
Let \(q_X\) be a distribution on a finite alphabet and let
\(x\mapsto\zeta_x^Z\) be a finite-dimensional cq channel.  Put
\[
 \zeta^{XZ}=\sum_xq_X(x)\ketbra{x}\otimes\zeta_x^Z,
 \qquad \bar\zeta^Z=\sum_xq_X(x)\zeta_x^Z.
\]
Generate \(L_n=\lceil2^{nR}\rceil\) independent codewords
\(X^n(\ell)\sim q_X^{\otimes n}\) and define
\[
 \widehat\zeta_\cC^{Z^n}
 :=\frac1{L_n}\sum_{\ell=1}^{L_n}\zeta_{X^n(\ell)}^{Z^n}.
\]
Here \(\zeta_{x^n}^{Z^n}:=\bigotimes_{r=1}^n\zeta_{x_r}^Z\).
If \(R>I(X;Z)_\zeta\), there are constants \(c>0\) and \(n_0\) such that
\begin{equation}
 \E_\cC\dtr{\widehat\zeta_\cC^{Z^n}}
 {\left(\bar\zeta^Z\right)^{\otimes n}}
 \le2^{-cn},\qquad n\ge n_0.
 \label{eq:cq-soft-covering}
\end{equation}
The classical statement follows by representing output distributions as
diagonal states.
\end{lemma}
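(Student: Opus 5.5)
We prove the memoryless cq soft-covering lemma by combining a typical-projector truncation with a second-moment (variance) bound, and then boosting to an exponential rate. Fix \(\delta>0\) with \(R>I(X;Z)_\zeta+3\delta\). For \(x^n\) strongly typical for \(q_X\), let \(\Pi\) be the \(\delta\)-typical projector of \((\bar\zeta^Z)^{\otimes n}\) and \(\Pi_{x^n}\) the conditional typical projector of \(\zeta_{x^n}^{Z^n}\). Define the truncated operators \(\widetilde\zeta_{x^n}:=\Pi\Pi_{x^n}\zeta_{x^n}\Pi_{x^n}\Pi\) when \(x^n\) is typical, and \(0\) otherwise. We then split
\[
 \dtr{\widehat\zeta_\cC^{Z^n}}{(\bar\zeta^Z)^{\otimes n}}
 \le\frac1{2L_n}\sum_\ell\lVert\zeta_{X^n(\ell)}-\widetilde\zeta_{X^n(\ell)}\rVert_1
 +\frac12\Big\lVert\frac1{L_n}\sum_\ell\widetilde\zeta_{X^n(\ell)}-\E\widetilde\zeta_{X^n}\Big\rVert_1
 +\frac12\lVert\E\widetilde\zeta_{X^n}-(\bar\zeta^Z)^{\otimes n}\rVert_1 .
\]

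The first and third terms are controlled by typicality. In expectation, both are at most the truncation defect \(\E\lVert\zeta_{X^n}-\widetilde\zeta_{X^n}\rVert_1\), since \(\E\zeta_{X^n}=(\bar\zeta^Z)^{\otimes n}\). By the gentle operator lemma, this defect is \(O(\sqrt{\epsilon_n})\), where \(\epsilon_n\) is the sum of the atypicality probabilities. With fixed \(\delta\), Hoeffding/Chernoff bounds for the classical type and for the eigenvalue sequences make \(\epsilon_n\le2^{-c_1n}\).

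For the middle term, the deviation lives inside \(\supp\Pi\), which has rank at most \(2^{n(H(Z)+\delta)}\). Cauchy--Schwarz in the Hilbert--Schmidt norm therefore gives
\[
 \E\lVert\cdot\rVert_1\le\sqrt{\rank\Pi}\,\Big(\frac1{L_n}\E\Tr\widetilde\zeta_{X^n}^2\Big)^{1/2}.
\]
We then use the operator inequalities \(\Pi_{x^n}\zeta_{x^n}\Pi_{x^n}\le2^{-n(H(Z|X)-\delta)}\Pi_{x^n}\) and \(\Pi(\bar\zeta)^{\otimes n}\Pi\le 2^{-n(H(Z)-\delta)}\Pi\). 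These bound \(\E\Tr\widetilde\zeta_{X^n}^2\le 2^{-n(H(Z|X)-\delta)}\Tr[\Pi\,\E\zeta_{X^n}]\le2^{-n(H(Z|X)-\delta)}\), so the middle term is at most \(2^{-n(R-I(X;Z)-2\delta)/2}\), which is exponentially small.

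The main obstacle is the non-commutativity of \(\Pi\) and \(\Pi_{x^n}\): the sandwiched operator \(\Pi\Pi_{x^n}\zeta_{x^n}\Pi_{x^n}\Pi\) does not directly inherit the eigenvalue bound. We handle this by writing \(\Tr\widetilde\zeta^2\le\Tr[\Pi_{x^n}\zeta_{x^n}\Pi_{x^n}\Pi\,\Pi_{x^n}\zeta_{x^n}\Pi_{x^n}]\). We apply the conditional eigenvalue bound to one factor and then average the remaining \(\zeta_{X^n}\) to \((\bar\zeta)^{\otimes n}\). The sandwiching by the conditional projector must be undone before this averaging, which costs another gentle-measurement error of \(O(\sqrt{\epsilon_n})\). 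Collecting the three terms gives \(2^{-cn}\) for \(n\ge n_0\). The classical statement follows by restricting to diagonal states.
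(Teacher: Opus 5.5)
Your proposal is correct and follows essentially the same route as the paper's appendix proof: the double truncation $\Pi\Pi_{x^n}\zeta_{x^n}\Pi_{x^n}\Pi$, gentle-measurement control of the truncation defect (averaged using $\E\zeta_{X^n}=(\bar\zeta^Z)^{\otimes n}$), and a Hilbert--Schmidt second-moment bound built from $\rank\Pi$ and $\Tr[\widetilde\zeta_{x^n}^2]\le 2^{-n(H(Z|X)-\delta)}$. The ``non-commutativity obstacle'' you flag is not actually an obstacle: conjugation by $\Pi$ preserves operator order, so $\Pi\Pi_{x^n}\zeta_{x^n}\Pi_{x^n}\Pi\le 2^{-n(H(Z|X)-\delta)}\Pi\Pi_{x^n}\Pi\le 2^{-n(H(Z|X)-\delta)}\Pi$ holds directly (the paper uses exactly this), and $\Tr[\Pi\Pi_{x^n}\zeta_{x^n}\Pi_{x^n}]\le 1$ is immediate, so the extra averaging step, the additional gentle-measurement error, and the eigenvalue bound on $\Pi\bar\zeta^{\otimes n}\Pi$ are all unnecessary.
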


\begin{proof}
See Appendix \ref{appn:cq-soft-covering}.
\end{proof}

\subsubsection{One-sided likelihood resolvability}

\begin{lemma}[One-sided cq likelihood resolvability]
\label{lem:one-sided}
Let \(q_{AS}\) be a distribution on finite alphabets and let
\((a,s)\mapsto\rho_{a,s}^B\) be a cq channel.  Choose
\(q_{S|A}(\cdot|a)\) arbitrarily when \(q_A(a)=0\), and use the memoryless
extension
\[
 \rho_{a^n,s^n}^{B^n}:=\bigotimes_{t=1}^n\rho_{a_t,s_t}^{B_t}.
\]
Independently generate
\[
 A^n(w,i)\overset{\mathrm{ind}}{\sim} q_A^{\otimes n},\qquad
 w\in[W_n],\ i\in[I_n],
\]
where \(W_n=\lceil2^{nR}\rceil\) and
\(I_n=\lceil2^{n\widetilde R}\rceil\).  For fixed \((w,s^n)\), define
\begin{equation}
 \widehat Q_\cC(i|w,s^n)
 :=\frac{q_{S|A}^{\otimes n}(s^n|A^n(w,i))}
 {\sum_{r=1}^{I_n}q_{S|A}^{\otimes n}(s^n|A^n(w,r))},
 \label{eq:one-sided-selector}
\end{equation}
using the uniform distribution on \([I_n]\) when the denominator is zero.
Define
\begin{align}
 \sigma_{s^n,\cC}^{B^n}
 &:=\frac1{W_n}\sum_{w=1}^{W_n}\sum_{i=1}^{I_n}
 \widehat Q_\cC(i|w,s^n)\rho_{A^n(w,i),s^n}^{B^n},
 \label{eq:one-sided-state}\\
 \bar\rho_{s^n}^{B^n}
 &:=\sum_{a^n}q_{A|S}^{\otimes n}(a^n|s^n)
 \rho_{a^n,s^n}^{B^n}.
 \label{eq:one-sided-target}
\end{align}
For \(q_S(s)=0\), choose \(q_{A|S}(\cdot|s)\) arbitrarily; those letters
have zero weight in \eqref{eq:one-sided-conclusion}.
If
\begin{equation}
 \widetilde R>I(A;S)_q,\qquad
 R+\widetilde R>I(A;SB)_\xi,
 \label{eq:one-sided-rates}
\end{equation}
where
\(\xi^{ASB}=\sum_{a,s}q(a,s)\ketbra{a,s}\otimes\rho_{a,s}^B\) and
\(\xi^{SB}:=\Tr_A\xi^{ASB}\), then
there are \(c>0,n_0\) such that
\begin{equation}
 \E_\cC\sum_{s^n}q_S^{\otimes n}(s^n)
 \dtr{\sigma_{s^n,\cC}^{B^n}}{\bar\rho_{s^n}^{B^n}}
 \le2^{-cn},\qquad n\ge n_0.
 \label{eq:one-sided-conclusion}
\end{equation}
\end{lemma}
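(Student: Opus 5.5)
The plan is to run a change-of-measure argument like the one in \cref{lem:cloud-mixing}. We replace the normalized one-sided selector by its planted, unnormalized counterpart, and then recognize the planted state as an ordinary cq soft-covering average. Fix \(w\) and \(s^n\). For each \(i\) put
\[
 \Lambda_i:=q_{S|A}^{\otimes n}(s^n|A^n(w,i))/q_S^{\otimes n}(s^n),
 \qquad
 Y_w(s^n):=I_n^{-1}\sum_i\Lambda_i .
\]
Since \(q_A(a)q_{S|A}(s|a)/q_S(s)=q_{A|S}(a|s)\), we have \(\E\Lambda_i=1\). Moreover, weighting \(q_A^{\otimes n}\) by \(\Lambda_i\) produces \(q_{A|S}^{\otimes n}(\cdot|s^n)\).

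\textbf{Step 1 (selector replacement).} Since the trace factor is bounded by one, we have
\[
 \dtr{\sigma_{s^n,\cC}}{\bar\rho_{s^n}}
 \le \frac1{W_n}\sum_w|Y_w(s^n)-1|
 +\dtr{\tfrac1{W_nI_n}\textstyle\sum_{w,i}\Lambda_{w,i}\rho_{A^n(w,i),s^n}}{\bar\rho_{s^n}} .
\]
This is the same identity as in the proof of \cref{lem:cloud-mixing}, including the fallback case. The first term is a one-sided cloud-mixing quantity. I will truncate \(\Lambda\) at \(2^{n(I(A;S)+\delta)}\) and use the fact that the truncated mass \(\mu_n(s^n)\) is close to one for typical \(s^n\); this follows from a law of large numbers for the conditional information density, averaged over \(q_S^{\otimes n}\). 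Because the \(\Lambda_i\) are independent over \(i\), the variance of the truncated average is at most \(2^{n(I(A;S)+\delta)}/I_n\). This vanishes when \(\widetilde R>I(A;S)\).

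For an exponential rate I would replace the weak-law step by a Chernoff or Hoeffding bound, since the information density is bounded on its support. The truncation error \(\E_{s^n}(1-\mu_n)\) then decays exponentially. So the first term is at most \(2^{-cn}\) after averaging over \(s^n\sim q_S^{\otimes n}\).

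\textbf{Step 2 (planted term as joint soft covering).} Averaging over \(s^n\sim q_S^{\otimes n}\) and attaching a classical register \(S^n\), the second term equals the trace distance between two states. The first is \((W_nI_n)^{-1}\sum_{w,i}\tau_{A^n(w,i)}^{S^nB^n}\), with
\[
 \tau_{a^n}:=\sum_{s^n}q_{S|A}^{\otimes n}(s^n|a^n)\ketbra{s^n}\otimes\rho_{a^n,s^n}^{B^n} .
\]
The second is \(\xi^{SB\otimes n}\). This equality is exact: both sides are block-diagonal in \(s^n\), and \(q_S^{\otimes n}(s^n)\Lambda=q_{S|A}^{\otimes n}(s^n|a^n)\).

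Now \(a\mapsto\tau_a^{SB}\) is a memoryless cq channel with input law \(q_A\), and its Holevo information is \(I(A;SB)_\xi\). With \(W_nI_n\) independent codewords at total rate \(R+\widetilde R>I(A;SB)\), \cref{lem:cq-soft-covering} bounds the expected trace distance by \(2^{-cn}\). Because the state is block-diagonal in \(s^n\), this trace distance is exactly \(\sum_{s^n}\) of the per-\(s^n\) trace norms. That is the second term, summed over \(s^n\) with weight \(q_S^{\otimes n}(s^n)\).

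\textbf{Main obstacle.} The main difficulty is making Step 1 exponential rather than merely \(o(1)\), uniformly enough in \(s^n\). The delicate points are controlling the truncation for atypical \(s^n\) and handling zero-probability letters. If exponential decay proves awkward, I would fall back to stating the truncation estimate with Hoeffding's inequality applied to the bounded single-letter density \(\log q_{S|A}(s|a)/q_S(s)\). The variance bound is then \(2^{-n(\widetilde R-I(A;S)-\delta)}\), and its square root gives the exponent.
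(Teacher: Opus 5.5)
Your proposal is correct and follows essentially the paper's proof: the paper's ideal experiment $\Gamma_\cC$ has $S^nB^n$ marginal equal to your planted joint state, its second covering step is your Step~2 for the augmented channel $a\mapsto\zeta_a^{SB}$, and its actual-versus-ideal distance is your first term averaged over $s^n$. The obstacle you flag does not arise, because $\sum_{s^n}q_S^{\otimes n}(s^n)\,|Y_w(s^n)-1| = 2\,\tv{\widehat q_{w,\cC}}{q_S^{\otimes n}}$ with $\widehat q_{w,\cC}=I_n^{-1}\sum_i q_{S|A}^{\otimes n}(\cdot|A^n(w,i))$; the paper therefore applies \cref{lem:cq-soft-covering} directly to the classical channel $a\mapsto q_{S|A}(\cdot|a)$ at rate $\widetilde R>I(A;S)$, instead of redoing the truncation and Hoeffding argument (which also works as you sketch it).
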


\begin{proof}
Fix \(\cC\) and retain every classical register in two cq experiments.  The
actual state is
\begin{align}
 P_\cC^{W_0I_0A^nS^nB^n}
 :=\frac1{W_n}\sum_{w,s^n,i}&q_S^{\otimes n}(s^n)
 \widehat Q_\cC(i|w,s^n)
 \ketbra{w,i,A^n(w,i),s^n}\nonumber\\[-1mm]
 &\otimes\rho_{A^n(w,i),s^n}^{B^n}.
 \label{eq:one-sided-actual-full-state}
\end{align}
Thus one first draws \(S^n\sim q_S^{\otimes n}\) and \(W_0\) uniformly,
and then draws \(I_0\) from the posterior selector.  The ideal state is
\begin{align}
 \Gamma_\cC^{W_0I_0A^nS^nB^n}
 :=\frac1{W_nI_n}\sum_{w,i,s^n}&
 q_{S|A}^{\otimes n}(s^n|A^n(w,i))
 \ketbra{w,i,A^n(w,i),s^n}\nonumber\\[-1mm]
 &\otimes\rho_{A^n(w,i),s^n}^{B^n}.
 \label{eq:one-sided-ideal-full-state}
\end{align}
It draws \((W_0,I_0)\) uniformly, sets \(A^n=A^n(W_0,I_0)\), and passes
that codeword through the memoryless cq channel \(a\mapsto S B\).
Put
\[
 \widehat q_{w,\cC}(s^n)
 =\frac1{I_n}\sum_iq_{S|A}^{\otimes n}(s^n|A^n(w,i)).
\]
Whenever this quantity is positive, Bayes' rule gives
\(\Gamma_\cC(i|w,s^n)=\widehat Q_\cC(i|w,s^n)\).  When it is zero, the
ideal experiment assigns zero mass to \((w,s^n)\), and its regular
conditional cq kernel there may be chosen to equal the uniform fallback in
the actual experiment.  Moreover, \(q_S^{\otimes n}(s^n)=0\) implies
\(\widehat q_{w,\cC}(s^n)=0\) almost surely.  Consequently, the conditional
cq kernels of the two full states agree once \((W_0,S^n)\) is given.
Orthogonality of the \((w,s^n)\) blocks gives
\begin{equation}
 \dtr{P_\cC^{W_0I_0A^nS^nB^n}}
 {\Gamma_\cC^{W_0I_0A^nS^nB^n}}
 =\frac1{W_n}\sum_w\tv{q_S^{\otimes n}}{\widehat q_{w,\cC}}.
 \label{eq:actual-ideal-exact}
\end{equation}
This identity includes the zero-denominator rule.

Apply \cref{lem:cq-soft-covering} to the classical channel
\(a\mapsto q_{S|A}(\cdot|a)\).  The first inequality in
\eqref{eq:one-sided-rates} gives
\begin{equation}
 \E\dtr{P_\cC^{W_0I_0A^nS^nB^n}}
 {\Gamma_\cC^{W_0I_0A^nS^nB^n}}\le2^{-c_1n}
 \label{eq:one-sided-first-cover}
\end{equation}
for some \(c_1>0\).

Next, regard
\begin{equation}
 a\longmapsto\zeta_a^{SB}
 :=\sum_sq(s|a)\ketbra{s}\otimes\rho_{a,s}^B
 \label{eq:augmented-channel}
\end{equation}
as one cq channel.  Under the ideal experiment,
\[
 \Gamma_\cC^{S^nB^n}=\frac1{W_nI_n}\sum_{w,i}
 \zeta_{A^n(w,i)}^{S^nB^n}.
\]
Here \(\zeta_{a^n}^{S^nB^n}:=\bigotimes_{r=1}^n\zeta_{a_r}^{S_rB_r}\).
All \(W_nI_n\) codewords are independent.  The second inequality in
\eqref{eq:one-sided-rates} and \cref{lem:cq-soft-covering} imply
\begin{equation}
 \E\dtr{\Gamma_\cC^{S^nB^n}}{(\xi^{SB})^{\otimes n}}
 \le2^{-c_2n}.
 \label{eq:one-sided-second-cover}
\end{equation}
Tracing out \(W_0I_0A^n\) is a channel, and hence
\[
 \dtr{P_\cC^{S^nB^n}}{\Gamma_\cC^{S^nB^n}}
 \le\dtr{P_\cC^{W_0I_0A^nS^nB^n}}
 {\Gamma_\cC^{W_0I_0A^nS^nB^n}}.
\]
Contractivity, the triangle inequality,
\eqref{eq:one-sided-first-cover}, and \eqref{eq:one-sided-second-cover}
therefore show that
\[
 \E\dtr{P_\cC^{S^nB^n}}{(\xi^{SB})^{\otimes n}}
 \le2^{-c_1n}+2^{-c_2n}.
\]
Both states have the same classical marginal \(q_S^{\otimes n}\), so block
diagonality gives
\[
 \dtr{P_\cC^{S^nB^n}}{(\xi^{SB})^{\otimes n}}
 =\sum_{s^n}q_S^{\otimes n}(s^n)
 \dtr{\sigma_{s^n,\cC}^{B^n}}{\bar\rho_{s^n}^{B^n}}.
\]
Absorbing the sum of exponentials into a smaller exponent proves the claim.
\end{proof}

\subsubsection{Two orientations of the likelihood selector}

\begin{lemma}[Selector balancing]
\label{lem:selector-balancing}
Let \(A_1,\ldots,A_{K_1}\) be i.i.d. \(p_1^{\otimes n}\), and
independently let \(B_1,\ldots,B_{K_2}\) be i.i.d.
\(p_2^{\otimes n}\).  Set \(L_{ij}=L_n(A_i,B_j)\) and define
\[
 Q(i,j)=\frac{L_{ij}}{\sum_{a,b}L_{ab}}.
\]
If the global denominator is zero, set \(Q(i,j)=1/(K_1K_2)\).
Define the right- and left-oriented selectors
\begin{align}
 Q^\to(i,j)&=\frac1{K_2}\frac{L_{ij}}{\sum_aL_{aj}},
 \label{eq:right-selector}\\
 Q^\leftarrow(i,j)&=\frac1{K_1}\frac{L_{ij}}{\sum_bL_{ib}},
 \label{eq:left-selector}
\end{align}
with uniform conditional fallbacks for zero column or row denominators.  If
\(\widetilde R_1>I(V_1;V_2)\), then
\begin{equation}
 \E\tv{Q}{Q^\to}\le2^{-c_\to n}
 \label{eq:right-balance}
\end{equation}
for some \(c_\to>0\).  If
\(\widetilde R_2>I(V_1;V_2)\), then
\begin{equation}
 \E\tv{Q}{Q^\leftarrow}\le2^{-c_\leftarrow n}
 \label{eq:left-balance}
\end{equation}
for some \(c_\leftarrow>0\).
\end{lemma}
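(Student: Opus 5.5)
The plan is to compute $\tv{Q}{Q^\to}$ exactly in terms of column averages of the likelihood table. That exact formula reduces the claim to classical soft covering applied to each column, which \cref{lem:cq-soft-covering} supplies. The left-oriented bound \eqref{eq:left-balance} follows by exchanging the roles of rows and columns (and of $V_1,V_2$).

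\textbf{Step 1: exact identity.} Define the column averages $c_j:=K_1^{-1}\sum_a L_{aj}$ and the global average $Z:=K_1^{-1}K_2^{-1}\sum_{a,b}L_{ab}=K_2^{-1}\sum_j c_j$.

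Suppose $Z>0$ and $c_j>0$. Then $Q(\cdot\mid j)$ and $Q^\to(\cdot\mid j)$ are the same conditional law on $i$, namely $L_{ij}/\sum_aL_{aj}$. The two selectors therefore differ only in their column marginals, which are $c_j/(K_2Z)$ for $Q$ and $1/K_2$ for $Q^\to$.

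If $c_j=0$, then $Q$ puts no mass on column $j$, while $Q^\to$ puts mass $1/K_2$ there (spread by the uniform fallback). This is exactly the value $|c_j/Z-1|=1$ in the formula below, so the fallback rule is consistent with it. Hence, whenever $Z>0$,
\[
 \tv{Q}{Q^\to}=\frac1{2K_2}\sum_{j}\Bigl|\frac{c_j}{Z}-1\Bigr|
 \le\frac{1}{2Z}\Bigl(D+|Z-1|\Bigr)\le\frac{D}{Z},
 \qquad D:=\frac1{K_2}\sum_j|c_j-1|.
\]
The last inequality uses $|Z-1|\le D$, which holds because $Z$ is the average of the $c_j$.

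\textbf{Step 2: each $c_j$ is a soft-covering density ratio.} Fix $j$. The column $B_j$ is independent of the first book and satisfies $c_j=\widehat p_{\cC}(B_j)/p_2^{\otimes n}(B_j)$, where
\[
 \widehat p_\cC(b^n):=\frac1{K_1}\sum_a p_{V_2|V_1}^{\otimes n}(b^n\mid A_a).
\]
Here $\widehat p_\cC$ is the output distribution of the random codebook $\{A_a\}$ passed through the classical channel $v_1\mapsto p_{V_2|V_1}(\cdot|v_1)$.

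Averaging over $B_j\sim p_2^{\otimes n}$ with the first book held fixed gives
\[
 \E_{B_j}|c_j-1|=2\,\tv{\widehat p_\cC}{p_2^{\otimes n}}.
\]
Since $\widetilde R_1>I(V_1;V_2)$, the classical case of \cref{lem:cq-soft-covering} gives $\E D\le 2\cdot2^{-cn}$.

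\textbf{Step 3: removing the division by $Z$.} Split on the event $\{Z<1/2\}$. This event includes $Z=0$, where the global fallback applies; there we only use the trivial bound $\tv{Q}{Q^\to}\le1$.

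On the complement, Step 1 gives $\tv{Q}{Q^\to}\le 2D$. Because $|Z-1|\le D$, the event $\{Z<1/2\}$ implies $D>1/2$, so Markov's inequality bounds its probability by $2\E D$. Therefore
\[
 \E\tv{Q}{Q^\to}\le 4\,\E D\le 8\cdot 2^{-cn},
\]
and any $c_\to<c$ absorbs the constant for large $n$.

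\textbf{Main obstacle.} The only delicate point is the bookkeeping of the zero-denominator fallbacks, both for individual columns and for the global sum. It has to be checked that the exact identity of Step 1 still holds, or at least remains an upper bound, in those cases. As argued above, a zero column contributes exactly $1/(2K_2)$ to the distance in both the identity and the formula, and the global-zero case is covered by the Markov event. No quantum input is needed anywhere in this lemma.
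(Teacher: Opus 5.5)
Your proof is correct and follows the paper's argument essentially step for step. It uses the same exact column-marginal formula for the total variation, the same bound by twice the average column deviation via $|Z-1|\le D$ with a case split at $1/2$, and the same classical soft covering through the identity $\E_{B_j}|c_j-1|=2\tv{\widehat p_\cC}{p_2^{\otimes n}}$. The only cosmetic difference is the event $\{Z<1/2\}$: you handle it with Markov's inequality, whereas the paper notes that $D>1/2$ there already gives $\tv{Q}{Q^\to}\le 1<2D$ pointwise, which yields a slightly better constant.
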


\begin{proof}
We prove \eqref{eq:right-balance}.  Put
\[
 Z_j=\sum_iL_{ij},\qquad a_j=Z_j/K_1,
 \qquad \bar a=K_2^{-1}\sum_ja_j.
\]
When \(\bar a>0\), the two selectors have the same conditional distribution
on \(i\) in every positive column.  They differ only in their \(j\)-marginal:
\(Q\) assigns \(a_j/(K_2\bar a)\), whereas \(Q^\to\) assigns \(1/K_2\).
A zero column has zero mass under \(Q\), while \(Q^\to\) assigns it mass
\(1/K_2\); its contribution to total variation is therefore
\(1/(2K_2)\).  Combining positive and zero columns gives
\begin{equation}
 \tv{Q}{Q^\to}=\frac1{2K_2}\sum_j\left|\frac{a_j}{\bar a}-1\right|.
 \label{eq:balance-marginal-distance}
\end{equation}
Let \(d_0=K_2^{-1}\sum_j|a_j-1|\).  Since
\(|\bar a-1|\le d_0\),
\(K_2^{-1}\sum_j|a_j-\bar a|\le2d_0\).  If \(d_0\le1/2\), then
\(\bar a\ge1/2\), and \eqref{eq:balance-marginal-distance} is at most
\(2d_0\).  If \(d_0>1/2\), the same bound follows from total variation
being at most one.  If \(\bar a=0\), both selectors are uniform and hence
equal.  Thus
\begin{equation}
 \tv{Q}{Q^\to}\le\frac2{K_2}\sum_j|a_j-1|.
 \label{eq:balance-degree-bound}
\end{equation}

Condition on the \(A\)-book and define
\[
 \widehat p_{2,\cC_1}(b^n)
 =\frac1{K_1}\sum_i p_{V_2|V_1}^{\otimes n}(b^n|A_i).
\]
The likelihood identity gives
\begin{align}
 \E_{B_j}|a_j-1|
 &=\sum_{b^n}|\widehat p_{2,\cC_1}(b^n)-p_2^{\otimes n}(b^n)|
 \nonumber\\
 &=2\tv{\widehat p_{2,\cC_1}}{p_2^{\otimes n}}.
 \label{eq:degree-soft-covering}
\end{align}
If \(p_2^{\otimes n}(b^n)=0\), then
\(\widehat p_{2,\cC_1}(b^n)=0\) almost surely: a zero marginal forces
\(p_{V_2|V_1}^{\otimes n}(b^n|A_i)=0\) for every generated \(A_i\).
Thus \eqref{eq:degree-soft-covering} involves no division by zero.
Apply \cref{lem:cq-soft-covering} to the classical channel
\(V_1\mapsto V_2\), whose mutual information is \(I(V_1;V_2)\), and
average \eqref{eq:balance-degree-bound}.  This proves
\eqref{eq:right-balance}.  Interchanging the two books proves
\eqref{eq:left-balance}.
\end{proof}

The uniform fallback defines a stochastic encoder for every codebook
realization.  Under the right-balancing condition,
the globally zero event is exponentially unlikely because it forces
\(d_0=1\) in the preceding column proof.  Under the left-balancing condition,
the same conclusion follows from the symmetric row argument.

\subsubsection{Bipartite resolvability}

Define the conditional output states
\begin{align}
 \bar\rho_{v_2^n}^{B_2^n}
 &:=\sum_{v_1^n}p_{V_1|V_2}^{\otimes n}(v_1^n|v_2^n)
 \rho_{v_1^n,v_2^n}^{B_2^n},
 \label{eq:conditional-target-two}\\
 \bar\rho_{v_1^n}^{B_1^n}
 &:=\sum_{v_2^n}p_{V_2|V_1}^{\otimes n}(v_2^n|v_1^n)
 \rho_{v_1^n,v_2^n}^{B_1^n}.
 \label{eq:conditional-target-one}
\end{align}
For a fixed codebook, set the target states
\begin{align}
 \theta_{m_2,\cC}^{B_2^n}
 &:=\frac1{J_2K_2}\sum_{s_2,k_2}
 \bar\rho_{V_2^n(m_2,s_2,k_2)}^{B_2^n},
 \label{eq:target-two}\\
 \theta_{m_1,\cC}^{B_1^n}
 &:=\frac1{J_1K_1}\sum_{s_1,k_1}
 \bar\rho_{V_1^n(m_1,s_1,k_1)}^{B_1^n}.
 \label{eq:target-one}
\end{align}
The first target may depend on \(m_2\) and \(\cC\), but it is independent
of the message \(m_1\) that must be hidden from receiver 2.  The reverse
statement holds for the second target.

\begin{theorem}[Bipartite likelihood-selector cq resolvability]
\label{thm:bipartite-resolvability}
For every fixed message pair \((m_1,m_2)\), the following ensemble-expected
bounds hold.  By codebook symmetry, the constants and exponents are uniform
over the pair.

If
\begin{equation}
 \widetilde R_1>I(V_1;V_2),\qquad
 R_1'+\widetilde R_1>I(V_1;V_2B_2),
 \label{eq:resolvability-rates-two}
\end{equation}
then, for some \(c_2>0\) and all sufficiently large \(n\),
\begin{equation}
 \E_\cC\dtr{\rho_{m_1,m_2,\cC}^{B_2^n}}
 {\theta_{m_2,\cC}^{B_2^n}}
 \le2^{-c_2n}.
 \label{eq:resolvability-conclusion-two}
\end{equation}
If
\begin{equation}
 \widetilde R_2>I(V_1;V_2),\qquad
 R_2'+\widetilde R_2>I(V_2;V_1B_1),
 \label{eq:resolvability-rates-one}
\end{equation}
then, for some \(c_1>0\) and all sufficiently large \(n\),
\begin{equation}
 \E_\cC\dtr{\rho_{m_1,m_2,\cC}^{B_1^n}}
 {\theta_{m_1,\cC}^{B_1^n}}
 \le2^{-c_1n}.
 \label{eq:resolvability-conclusion-one}
\end{equation}
\end{theorem}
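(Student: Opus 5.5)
We prove \eqref{eq:resolvability-conclusion-two}; \eqref{eq:resolvability-conclusion-one} follows by interchanging the roles of the two books and using $Q^\leftarrow$ in place of $Q^\to$. Fix $(m_1,m_2)$ and, for fixed $(s_1,s_2)$, write $A_a=V_1^n(m_1,s_1,a)$ and $B_b=V_2^n(m_2,s_2,b)$.

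\emph{Step 1: replace the selector.} In \eqref{eq:exact-induced-state}, replace $Q_\cC(\cdot,\cdot|m_1,m_2,s_1,s_2)$ by the right-oriented selector $Q^\to$ of \cref{lem:selector-balancing}, built from the same $(A_a,B_b)$. Denote the resulting state by $\rho^{\to,B_2^n}_{m_1,m_2,\cC}$. By convexity of trace distance and contraction under the cq map $(a,b)\mapsto\rho^{B_2^n}_{A_a,B_b}$,
\[
 \dtr{\rho_{m_1,m_2,\cC}^{B_2^n}}{\rho^{\to,B_2^n}_{m_1,m_2,\cC}}
 \le \frac1{J_1J_2}\sum_{s_1,s_2}\tv{Q}{Q^\to}.
\]
Each $(s_1,s_2)$-block is a fresh instance of the i.i.d. setting of \cref{lem:selector-balancing}. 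Since $\widetilde R_1>I(V_1;V_2)$, the expectation of the right side is therefore at most $2^{-c_\to n}$.

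\emph{Step 2: rewrite the oriented state as one-sided resolvability.} Under $Q^\to$, the index $k_2$ is uniform. Given $b=B_{k_2}$, the index $k_1$ is drawn with weights proportional to $L_n(A_{k_1},b)\propto p^{\otimes n}_{V_2|V_1}(b|A_{k_1})$, because the factor $p_2^{\otimes n}(b)$ cancels in the normalization. Hence
\[
 \rho^{\to,B_2^n}_{m_1,m_2,\cC}=\frac1{J_2K_2}\sum_{s_2,k_2}\sigma^{(m_1)}_{B_{s_2,k_2},\cC_1},
\]
where $\sigma_{s^n,\cC_1}$ is exactly the state \eqref{eq:one-sided-state} of \cref{lem:one-sided}. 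The correspondence is $A=V_1$, $S=V_2$, $B=B_2$, $w=s_1$ with $W_n=J_1$, $i=k_1$ with $I_n=K_1$, and the book $\cC_1=\{V_1^n(m_1,s_1,k_1)\}_{s_1,k_1}$. The one-sided selector \eqref{eq:one-sided-selector} is this conditional, and the zero-denominator conventions of the two lemmas agree. The target $\bar\rho_{s^n}$ in \eqref{eq:one-sided-target} coincides with \eqref{eq:conditional-target-two}, so $\theta_{m_2,\cC}^{B_2^n}$ is the same average of $\bar\rho_{B_{s_2,k_2}}$. Convexity then gives
\[
 \dtr{\rho^{\to}}{\theta_{m_2,\cC}}\le\frac1{J_2K_2}\sum_{s_2,k_2}\dtr{\sigma_{B_{s_2,k_2},\cC_1}}{\bar\rho_{B_{s_2,k_2}}}.
\]

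\emph{Step 3: average over the codebooks.} The $V_2$-book is independent of $\cC_1$, and each $B_{s_2,k_2}\sim p_2^{\otimes n}=q_S^{\otimes n}$. Taking the expectation over the $V_2$-book first turns each summand into $\sum_{s^n}q_S^{\otimes n}(s^n)\,\dtr{\sigma_{s^n,\cC_1}}{\bar\rho_{s^n}}$. The rate hypotheses \eqref{eq:resolvability-rates-two} are precisely \eqref{eq:one-sided-rates}, with $I(A;S)=I(V_1;V_2)$ and $I(A;SB)=I(V_1;V_2B_2)$. Hence \cref{lem:one-sided} bounds the $\cC_1$-expectation by $2^{-cn}$. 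Combining this with Step 1 through the triangle inequality gives \eqref{eq:resolvability-conclusion-two}, with $c_2=\min(c,c_\to)$ reduced slightly to absorb the constant factor. Uniformity over $(m_1,m_2)$ follows because the codebook law does not depend on the message pair.

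\emph{Main obstacle.} The delicate step is Step 2. The $V_2$-index is shared across all $s_1$: for a fixed $(s_2,k_2)$ the same $b$ enters every $s_1$-block. One must check that averaging over $s_1$ reproduces the $W_n$-average in \eqref{eq:one-sided-state} with a common $s^n$. It does, because $Q^\to$ is built block by block for each fixed $(s_1,s_2)$ while $b$ depends only on $(s_2,k_2)$, so the order of summation can be exchanged. One must also check that the fallback rules match on zero columns. Finally, the independence of the evaluation point $B_{s_2,k_2}$ from $\cC_1$ is what justifies replacing it by a fresh $q_S^{\otimes n}$-sample in Step 3.
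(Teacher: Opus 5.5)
Your proposal is correct and follows the paper's proof essentially step for step: you pass from $Q_\cC$ to the right-oriented selector $Q^\to$ via \cref{lem:selector-balancing}, use the likelihood cancellation to identify the oriented state with the one-sided construction of \cref{lem:one-sided} under $A=V_1$, $S=V_2$, $B=B_2$, $w=s_1$, $i=k_1$, and then conclude with convexity, the independence of the $V_2$-book from $\cC_1$, and the triangle inequality. Your explicit checks are the same bookkeeping the paper does in its decomposition of the oriented output: that the shared $V_2$-codeword acts as a common $s^n$ across all $s_1$-blocks, and that the zero-column fallbacks coincide.
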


\begin{proof}
We prove \eqref{eq:resolvability-conclusion-two}; reverse all subscripts for
the other conclusion.  For fixed \((m_1,m_2,s_1,s_2)\), abbreviate the two
selected subbooks by \(v_{1,k_1}\) and \(v_{2,k_2}\), and define
\begin{equation}
 Q_\cC^\to(k_1,k_2)
 :=\frac1{K_2}\frac{L_n(v_{1,k_1},v_{2,k_2})}
 {\sum_{a=1}^{K_1}L_n(v_{1,a},v_{2,k_2})},
 \label{eq:oriented-selector-proof}
\end{equation}
with a uniform conditional distribution in a zero column.  Let
\(\rho_{m_1,m_2,\cC}^{\to,B_2^n}\) be the state obtained from
\eqref{eq:exact-induced-state} by replacing \(Q_\cC\) with \(Q_\cC^\to\).
The map from an index pair to its output state is cq.  Contractivity,
convexity, and \cref{lem:selector-balancing} give, under the first inequality
of \eqref{eq:resolvability-rates-two},
\begin{equation}
 \E\dtr{\rho_{m_1,m_2,\cC}^{B_2^n}}
 {\rho_{m_1,m_2,\cC}^{\to,B_2^n}}
 \le2^{-\alpha_2n}
 \label{eq:global-to-oriented}
\end{equation}
for some \(\alpha_2>0\).

For a fixed \(v_2^n\), the conditional law of \(k_1\) in
\eqref{eq:oriented-selector-proof} is
\begin{align}
 \frac{L_n(V_1^n(m_1,s_1,k_1),v_2^n)}
 {\sum_aL_n(V_1^n(m_1,s_1,a),v_2^n)}
 =\frac{p_{V_2|V_1}^{\otimes n}
 (v_2^n|V_1^n(m_1,s_1,k_1))}
 {\sum_ap_{V_2|V_1}^{\otimes n}
 (v_2^n|V_1^n(m_1,s_1,a))}.
 \label{eq:likelihood-cancellation}
\end{align}
for every positive column.  In a zero column both the oriented selector and
the one-sided encoder use the same uniform fallback.  Hence this is
\cref{lem:one-sided} with
\begin{equation}
 A=V_1,\quad S=V_2,\quad B=B_2,\quad
 w=s_1,\quad i=k_1,\quad R=R_1',\quad
 \widetilde R=\widetilde R_1.
 \label{eq:one-sided-identification}
\end{equation}
Let \(\sigma_{m_1,v_2^n,\cC_1}^{B_2^n}\) denote the one-sided state
\eqref{eq:one-sided-state} under this identification.  The oriented output
decomposes as
\begin{equation}
 \rho_{m_1,m_2,\cC}^{\to,B_2^n}
 =\frac1{J_2K_2}\sum_{s_2,k_2}
 \sigma_{m_1,V_2^n(m_2,s_2,k_2),\cC_1}^{B_2^n}.
 \label{eq:oriented-decomposition}
\end{equation}
The two information quantities in \eqref{eq:one-sided-rates} become
\[
 I(A;S)=I(V_1;V_2),\qquad I(A;SB)=I(V_1;V_2B_2).
\]
Thus the hypotheses of \cref{lem:one-sided} are
\eqref{eq:resolvability-rates-two}.

By convexity, \eqref{eq:oriented-decomposition}, and
\eqref{eq:target-two},
\begin{align}
 &\dtr{\rho_{m_1,m_2,\cC}^{\to,B_2^n}}
 {\theta_{m_2,\cC}^{B_2^n}}\nonumber\\
 &\quad\le\frac1{J_2K_2}\sum_{s_2,k_2}
 \dtr{\sigma_{m_1,V_2^n(m_2,s_2,k_2),\cC_1}^{B_2^n}}
 {\bar\rho_{V_2^n(m_2,s_2,k_2)}^{B_2^n}}.
 \label{eq:oriented-convexity}
\end{align}
Taking expectations uses the independence of each displayed \(V_2^n\) from
the \(V_1\)-book; linearity handles the sum.  Applying
\cref{lem:one-sided} gives
\begin{equation}
 \E\dtr{\rho_{m_1,m_2,\cC}^{\to,B_2^n}}
 {\theta_{m_2,\cC}^{B_2^n}}
 \le2^{-\beta_2n}
 \label{eq:oriented-to-target}
\end{equation}
for some \(\beta_2>0\).  The triangle inequality with
\eqref{eq:global-to-oriented}
and \eqref{eq:oriented-to-target} proves
\eqref{eq:resolvability-conclusion-two}.

For receiver 1, use \(Q^\leftarrow\) and identify
\(A=V_2,S=V_1,B=B_1,w=s_2,i=k_2,R=R_2'\), and
\(\widetilde R=\widetilde R_2\).
\end{proof}

The chain rule gives
\begin{align}
 I(V_1;V_2B_2)&=I(V_1;V_2)+I(V_1;B_2|V_2),
 \label{eq:chain-two}\\
 I(V_2;V_1B_1)&=I(V_1;V_2)+I(V_2;B_1|V_1).
 \label{eq:chain-one}
\end{align}
Thus the correlation-index rate covers \(I(V_1;V_2)\), while the secrecy
randomization covers the conditional leakage terms.

\subsubsection{Conditional strong secrecy}
\label{sec:completion}

For a fixed codebook let
\[
 \Omega_\cC^{M_1M_2B_2^n}
 =\frac1{M_1M_2}\sum_{m_1,m_2}\ketbra{m_1,m_2}
 \otimes\rho_{m_1,m_2,\cC}^{B_2^n}.
\]

\begin{corollary}[Conditional trace secrecy]
\label{cor:trace-secrecy}
Under \eqref{eq:resolvability-rates-two}, for some \(c>0\),
\begin{equation}
 \E_\cC\dtr{\Omega_\cC^{M_1M_2B_2^n}}
 {\pi^{M_1}\otimes\Omega_\cC^{M_2B_2^n}}
 \le2^{-cn}.
 \label{eq:trace-secrecy-two}
\end{equation}
The symmetric statement holds under \eqref{eq:resolvability-rates-one}.
\end{corollary}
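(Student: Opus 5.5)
The plan is to reduce the corollary to \cref{thm:bipartite-resolvability} by comparing both states with the auxiliary state
\[
 \Theta_\cC^{M_1M_2B_2^n}:=\frac1{M_1M_2}\sum_{m_1,m_2}\ketbra{m_1,m_2}\otimes\theta_{m_2,\cC}^{B_2^n},
\]
which already has product form \(\pi^{M_1}\otimes\Theta_\cC^{M_2B_2^n}\), because the target \(\theta_{m_2,\cC}\) in \eqref{eq:target-two} does not depend on \(m_1\). We then split by the triangle inequality:
\[
 \dtr{\Omega_\cC^{M_1M_2B_2^n}}{\pi^{M_1}\otimes\Omega_\cC^{M_2B_2^n}}
 \le\dtr{\Omega_\cC^{M_1M_2B_2^n}}{\Theta_\cC^{M_1M_2B_2^n}}
 +\dtr{\pi^{M_1}\otimes\Theta_\cC^{M_2B_2^n}}{\pi^{M_1}\otimes\Omega_\cC^{M_2B_2^n}}.
\]

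For the first term, the classical registers \((m_1,m_2)\) are orthogonal blocks shared by both states, so the trace distance equals
\[
 \frac1{M_1M_2}\sum_{m_1,m_2}\dtr{\rho_{m_1,m_2,\cC}^{B_2^n}}{\theta_{m_2,\cC}^{B_2^n}}.
\]

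For the second term, tensoring with \(\pi^{M_1}\) leaves the trace distance unchanged. Tracing out \(M_1\) is a channel that maps \(\Omega_\cC^{M_1M_2B_2^n}\) to \(\Omega_\cC^{M_2B_2^n}\) and \(\Theta_\cC\) to \(\Theta_\cC^{M_2B_2^n}\), so by contractivity the second term is at most the first. The whole quantity is therefore bounded by twice the message-averaged resolvability distance.

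Next we take the expectation over \(\cC\) and use linearity. \cref{thm:bipartite-resolvability}, under \eqref{eq:resolvability-rates-two}, bounds each summand by \(2^{-c_2n}\), with an exponent uniform in \((m_1,m_2)\) by codebook symmetry. Hence the expectation is at most \(2\cdot2^{-c_2n}\le2^{-cn}\) for any \(0<c<c_2\) and \(n\) large. The statement for receiver~1 follows identically: use the block state built from \(\theta_{m_1,\cC}^{B_1^n}\), trace out \(M_2\), and apply \eqref{eq:resolvability-conclusion-one} under \eqref{eq:resolvability-rates-one}.

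The only point that needs care is the uniformity of the exponent across message pairs; otherwise averaging over \(M_1M_2\) pairs could lose the exponential rate. Uniformity holds because the codebook law is invariant under permutations of the indices, so the ensemble expectation is literally the same number for every \((m_1,m_2)\). Beyond this, no step is a genuine obstacle: the corollary is a direct repackaging of the resolvability theorem.
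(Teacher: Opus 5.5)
Your proposal is correct and follows the paper's proof essentially step for step. The paper uses the same product reference state (called $\Xi_\cC$ there), reduces by block diagonality to the message-averaged distance controlled by \cref{thm:bipartite-resolvability}, bounds the marginal term by contractivity under tracing out $M_1$, and absorbs the resulting factor of two into the exponent.
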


\begin{proof}
Define
\[
 \Xi_\cC^{M_1M_2B_2^n}
 =\pi^{M_1}\otimes\frac1{M_2}\sum_{m_2}\ketbra{m_2}
 \otimes\theta_{m_2,\cC}^{B_2^n}.
\]
Block diagonality gives
\[
 \dtr{\Omega_\cC}{\Xi_\cC}
 =\frac1{M_1M_2}\sum_{m_1,m_2}
 \dtr{\rho_{m_1,m_2,\cC}^{B_2^n}}
 {\theta_{m_2,\cC}^{B_2^n}}.
\]
Thus \cref{thm:bipartite-resolvability} gives
\(\E\dtr{\Omega_\cC}{\Xi_\cC}\le2^{-c_0n}\).  Tracing out \(M_1\)
and using contractivity gives
\[
 \dtr{\Omega_\cC^{M_2B_2^n}}{\Xi_\cC^{M_2B_2^n}}
 \le\dtr{\Omega_\cC}{\Xi_\cC}.
\]
Since \(\Xi_\cC=\pi^{M_1}\otimes\Xi_\cC^{M_2B_2^n}\), the triangle
inequality yields, codebook by codebook,
\[
 \dtr{\Omega_\cC}
 {\pi^{M_1}\otimes\Omega_\cC^{M_2B_2^n}}
 \le2\dtr{\Omega_\cC}{\Xi_\cC}.
\]
Decrease the exponent to absorb the factor two.  The other direction is
identical.
\end{proof}

\subsection{Rate selection and deterministic codebooks}
\label{sec:rate-selection}

We now combine the reliability and secrecy estimates.
Eliminating the auxiliary rates gives the stated rate inequalities
for constant \(U\), and a joint selection argument provides one
codebook satisfying both reliability and secrecy.

\subsubsection{Auxiliary-rate elimination}
\label{sec-auxialiary-rate-elimination}
Put
\begin{align}
 A_1&=I(V_1;B_1),&A_2&=I(V_2;B_2),\nonumber\\
 C&=I(V_1;V_2),&
 D_1&=I(V_1;B_2|V_2),&D_2&=I(V_2;B_1|V_1).
 \label{eq:rate-abbreviations-no-u}
\end{align}
The complete system furnished by
\cref{thm:reliability,thm:bipartite-resolvability} is
\begin{equation}
 R_i\ge0,\qquad R_i'\ge0,\qquad \widetilde R_i\ge0,
 \qquad i=1,2,
 \label{eq:all-rates-nonnegative}
\end{equation}
along with
\begin{align}
 \widetilde R_1+\widetilde R_2&>C,
 \label{eq:all-rates-cloud}\\
 \widetilde R_1&>C,& R_1'+\widetilde R_1&>C+D_1,
 \label{eq:all-rates-one}\\
 \widetilde R_2&>C,& R_2'+\widetilde R_2&>C+D_2,
 \label{eq:all-rates-two}\\
 R_1+R_1'+\widetilde R_1&<A_1,&
 R_2+R_2'+\widetilde R_2&<A_2.
 \label{eq:all-rates-packing}
\end{align}
The first line is redundant under the two individual covering conditions.
Every feasible tuple in this auxiliary-rate system must satisfy
\begin{equation}
 R_1<A_1-C-D_1,\qquad R_2<A_2-C-D_2
 \label{eq:projected-no-u}
\end{equation}
This follows immediately from the resolvability and packing inequalities.
Conversely, if \eqref{eq:projected-no-u} holds, choose \(\eta>0\) such
that \(R_i+C+D_i+2\eta<A_i\), and set
\begin{equation}
 \widetilde R_1=\widetilde R_2=C+\eta,\qquad
 R_1'=D_1+\eta,\qquad R_2'=D_2+\eta.
 \label{eq:auxiliary-rate-choice}
\end{equation}
All inequalities \eqref{eq:all-rates-cloud}--
\eqref{eq:all-rates-packing} then hold, and their projection is
\eqref{eq:projected-no-u}.

\subsubsection{Selection of a deterministic codebook}
\label{sec-selection-of-det-codebook}

For the random codebook, define
\[
 e_n(\cC)=P_{e,1}^{(n)}(\cC)+P_{e,2}^{(n)}(\cC),\qquad
 s_n(\cC)=\Delta_{1,n}(\cC)+\Delta_{2,n}(\cC).
\]
The reliability theorem gives \(\E e_n=a_n\to0\), and the secrecy
corollary gives \(\E s_n\le2^{-n\kappa}\) for some \(\kappa>0\).  Markov's
inequality yields
\begin{align*}
 \Prb\{e_n>\sqrt{a_n}\}&\le\sqrt{a_n},\\
 \Prb\{s_n>2^{-n\kappa/2}\}&\le2^{-n\kappa/2}.
\end{align*}
For sufficiently large \(n\), the complements of the two events have a nonempty intersection, yielding a
deterministic sequence of codebooks and fine-index POVMs with vanishing
average error and exponentially small trace leakage.  The selected codebook
retains the stochastic likelihood selector, the random choices \(S_1,S_2\),
and the channel prefix.

To pass from trace secrecy to \eqref{eq:strong-secrecy-definition}, let
\(\delta_{1,n}=\Delta_{1,n}\) for a selected codebook and put
\[
 \tau^{M_1M_2B_2^n}
 :=\pi^{M_1}\otimes\Omega^{M_2B_2^n}.
\]
The independent uniform messages satisfy
\(H(M_1|M_2)_\Omega=\log M_1\), while
\(H(M_1|M_2B_2^n)_\tau=\log M_1\).  Since
\(\dtr{\Omega}{\tau}=\delta_{1,n}\), Winter's uniform continuity bound for
conditional entropy~\cite{Winter2016} gives
\begin{equation}
 I(M_1;B_2^n|M_2)_\Omega
 \le2\delta_{1,n}\log M_1
 +(1+\delta_{1,n})h_2\!\left(\frac{\delta_{1,n}}
 {1+\delta_{1,n}}\right).
 \label{eq:trace-to-mi}
\end{equation}
Since \(\delta_{1,n}\) is exponentially small and \(\log M_1=O(n)\), the
right side tends to zero.  Interchanging the receivers proves the other
secrecy condition.

\subsection{Restoring the time-sharing variable}
\label{sec:time-sharing}

It remains to lift the constant-\(U\) coding lemmas without imposing their
rate inequalities separately for every value of \(u\).

\begin{lemma}[Rational superletter reduction]
\label{lem:superletter}
Suppose first that \(p_U\) is rational.  Choose \(d\) such that
\(dp_U(u)\) is an integer for every \(u\), and fix a list
\((u_1,\ldots,u_d)\) in which \(u\) occurs exactly \(dp_U(u)\) times.
Define
\[
 V_i^\star=(V_{i,1},\ldots,V_{i,d}),\qquad i=1,2,
\]
with joint law
\begin{equation}
 p(v_1^\star,v_2^\star)
 =\prod_{r=1}^dp(v_{1,r},v_{2,r}|u_r),
 \label{eq:superletter-law}
\end{equation}
and super-channel output
\begin{equation}
 \rho_{v_1^\star,v_2^\star}^{B_1^\star B_2^\star}
 =\bigotimes_{r=1}^d
 \rho_{u_r,v_{1,r},v_{2,r}}^{B_{1,r}B_{2,r}}.
 \label{eq:superletter-channel}
\end{equation}
Then
\begin{align}
 I(V_1^\star;V_2^\star)&=dI(V_1;V_2|U),
 \label{eq:super-correlation}\\
 I(V_i^\star;B_i^\star)&=dI(V_i;B_i|U),\qquad i=1,2,
 \label{eq:super-packing}\\
 I(V_1^\star;V_2^\star B_2^\star)&=dI(V_1;V_2B_2|U),
 \label{eq:super-resolvability-two}\\
 I(V_2^\star;V_1^\star B_1^\star)&=dI(V_2;V_1B_1|U).
 \label{eq:super-resolvability-one}
\end{align}
\end{lemma}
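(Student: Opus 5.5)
The plan is to compute each superletter mutual information from the product structure of the joint state, using additivity of mutual information over independent blocks. The superletter state is
\[
 \omega^{V_1^\star V_2^\star B_1^\star B_2^\star}
 =\bigotimes_{r=1}^d\omega_{u_r}^{V_{1,r}V_{2,r}B_{1,r}B_{2,r}},
\]
where \(\omega_u^{V_1V_2B_1B_2}:=\sum_{v_1,v_2}p(v_1,v_2|u)\ketbra{v_1,v_2}\otimes\rho_{u,v_1,v_2}^{B_1B_2}\) is the conditional state given \(U=u\). This factorization follows directly from \eqref{eq:superletter-law} and \eqref{eq:superletter-channel}, because both the classical law and the channel output factor coordinatewise. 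It holds only for the specific pairing \((V_{1,r},V_{2,r},B_{1,r},B_{2,r})\) that the fixed list \((u_1,\ldots,u_d)\) assigns to slot \(r\). That is the one point to check carefully.

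Next, use the following fact. For any tensor-product state \(\bigotimes_r\tau_r^{X_rY_r}\), the von Neumann entropy is additive on every subsystem grouping. Hence
\[
 I(X_1\cdots X_d;Y_1\cdots Y_d)=\sum_rI(X_r;Y_r)_{\tau_r}.
\]
Apply this with the groupings \((X,Y)=(V_1,V_2)\), \((V_i,B_i)\), \((V_1,V_2B_2)\) and \((V_2,V_1B_1)\). Each superletter quantity then becomes \(\sum_{r=1}^dI(\cdot;\cdot)_{\omega_{u_r}}\).

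Finally, regroup the sum by the value of \(u_r\). Since \(u\) occurs exactly \(dp_U(u)\) times,
\[
 \sum_{r=1}^dI(\cdot;\cdot)_{\omega_{u_r}}
 =d\sum_up_U(u)I(\cdot;\cdot)_{\omega_u}
 =dI(\cdot;\cdot|U)_\omega.
\]
The last equality is the standard identity for conditional mutual information given a classical register that is block diagonal in \(u\). This yields \eqref{eq:super-correlation}, \eqref{eq:super-packing}, \eqref{eq:super-resolvability-two} and \eqref{eq:super-resolvability-one}. For the prefixed channel, the states \(\rho_{u,v_1,v_2}\) of \eqref{eq:prefixed-pair-channel} are used throughout, so the stochastic prefix requires no separate treatment.

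There is no real obstacle; the argument is bookkeeping. The only subtlety is letters with \(p(v_1,v_2|u)\) arbitrary when \(p_U(u)=0\). These do not arise, because such \(u\) never appear in the list.
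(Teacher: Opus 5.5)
Your proof is correct and follows the paper's argument: factor the superletter state as $\bigotimes_{r=1}^d\omega_{u_r}$, use additivity of mutual information across tensor factors, and regroup the $d$ terms by the value of $u_r$ to obtain $d\sum_u p_U(u)\,I(\cdot\,;\cdot)_{\omega_u}$. The only difference is cosmetic: the paper states additivity for relative entropy in the mutual-information representation, whereas you state it for von Neumann entropy on each subsystem grouping, and the two are equivalent.
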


\begin{proof}
The state of one superletter is the tensor product of the \(d\) cq states
corresponding to \(u_1,\ldots,u_d\).  Additivity of relative entropy on
tensor products, applied to its mutual-information representations, gives
\[
 I(V_1^\star;V_2^\star)
 =\sum_{r=1}^dI(V_1;V_2)_{\omega_{u_r}}
 =d\sum_up_U(u)I(V_1;V_2)_{\omega_u},
\]
which is \eqref{eq:super-correlation}.  The same argument with the systems
\(B_i\), \(V_2B_2\), and \(V_1B_1\) proves
\eqref{eq:super-packing}--\eqref{eq:super-resolvability-one}.
\end{proof}

Apply the constant-\(U\) construction to \(N\) i.i.d. uses of this
super-channel.  It is a code for \(n=Nd\) physical channel uses.  A book of
physical rate \(R\) has size \(2^{nR+o(n)}=2^{N(dR)+o(N)}\), hence rate
\(dR\) per superletter.  Thus every rate and every mutual information is
multiplied by the same factor \(d\).  The
superletter likelihood ratio factors across its \(d\) coordinates, so the
preceding cloud-mixing, packing, balancing, and soft-covering arguments apply
to the superchannel.  Deterministic-codebook selection and the
trace-to-mutual-information conversion from Section~\ref{sec-selection-of-det-codebook} then yield one physical code satisfying
both reliability and secrecy.
The selection of \((u_1,\ldots,u_d)\) is public and fixed as part of the code.
For an arbitrary sufficiently large physical blocklength, write
\(n=Nd+r\), where \(0\le r<d\), use the superletter code on the first
\(Nd\) channel uses, and append \(r\) fixed input symbols.  Both receivers
ignore the corresponding outputs, and the resulting rate loss is \(o(1)\).

For arbitrary \(p_U\), choose rational distributions
\(p_U^{(q)}\to p_U\).  Because the alphabets and output dimensions are
finite, all mutual informations in \eqref{eq:main-rate-one}--
\eqref{eq:main-rate-two} are continuous in the underlying distribution.
Every strictly feasible pair therefore remains feasible for all sufficiently
large \(q\).  The rational construction applies, and taking the closure
proves the result for arbitrary \(p_U\).

Finally, with
\begin{align*}
 A_i&=I(V_i;B_i|U),\qquad C=I(V_1;V_2|U),\\
 D_1&=I(V_1;B_2|V_2,U),\qquad
 D_2=I(V_2;B_1|V_1,U),
\end{align*}
the auxiliary rates are nonnegative as in
\eqref{eq:all-rates-nonnegative}, and the nontrivial inequalities are
\eqref{eq:all-rates-cloud}--\eqref{eq:all-rates-packing} with these
conditional quantities.  The choice
\[
 \widetilde R_1=\widetilde R_2=C+\eta,\qquad
 R_1'=D_1+\eta,\qquad R_2'=D_2+\eta
\]
then gives \eqref{eq:main-rate-one}--\eqref{eq:main-rate-two} as
\(\eta\downarrow0\).  This proves the two-active-message part of
\cref{thm:main}.  Its inactive-user boundary is supplied by the following
proposition.

\subsection{Inactive-user specialization}

\begin{proposition}[Inactive-user specialization]
\label{prop:inactive-user}
For every state \(\omega\) in \eqref{eq:single-letter-state}, every rate
\begin{equation}
 R_1<I(V_1;B_1|U)_\omega-I(V_1;V_2B_2|U)_\omega
 \label{eq:inactive-user-one}
\end{equation}
is achievable with \(R_2=0\).  The symmetric statement holds after
interchanging the receivers.
\end{proposition}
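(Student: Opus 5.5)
Note first that $I(V_1;B_1|U)-I(V_1;V_2B_2|U)=I(V_1;B_1|U)-I(V_1;V_2|U)-I(V_1;B_2|V_2,U)$ by the chain rule \eqref{eq:chain-two}. So \eqref{eq:inactive-user-one} is exactly \eqref{eq:main-rate-one}. What changes is that receiver~2 now has no message, and we drop the requirement that it decode anything.

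The plan is to reuse the constant-$U$ construction of \cref{sec:code} with $M_2=1$. We also drop receiver~2's decoder and its secrecy requirement. The requirement $\Delta_{2,n}$ becomes vacuous: with $M_2$ a singleton, $\Omega^{M_1M_2B_1^n}=\pi^{M_2}\otimes\Omega^{M_1B_1^n}$ holds trivially. Hence only three things are needed: reliability at receiver~1, the cloud-mixing estimate, and the secrecy estimate \eqref{eq:resolvability-conclusion-two}.

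Concretely, we set $R_2=0$ and $R_2'=0$, and keep $\widetilde R_2>0$ as a free correlation-index rate. We take $\widetilde R_1=C+\eta$, $R_1'=D_1+\eta$, and $\widetilde R_2=\eta$, so that $\widetilde R_1+\widetilde R_2>C$.

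\textbf{Reliability.} \cref{lem:cloud-mixing} requires only \eqref{eq:cloud-mixing-rates}, which holds. The proof of \cref{thm:reliability} for $j=1$ then goes through verbatim: the bound \eqref{eq:planted-packing-bound} uses only receiver~1's book, so \eqref{eq:reliability-packing} is needed only for $j=1$. The condition for $j=2$ is never used, and $I(V_2;B_2)$ is irrelevant.

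\textbf{Secrecy of $M_1$ against $B_2^n$.} The first part of \cref{thm:bipartite-resolvability} uses only the right-oriented selector and the right-balancing half of \cref{lem:selector-balancing}. That half requires $\widetilde R_1>C$, not $\widetilde R_2>C$. The one-sided \cref{lem:one-sided} then requires $R_1'+\widetilde R_1>C+D_1$. None of these steps needs $\widetilde R_2$ large; $K_2\ge1$ suffices.

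\textbf{Anticipated obstacle.} I expect this to be checking the fallback remark after \cref{lem:selector-balancing}. It rests only on the column argument, so right-balancing alone is enough. I will also confirm that the decomposition \eqref{eq:oriented-decomposition} with $J_2=1$ is harmless.

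Corollary~\ref{cor:trace-secrecy} then gives exponentially small $\Delta_{1,n}$. Markov selection of a deterministic codebook and \eqref{eq:trace-to-mi} proceed as in \cref{sec-selection-of-det-codebook}, with $e_n=P_{e,1}^{(n)}$ only.

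\textbf{Time sharing and symmetry.} The $U$-dependence is restored through \cref{lem:superletter}: its identities \eqref{eq:super-correlation}, \eqref{eq:super-packing} and \eqref{eq:super-resolvability-two} scale all relevant quantities by $d$. Rational approximation of $p_U$ and continuity of the mutual informations then finish the argument. Letting $\eta\downarrow0$ gives every $R_1$ satisfying \eqref{eq:inactive-user-one}. The symmetric statement follows by interchanging indices and using the left-oriented selector.
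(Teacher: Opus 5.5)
Your proof is correct, but it does not follow the paper's route.

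\textbf{Your route.} You keep the full likelihood-selected Marton code, with $M_2=J_2=1$ and a dummy $V_2$-book of size $\lceil 2^{n\eta}\rceil$. You then check that every constraint tied to receiver~2 is used only for receiver~2's decoding or for the secrecy of $M_2$; these are its packing condition, left balancing, $\widetilde R_2>C$, and $R_2'>D_2$. Both of those requirements are vacuous for a singleton message set. Your choice $\widetilde R_1=C+\eta$, $R_1'=D_1+\eta$, $\widetilde R_2=\eta$ meets exactly the hypotheses you need:
\begin{itemize}
\item \cref{lem:cloud-mixing};
\item the $j=1$ case of \cref{thm:reliability};
\item the first half of \cref{thm:bipartite-resolvability}.
\end{itemize}

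\textbf{The paper's route.} The paper instead marginalizes $V_2$ into the stochastic prefix. This gives a single-user cq wiretap channel $v_1\mapsto\sum_{v_2}p(v_2|v_1)\rho_{v_1,v_2}$. It then uses plain wiretap coding:
\begin{itemize}
\item codewords $V_1^n(m_1,s_1)$;
\item \cref{lem:cq-packing} with $R_1+R_1'<I(V_1;B_1)$;
\item \cref{lem:cq-soft-covering} with $R_1'>I(V_1;B_2)$.
\end{itemize}
This needs no likelihood selector, cloud mixing, or selector balancing. It also achieves the larger rate $I(V_1;B_1|U)-I(V_1;B_2|U)$, and the stated bound follows from $I(V_1;B_2|U)\le I(V_1;V_2B_2|U)$.

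\textbf{What each route buys.} Your argument shows that the same encoder as in the two-active-message case already covers the boundary without modification. The cost is heavier machinery and a rate that is exactly the stated expression, not the larger wiretap rate. The paper's reduction is simpler, gives the stronger bound, and shows that on the inactive axis the auxiliary $V_2$ can be absorbed into the prefix.
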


\begin{proof}
Marginalize \(V_2\) and its contribution to the prefix, thereby obtaining a
one-user cq wiretap channel with classical auxiliary \(V_1\), legitimate
output \(B_1\), and unintended output \(B_2\).  Data processing gives
\begin{equation}
 I(V_1;B_2|U)\le I(V_1;V_2B_2|U),
 \label{eq:inactive-data-processing}
\end{equation}
so it suffices to achieve every
\[
 R_1<I(V_1;B_1|U)-I(V_1;B_2|U).
\]

First take \(U\) constant.  Generate independent codewords
\(V_1^n(m_1,s_1)\sim p_{V_1}^{\otimes n}\), where the message rate is
\(R_1\) and the randomization rate is \(R_1'\).  Given
\((m_1,s_1)\), use the marginalized stochastic prefix.  Receiver \(1\)
applies the fine-index square-root measurement over \((m_1,s_1)\) and then
discards \(s_1\).  The packing lemma \cref{lem:cq-packing} gives vanishing
ensemble-average error when
\begin{equation}
 R_1+R_1'<I(V_1;B_1).
 \label{eq:inactive-packing}
\end{equation}
For each \(m_1\), averaging over \(s_1\) is an ordinary cq
soft-covering code.  Lemma~\ref{lem:cq-soft-covering}, followed by averaging
over the uniform message, gives an exponentially vanishing expected
trace-secrecy parameter when
\begin{equation}
 R_1'>I(V_1;B_2).
 \label{eq:inactive-covering}
\end{equation}
Apply the two-threshold codebook-selection argument of
Section~\ref{sec-selection-of-det-codebook} to obtain a
deterministic codebook sequence with vanishing decoding error
and exponentially small trace leakage.
Equation~\eqref{eq:trace-to-mi} then gives conditional strong secrecy. Eliminating
\(R_1'\) yields the claimed one-user difference for constant \(U\).

For rational \(p_U\), concatenate the prescribed number of copies of each
conditional one-user channel into the super-letter of
\cref{lem:superletter}; both mutual informations in
\cref{eq:inactive-packing,eq:inactive-covering} add across its coordinates.
Rational approximation and continuity then give arbitrary \(p_U\).
Finally, \eqref{eq:inactive-data-processing} proves
\eqref{eq:inactive-user-one}.  The second boundary follows by exchanging the receiver labels. This establishes the achievability assertion of \cref{thm:main}.
\end{proof}

\subsection{Cardinality bounds}
\label{sec:cardinality}

We finally bound the auxiliary alphabets and show that this does not change the inner
region.  The reduction preserves or increases both averaged rate bounds.

\begin{proposition}[Auxiliary alphabet sizes]
\label{prop:auxiliary-cardinalities}
For the finite-input cq channel of Theorem~\ref{thm:main}, the inner region is unchanged if the union over distributions \eqref{eq:factorization} is restricted to
\begin{equation}
 |\mathcal U|\le 3,\qquad
 |\mathcal V_1|\le |\mathcal X|,\qquad
 |\mathcal V_2|\le |\mathcal X|.
 \label{eq:auxiliary-cardinalities}
\end{equation}
\end{proposition}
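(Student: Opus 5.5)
The plan is to handle the time-sharing variable and the two correlated auxiliaries separately. For $U$ I would use the Fenchel--Eggleston--Carath\'eodory support lemma. For $V_1,V_2$ I would use a support reduction in which every affected term either stays fixed or changes in the favourable direction, as the remark in the introduction ("entropy concavity gives a support reduction that controls both averaged rate bounds") suggests.

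First reduce the correlated auxiliaries with $U=u$ fixed. The key observation is that, once the conditional kernels $p(v_2,x\mid v_1)$ are fixed, both rate bounds are affine in the weight vector $p(v_1\mid u)$, apart from a few entropies of marginals. Indeed, writing
\[
I(V_1;B_1)-I(V_1;V_2)-I(V_1;B_2\mid V_2)=H(B_1)-H(V_2B_2)+\sum_{v_1}p(v_1)\bigl[H(V_2B_2\mid V_1=v_1)-H(B_1\mid V_1=v_1)\bigr],
\]
the only non-affine pieces are the marginal entropies $H(B_1)$ and $H(V_2B_2)$. The second bound decomposes the same way, with $H(B_1\mid V_1)-H(B_1\mid V_1V_2)$ affine in $p(v_1)$.

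I would then replace $p(v_1)$ by an extreme point of the polytope of weight vectors that keep $p_X$ fixed. This costs $|\mathcal X|-1$ linear constraints plus normalization, so the extreme point has at most $|\mathcal X|$ support points. Fixing $p_X$ fixes the output marginals $\rho^{B_1}$ and $\rho^{B_2}$, since these depend on $(V_1,V_2)$ only through $X$. The remaining nonlinear term $-H(V_2B_2)$, and its analogue in the other bound, is concave in $p(v_1)$. On the polytope, each bound is then affine plus concave, so a minimizing direction exists rather than a maximizing one. The remedy I would try is to first replace $V_2$ by a function-preserving relabelling so that only $-H(B_2\mid V_2)$-type terms remain. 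Then I would choose the direction of the Carath\'eodory move along an edge on which both affine parts are non-decreasing; such an edge exists because a linear functional vanishing on the polytope's affine hull can be taken with either sign. Repeating with the roles of $V_1$ and $V_2$ exchanged gives $|\mathcal V_2|\le|\mathcal X|$. The reduction preserves the Markov structure \eqref{eq:factorization}, because it only reweights letters of $V_i$ while keeping their conditional kernels.

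Next, with $|\mathcal V_i|\le|\mathcal X|$ fixed, the two bounds \eqref{eq:main-rate-one}--\eqref{eq:main-rate-two} are averages $\sum_u p(u)f_i(p_{V_1V_2X\mid U=u})$ of two continuous functionals over the compact, connected set of conditional laws. The support lemma then yields a $U'$ with $|\mathcal U'|\le 3$ that preserves both averages: two functionals plus normalization. No marginal constraint is needed, since nothing else in the region depends on $p_U$. Because the new auxiliaries give rate bounds at least as large, the union over the restricted alphabets contains every pair of the original union, and the two closures coincide. The order matters: the $V_i$ reduction is done first, for each $u$, and $U$ last, so that the $U$ step acts on a compact family.

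The main obstacle is the $V_i$ step. The quantities $I(V_1;V_2)$ and the conditional leakage terms are not preserved by fixing $p_X$ alone, so a naive Carath\'eodory argument would require about $|\mathcal V_2||\mathcal X|$ constraints. I expect the delicate part to be showing that the concave residual terms move in a direction that does not decrease either bound simultaneously. If a simultaneous monotone direction fails, my fallback is to reduce one bound at a time along the face where the other is affine and unchanged.
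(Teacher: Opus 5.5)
There is a genuine gap in your $V_i$ step. Your setup for it is right. With $q_k(v_2,x)=p(v_2,x\mid V_1=k)$ fixed, both bounds are functions of $t=p(v_1\mid u)$ on the polytope $\{t\ge0:\sum_kt_kq_k(x)=p_X(x)\ \forall x\}$. Its extreme points have at most $|\mathcal X|$ positive coordinates, and $H(B_1),H(B_2)$ are constant on it. Your $U$ step is also fine.

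The first problem is a sign error. The map $t\mapsto\sum_kt_k\sigma_k^{V_2B_2}$ is affine and von Neumann entropy is concave. So $-H(V_2B_2)$ is \emph{convex} in $t$, not concave, and both bounds $F_1(t),F_2(t)$ are convex on the polytope. The residual term works in your favour. The relabelling of $V_2$ and the worry about a ``minimizing direction'' are unnecessary.

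The second problem is the real missing idea. Even with convexity, you cannot replace $t$ by a \emph{single} extreme point, or push it along a single edge. Convexity gives, for each $j$ separately, an extreme point with $F_j(t^{(a)})\ge F_j(t)$, but not one that works for $j=1,2$ together. Suppose $t$ lies in the relative interior of an edge. Then the kernel direction $\pm d$ along that edge, with $\sum_kd_kq_k(x)=0$, is the only feasible one. The two bounds may have slopes of opposite sign along $d$, so choosing the sign of $d$ cannot help. Your fallback of reducing one bound at a time runs into the same obstruction on such an edge.

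The fix is to time-share instead of replace:
\begin{itemize}
\item Write $t=\sum_a\lambda_at^{(a)}$ with extreme points $t^{(a)}$, and record $a$ in a refinement of $U$. This is allowed because the bounds in Theorem~\ref{thm:main} are averages over $U$, and individual components need not have nonnegative bounds. Convexity then gives $F_j(t)\le\sum_a\lambda_aF_j(t^{(a)})$ for $j=1,2$ simultaneously.
\item Repeat with $V_1,V_2$ interchanged, holding $p(v_1,x\mid v_2)$ fixed. This does not enlarge the support of $V_1$.
\item Relabel into common alphabets of size $|\mathcal X|$, using the $U$-dependent prefix.
\item Only then apply Carath\'eodory to the finitely many component vectors $(F_1,F_2)\in\mathbb R^2$ to obtain $|\mathcal U|\le3$.
\end{itemize}
This is exactly the paper's argument.
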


\begin{proof}
For a fixed value of \(U\), write the two rate bounds as
\[
 F_1=I(V_1;B_1)-I(V_1;V_2B_2),\qquad
 F_2=I(V_2;B_2)-I(V_2;V_1B_1).
\]
Let \(k\) range over the support of \(V_1\), and hold
\(q_k(v_2,x)=p(v_2,x|V_1=k)\) fixed.  Vary the weights
\(t_k=p(V_1=k)\) in the polytope
\[
 \mathcal T=\left\{t\ge0:
 \sum_k t_k q_k(x)=p_X(x)\ \text{for every }x\right\},
 \qquad q_k(x)=\sum_{v_2}q_k(v_2,x).
\]
Summing the constraints over \(x\) gives \(\sum_k t_k=1\).
Every extreme point of \(\mathcal T\) has at most
\(|\mathcal X|\) positive coordinates: otherwise the corresponding columns \(q_k(\cdot)\) are linearly dependent and admit a nonzero feasible perturbation in both directions.

Define the fixed conditional states
\[
 \sigma_k^{V_2B_j}
 =\sum_{v_2,x}q_k(v_2,x)\ketbra{v_2}\otimes\rho_x^{B_j},
 \qquad j=1,2,
\]
and let \(\sigma_k^{B_1}\) be the \(B_1\) marginal.  The cq entropy
identity \(H(V_1C)=H(t)+\sum_k t_kH(C)_{\sigma_k}\) gives
\begin{align*}
 F_1(t)
 &=H(B_1)+\sum_k t_k
   \bigl[H(V_2B_2)_{\sigma_k}-H(B_1)_{\sigma_k}\bigr]
   -H\!\left(\sum_k t_k\sigma_k^{V_2B_2}\right),\\
 F_2(t)
 &=H(B_2)+\sum_k t_k
   \bigl[H(V_2B_1)_{\sigma_k}-H(B_1)_{\sigma_k}\bigr]
   -H\!\left(\sum_k t_k\sigma_k^{V_2B_2}\right).
\end{align*}
The output marginal entropies are constant on \(\mathcal T\), since
\(p_X\) is fixed.  Entropy concavity therefore makes both \(F_1\) and
\(F_2\) convex in \(t\).  Decompose the original weights into extreme
points, \(t=\sum_a\lambda_a t^{(a)}\).  Then
\[
 F_j(t)\le\sum_a\lambda_a F_j(t^{(a)}),\qquad j=1,2.
\]
Record \(a\) in a refinement of \(U\).  The theorem uses the averaged
conditional bounds, so individual components need not have nonnegative
\(F_1\) and \(F_2\).  Both averaged bounds have increased or stayed the
same, and each component now has \(|\operatorname{supp}V_1|\le|\mathcal X|\).
Within each component, interchange \(V_1,B_1\) with \(V_2,B_2\) and repeat
the argument.  This reduces the support of \(V_2\) to at most
\(|\mathcal X|\) without enlarging that of \(V_1\).

Apply these steps for every original value of \(U\).  Relabel the surviving
symbols separately within each component into common alphabets of size
\(|\mathcal X|\); this is permitted by the \(U\)-dependent prefix in
\eqref{eq:factorization}.  The resulting averaged vector \((F_1,F_2)\)
lies in the convex hull of finitely many component vectors in
\(\mathbb R^2\).  Carath\'eodory's theorem preserves this vector using at
most three components, giving \(|\mathcal U|\le3\).  Every rate pair
allowed by the original distribution remains allowed, including the
inactive-user cases.  The reverse inclusion is immediate, and taking
closures proves the claim.
\end{proof}

\section{Regularized capacity and a quantum-auxiliary outer bound for confidential classical communication}
\label{reg-sec-capacity}

Applying the coding theorem to block channels gives an operational
multiletter characterization of the confidential capacity region.
The converse identifies the independent messages of an arbitrary code
with the block auxiliaries.  We then derive an outer bound expressed
through a single channel use and a quantum auxiliary, which generalizes the single-letter outer bound for classical broadcast channels from \cite{LiuMaricSpasojevicYates2008}. 

\subsection{Regularized capacity}
For
$n\geq1$, let $\mathcal S_n(W)$ be the union, over all finite independent
random variables $V_1,V_2$ and all stochastic maps
$q(x^n|v_1,v_2)$, of the nonnegative pairs $(r_1,r_2)$ satisfying
\begin{align}
 r_1&\leq I(V_1;B_1^n)_\omega-I(V_1;B_2^n|V_2)_\omega,
 \label{reg-eq-operational-one}\\
 r_2&\leq I(V_2;B_2^n)_\omega-I(V_2;B_1^n|V_1)_\omega,
 \label{reg-eq-operational-two}
\end{align}
where
\begin{align}
 \omega^{V_1V_2X^nB_1^nB_2^n}
 =\sum_{v_1,v_2,x^n}&p(v_1)p(v_2)q(x^n|v_1,v_2)
 \ketbra{v_1,v_2,x^n}\otimes\rho_{x^n}^{B_1^nB_2^n}.
 \label{reg-eq-operational-state}
\end{align}
Let $\mathcal R_n(W)$ be the corresponding confidential message region: it is the
union over finite classical $U,V_1,V_2$ and distributions
\begin{equation}
 p(u)p(v_1,v_2|u)q(x^n|u,v_1,v_2)
 \label{reg-eq-marton-law}
\end{equation}
of all nonnegative pairs satisfying
\begin{align}
 r_1&\leq I(V_1;B_1^n|U)-I(V_1;V_2B_2^n|U),
 \label{reg-eq-marton-one}\\
 r_2&\leq I(V_2;B_2^n|U)-I(V_2;V_1B_1^n|U).
 \label{reg-eq-marton-two}
\end{align}
The state in \cref{reg-eq-marton-one,reg-eq-marton-two} is the one
induced by \cref{reg-eq-marton-law} and
$x^n\mapsto\rho_{x^n}^{B_1^nB_2^n}$.

\begin{theorem}[Regularized capacity]
\label{reg-thm-exact-capacity}
For every finite-input memoryless cq broadcast channel $W$,
\begin{equation}
 \mathcal C_{\mathrm{conf}}(W)
 =\overline{\bigcup_{n\geq1}\frac1n\mathcal S_n(W)}
 =\overline{\bigcup_{n\geq1}\frac1n\mathcal R_n(W)},
 \label{reg-eq-exact-capacity}
\end{equation}
where $\mathcal C_{\mathrm{conf}}(W)$ denotes the average-error capacity region
under conditional strong secrecy.
\end{theorem}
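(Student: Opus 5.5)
We will prove the chain of inclusions
\[
\mathcal C_{\mathrm{conf}}(W)\subseteq\overline{\bigcup_n\tfrac1n\mathcal S_n(W)}\subseteq\overline{\bigcup_n\tfrac1n\mathcal R_n(W)}\subseteq\mathcal C_{\mathrm{conf}}(W).
\]

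\emph{Middle inclusion.} Take any point of $\mathcal S_n(W)$, with independent $V_1,V_2$. Set $U$ constant. Independence gives $I(V_1;V_2)=0$, so $I(V_1;V_2B_2^n)=I(V_1;B_2^n|V_2)$ by \eqref{eq:chain-two}. Hence the bounds \eqref{reg-eq-operational-one}--\eqref{reg-eq-operational-two} coincide with \eqref{reg-eq-marton-one}--\eqref{reg-eq-marton-two}, and $\mathcal S_n(W)\subseteq\mathcal R_n(W)$.

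\emph{Last inclusion (achievability).} Fix $n$ and a law \eqref{reg-eq-marton-law}. View $W^{\otimes n}$ as a single finite-input cq broadcast channel with input alphabet $\mathcal X^n$. Apply \cref{thm:main}, together with the inactive-user case of \cref{prop:inactive-user} when a coordinate vanishes, to this super-channel. Its inequalities are exactly \eqref{reg-eq-marton-one}--\eqref{reg-eq-marton-two}, once we rewrite $I(V_1;B_1)-I(V_1;V_2)-I(V_1;B_2|V_2)=I(V_1;B_1)-I(V_1;V_2B_2)$ using the chain rule. Every pair strictly inside $\mathcal R_n(W)$ is therefore achievable over the super-channel. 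A code for $N$ uses of $W^{\otimes n}$ is a code for $Nn$ uses of $W$. For blocklengths $Nn+r$ with $0\le r<n$, we pad with fixed inputs; this costs $o(1)$ in rate and leaves error and leakage unchanged. Rates then scale by $1/n$. Boundary points of $\mathcal R_n(W)$, and coordinates with $r_i=0$, are handled by taking the closure.

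\emph{First inclusion (converse).} Take an achievable pair $(R_1,R_2)$ and a sequence of codes with $P_{e,i}^{(n)}\to0$ and conditional strong secrecy. Set $V_i=M_i$, which are independent, uniform and finite, and let $q(x^n|v_1,v_2)=E_n$. Fano's inequality applied to the POVM outcome, together with data processing, gives
\[
\log|\cM_1|=H(M_1)\le I(M_1;B_1^n)+1+P_{e,1}^{(n)}\log|\cM_1|.
\]
The secrecy condition gives $I(M_1;B_2^n|M_2)=o(1)$, so
\[
\log|\cM_1|(1-P_{e,1}^{(n)})\le I(M_1;B_1^n)-I(M_1;B_2^n|M_2)+1+o(1).
\]
The same bound holds with the roles swapped. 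Thus
\[
\tfrac1n\bigl(\log|\cM_1|(1-P_{e,1})-1-o(1),\ \log|\cM_2|(1-P_{e,2})-1-o(1)\bigr)^+\in\tfrac1n\mathcal S_n(W).
\]
\emph{Monotonicity.} $\mathcal S_n(W)$ is downward closed within the nonnegative quadrant, since it is a union of regions defined by upper bounds. It is therefore enough that the positive parts of both coordinates fit, and those positive parts converge to $(R_1,R_2)$.

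\emph{Main obstacle.} If one right-hand side is negative while the corresponding $R_i$ is $0$, the pair is not in $\mathcal S_n(W)$ as defined. In that case we would take that message set to be a singleton, which makes its bound $0\le0$. Alternatively, we can argue that when $R_i>0$ the bound is eventually positive. Passing to the limit then places $(R_1,R_2)$ in the closure of the union. Only vanishing leakage is used in this step, which is consistent with the earlier remark that trace secrecy suffices for the converse.
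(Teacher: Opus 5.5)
Your proposal is correct and follows the paper's proof. It uses the same chain $\mathcal S_n\subseteq\mathcal R_n$ via constant $U$ and $I(V_1;V_2)=0$, achievability by applying \cref{thm:main} and \cref{prop:inactive-user} to the super-channel $W^{\otimes n}$, and a converse with $V_i=M_i$ and the actual encoder, with a singleton message at axis points. The one step you leave implicit, which the paper spells out, is that the unused message must be fixed to a value chosen by averaging, so that the surviving receiver's conditional error and leakage still vanish; keeping $(1-P_{e,i})\log|\cM_i|$ on the left of Fano, as you do, is only a harmless variant of the paper's use of \eqref{reg-eq-rate-bounded}.
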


\begin{proof}
Fix $n$ and a state entering $\mathcal R_n(W)$.  Regard
$W^{\otimes n}$ as one cq super-channel with input alphabet
$\mathcal X^n$ and output $B_1^nB_2^n$.  Applying the inner bound from \cref{thm:main}
to independent repetitions of this super-channel achieves every point of
$\mathcal R_n(W)$ up to an arbitrarily small reduction in each positive
rate; \cref{prop:inactive-user} covers the case when one rate is zero.  Since one super-channel
use consists of $n$ physical uses,
\begin{equation}
 \overline{\bigcup_{n\geq1}\frac1n\mathcal R_n(W)}
 \subseteq\mathcal C_{\mathrm{conf}}(W).
 \label{reg-eq-achievable-inclusion}
\end{equation}

For the converse, consider an arbitrary $n$-block code with independent
uniform messages.  Let $\widehat M_i$ be receiver $i$'s decoder output and
put
\begin{align*}
 e_{i,n}&=\Pr\{\widehat M_i\neq M_i\},\\
 \ell_{1,n}&=I(M_1;B_2^n|M_2)_\Omega,\qquad
 \ell_{2,n}=I(M_2;B_1^n|M_1)_\Omega,\\
 f_{i,n}&=h_2(e_{i,n})
 +e_{i,n}\log\max\{1,|\cM_i|-1\}.
\end{align*}
Data processing followed by Fano's inequality gives
\begin{equation}
 H(M_i|B_i^n)_\Omega
 \leq H(M_i|\widehat M_i)\leq f_{i,n}.
 \label{reg-eq-fano}
\end{equation}
Adding and subtracting the corresponding leakage therefore gives
\begin{align}
 \log|\cM_1|
 &\leq I(M_1;B_1^n)_\Omega-I(M_1;B_2^n|M_2)_\Omega
       +f_{1,n}+\ell_{1,n},
 \label{reg-eq-converse-one}\\
 \log|\cM_2|
 &\leq I(M_2;B_2^n)_\Omega-I(M_2;B_1^n|M_1)_\Omega
       +f_{2,n}+\ell_{2,n}.
 \label{reg-eq-converse-two}
\end{align}
Choose $V_1=M_1$, $V_2=M_2$, and let $q(x^n|v_1,v_2)$ be the
actual stochastic encoder.  The two differences in
\cref{reg-eq-converse-one,reg-eq-converse-two} are thus evaluated on one
common state of the form \cref{reg-eq-operational-state}.

The rates of every reliable sequence are bounded.  Indeed,
\cref{reg-eq-fano} and $I(M_i;B_i^n)\leq n\log\dim B_i$ imply
\begin{equation}
 (1-e_{i,n})\log|\cM_i|
 \leq n\log\dim B_i+h_2(e_{i,n}).
 \label{reg-eq-rate-bounded}
\end{equation}
Consequently $f_{i,n}/n\to0$.  Conditional strong secrecy gives
$\ell_{i,n}\to0$.  If both limiting rates are positive, then, for all
sufficiently large $n$,
\[
 r_{i,n}:=\log|\cM_i|-f_{i,n}-\ell_{i,n}\ge0,\qquad i=1,2.
\]
Equations~\eqref{reg-eq-converse-one} and
\eqref{reg-eq-converse-two} show that
$(r_{1,n},r_{2,n})\in\mathcal S_n(W)$.  Dividing by $n$ and taking the
closure yields
\begin{equation}
 \mathcal C_{\mathrm{conf}}(W)
 \subseteq\overline{\bigcup_{n\geq1}\frac1n\mathcal S_n(W)}.
 \label{reg-eq-converse-inclusion}
\end{equation}
If one limiting rate is zero, the corresponding message may be replaced by
a singleton.  Averaging the other receiver's conditional error and leakage
over the unused message selects a fixed value for which both still vanish.
The associated auxiliary is then constant, and its endpoint in
\cref{reg-eq-operational-one,reg-eq-operational-two} equals zero.  This
also covers the boundary axes in \cref{reg-eq-converse-inclusion}.
Finally, $\mathcal S_n(W)\subseteq\mathcal R_n(W)$ by taking $U$
constant and using $I(V_1;V_2)=0$.  Combining this with
\cref{reg-eq-achievable-inclusion,reg-eq-converse-inclusion} proves
\cref{reg-eq-exact-capacity}.
\end{proof}

We next derive a one-use outer inequality with an unrestricted quantum
auxiliary.  

\subsection{An outer bound with a quantum auxiliary}
Let
$\mathfrak Q(W)$ be the class of states
\begin{align}
 \sigma^{UV_1V_2XB_1B_2}
 =\sum_{v_1,v_2,x}&p(v_1)p(v_2)p(x|v_1,v_2)
 \ketbra{v_1,v_2,x}\nonumber\\[-1mm]
 &\otimes\sigma_{v_1,v_2,x}^{U}\otimes\rho_x^{B_1B_2},
 \label{reg-eq-quantum-aux-state}
\end{align}
where $U$ is any finite-dimensional quantum system.  Thus the current
channel output is conditionally independent of $UV_1V_2$ given $X$.
The systems $U$ and $X$ need not be conditionally independent given
$(V_1,V_2)$.

\begin{lemma}[Quantum sum identity]
\label{reg-lem-csiszar}
For every state on $ADB_1^nB_2^n$, define
$U_i=(B_1^{i-1},B_{2,i+1}^n)$.  Then
\begin{align}
 I(A;B_1^n|D)-I(A;B_2^n|D)
 =\sum_{i=1}^n\bigl[&I(A;B_{1,i}|D,U_i)\nonumber\\[-1mm]
                    &-I(A;B_{2,i}|D,U_i)\bigr].
 \label{reg-eq-csiszar}
\end{align}
\end{lemma}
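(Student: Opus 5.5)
The argument is the Csiszár sum identity, run as a telescoping sum over the hybrid systems \(T_i=(B_1^{i},B_{2,i+1}^n)\) for \(i=0,\ldots,n\). Since every term is a conditional quantum mutual information and the chain rule \(I(A;XY|D)=I(A;X|D)+I(A;Y|DX)\) holds for arbitrary quantum systems, the classical proof carries over verbatim. We never need a classical conditioning variable or a Markov structure; only entropy additivity identities are used.

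We define
\[
 f(i):=I(A;B_1^{i}B_{2,i+1}^n|D),\qquad i=0,\ldots,n.
\]
At the endpoints we have \(f(n)=I(A;B_1^n|D)\) and \(f(0)=I(A;B_2^n|D)\). The left side of \eqref{reg-eq-csiszar} therefore telescopes as
\[
 \sum_{i=1}^n\bigl[f(i)-f(i-1)\bigr].
\]

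Next we expand both \(f(i)\) and \(f(i-1)\) around the common system \(U_i=(B_1^{i-1},B_{2,i+1}^n)\) using the chain rule for conditional quantum mutual information:
\begin{align*}
 f(i)&=I(A;U_i|D)+I(A;B_{1,i}|D,U_i),\\
 f(i-1)&=I(A;U_i|D)+I(A;B_{2,i}|D,U_i).
\end{align*}
Here \(f(i)\) involves \(B_1^{i}B_{2,i+1}^n=U_iB_{1,i}\), and \(f(i-1)\) involves \(B_1^{i-1}B_{2,i}^n=U_iB_{2,i}\). The term \(I(A;U_i|D)\) cancels in each difference, which gives exactly the summand in \eqref{reg-eq-csiszar}.

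The only point that needs care is justifying the chain rule itself in the fully quantum setting, where \(D\), \(U_i\) and the outputs may all be quantum and entangled. This follows by writing each conditional mutual information as a combination of von Neumann entropies, \(I(A;X|Y)=H(AY)+H(XY)-H(Y)-H(AXY)\), and checking that the terms sum correctly. Everything is finite-dimensional, so all entropies are finite and no differences of infinities arise. I do not expect any real obstacle; the main thing to get right is the index bookkeeping at the endpoints \(i=1\) and \(i=n\), where \(B_1^{0}\) and \(B_{2,n+1}^n\) are taken to be trivial systems.
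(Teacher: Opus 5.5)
Your proof is correct and is essentially the paper's argument: both telescope over the hybrid systems $B_1^{i}B_{2,i+1}^n$ and split off the common system $U_i$. The only difference is that the paper telescopes the conditional entropies $H(\cdot\,|D)$ and $H(\cdot\,|AD)$ separately and then subtracts, whereas you telescope the conditional mutual information directly via the chain rule, which amounts to the same computation.
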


\begin{proof}
The entropy chain rule gives
\begin{align*}
 \sum_{i=1}^n\bigl[&H(B_{1,i}|D,B_1^{i-1},B_{2,i+1}^n)
 -H(B_{2,i}|D,B_1^{i-1},B_{2,i+1}^n)\bigr]\\
 &=H(B_1^n|D)-H(B_2^n|D).
\end{align*}
The same identity holds with $AD$ in place of $D$.  Subtracting the two
identities and using the definition of conditional mutual information
proves \cref{reg-eq-csiszar}.
\end{proof}

\begin{theorem}[Common-quantum-auxiliary outer bound]
\label{reg-thm-quantum-aux-outer}
The region $\mathcal C_{\mathrm{conf}}(W)$ is contained in the closure of the union,
over $\sigma\in\mathfrak Q(W)$, of the nonnegative pairs satisfying
\begin{align}
 R_1\leq\min\bigl\{&I(V_1;B_1|U)-I(V_1;B_2|U),\nonumber\\[-1mm]
                   &I(V_1;B_1|V_2U)-I(V_1;B_2|V_2U)\bigr\},
 \label{reg-eq-outer-one}\\
 R_2\leq\min\bigl\{&I(V_2;B_2|U)-I(V_2;B_1|U),\nonumber\\[-1mm]
                   &I(V_2;B_2|V_1U)-I(V_2;B_1|V_1U)\bigr\}.
 \label{reg-eq-outer-two}
\end{align}
All four quantities are evaluated on the same state $\sigma$.
\end{theorem}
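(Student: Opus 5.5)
Our plan is to start from the converse part of \cref{reg-thm-exact-capacity}, single-letterize each of the four differences with \cref{reg-lem-csiszar}, and then introduce a uniform time-sharing index $T$. For an arbitrary $n$-block code with independent uniform messages, \eqref{reg-eq-converse-one} gives
\[
 \log|\cM_1|\le I(M_1;B_1^n)-I(M_1;B_2^n|M_2)+f_{1,n}+\ell_{1,n}.
\]
We will derive two upper bounds from this inequality. First, independence of $M_1,M_2$ gives $I(M_1;B_2^n|M_2)=I(M_1;B_2^nM_2)\ge I(M_1;B_2^n)$, so $\log|\cM_1|\le I(M_1;B_1^n)-I(M_1;B_2^n)+o(n)$. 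Second, $I(M_1;B_1^n)\le I(M_1;B_1^nM_2)=I(M_1;B_1^n|M_2)$, so $\log|\cM_1|\le I(M_1;B_1^n|M_2)-I(M_1;B_2^n|M_2)+o(n)$. The bounds for $R_2$ follow in the same way.

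Next we apply \cref{reg-lem-csiszar}. We take $A=M_1$ with $D$ trivial in the first bound and $D=M_2$ in the second. For $R_2$ we take $A=M_2$ and $D$ equal to trivial or $M_1$. The key point is that the lemma produces the same $U_i=(B_1^{i-1},B_{2,i+1}^n)$ in every application, because $U_i$ depends only on the ordering of the outputs and not on $A$ or $D$. The $R_2$ difference has the form $I(A;B_2^n|D)-I(A;B_1^n|D)$. Negating the lemma's identity gives $\sum_i[I(A;B_{2,i}|D,U_i)-I(A;B_{1,i}|D,U_i)]$ with the same $U_i$. As a result, all four bounds become sums over $i$ of single-letter differences conditioned on $(U_i,\cdot)$.

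We then set $T$ uniform on $[n]$, independent of everything else, and define $U=(T,U_T)$, $V_j=M_j$, $X=X_T$, $B_j=B_{j,T}$. $U$ is then a quantum system: classical in $T$ and quantum in $B_1^{T-1}B_{2,T+1}^n$. Each sum equals $n$ times the corresponding single-letter quantity conditioned on $U$, with $V_2$ or $V_1$ added where needed. It remains to verify that the joint state lies in $\mathfrak Q(W)$. $V_1,V_2$ are independent and uniform, and $X$ is drawn given $(V_1,V_2)$ through the encoder. Conditioned on $X_i=x$, the product structure of $\rho_{x^n}$ makes $B_{1,i}B_{2,i}$ equal to $\rho_x^{B_1B_2}$ in tensor product with all other systems. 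This holds for the other outputs, the messages, and $T$, since $T$ is independent and uniform. The reduced state on $UV_1V_2X$, conditioned on $(v_1,v_2,x)$, defines $\sigma^U_{v_1,v_2,x}$. The classical registers $V_1,V_2,X$ can be made orthogonal because $U$ is conditioned on them. This yields the form \eqref{reg-eq-quantum-aux-state}.

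To conclude, we divide by $n$, note that $f_{i,n}/n\to0$ and $\ell_{i,n}\to0$, and take the closure. When one rate is zero, the singleton-message argument from the proof of \cref{reg-thm-exact-capacity} applies. The main obstacle is the membership verification. We must make sure that conditioning on $X_i$ really decouples $B_{1,i}B_{2,i}$ from the quantum auxiliary $B_1^{i-1}B_{2,i+1}^n$ together with the messages. This requires the encoder to act only on the classical input, with the channel memoryless, which is the case for cq channels. We must also confirm that $U$ may remain correlated with $X$ given $(V_1,V_2)$, which the definition of $\mathfrak Q(W)$ allows.
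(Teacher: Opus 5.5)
Your proposal is correct and follows essentially the same route as the paper. It uses message independence to obtain both the unconditioned and the $M_{\bar\imath}$-conditioned differences from \eqref{reg-eq-converse-one}--\eqref{reg-eq-converse-two}, then applies \cref{reg-lem-csiszar} four times with the common history $U_i=(B_1^{i-1},B_{2,i+1}^n)$ (negated for $M_2$), time-shares with $U=(T,U_T)$, checks membership in $\mathfrak Q(W)$ via memorylessness, and handles the axes by the singleton reduction. One minor attribution in the membership check: the product form given $X=X_T=x$ holds because $\rho_x^{B_1B_2}$ does not depend on $t$, not because $T$ stays uniform given $X$, which it need not.
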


\begin{proof}
Using the notation from the converse proof of \cref{reg-thm-exact-capacity}, recall that we have \eqref{reg-eq-converse-one} and \eqref{reg-eq-converse-two}:
\begin{align}
 \log|\cM_1|
 &\leq I(M_1;B_1^n)_\Omega-I(M_1;B_2^n|M_2)_\Omega
       +f_{1,n}+\ell_{1,n},\\
 \log|\cM_2|
 &\leq I(M_2;B_2^n)_\Omega-I(M_2;B_1^n|M_1)_\Omega
       +f_{2,n}+\ell_{2,n}.
\end{align}

Giving receiver 1 the side information $M_2$ cannot worsen their decoder, hence
\begin{equation}
 \log|\cM_1|
 \leq I(M_1;B_1^n|M_2)-I(M_1;B_2^n|M_2)
       +f_{1,n}+\ell_{1,n},
 \label{reg-eq-conditioned-one}
\end{equation}
and the symmetric inequality conditioned on $M_1$ also holds.

Use Lemma~\ref{reg-lem-csiszar} with $A=M_1$ and, respectively, $D$ empty and
$D=M_2$. Since the messages are independent,
\[
 I(M_1;B_2^n)\le I(M_1;B_2^n|M_2),\qquad
 I(M_2;B_1^n)\le I(M_2;B_1^n|M_1).
\]
Replacing the subtracted leakage terms by these unconditional
quantities gives the corresponding unconditioned
mutual-information differences. For message two, take the negative of
\cref{reg-eq-csiszar}, with $A=M_2$ and, respectively, $D$ empty and
$D=M_1$.  All four applications use the same
\begin{equation}
 U_i=(B_1^{i-1},B_{2,i+1}^n).
 \label{reg-eq-common-history}
\end{equation}
Let $T$ be uniform on $\{1,\ldots,n\}$ and independent of the code, and set
\begin{equation}
 U=(T,U_T),\quad V_1=M_1,\quad V_2=M_2,
 \quad X=X_T,\quad B_j=B_{j,T}.
 \label{reg-eq-time-shared-outer-state}
\end{equation}
Formally, \(U\) is the classical-\(T\)-controlled direct sum of the
different history Hilbert spaces; this is a single finite-dimensional
quantum register for every fixed blocklength.
The four normalized sums supplied by Lemma~\ref{reg-lem-csiszar} are
the four differences in \cref{reg-eq-outer-one,reg-eq-outer-two}.

It remains to verify the state class.  Conditioned on
$(V_1,V_2,X)=(v_1,v_2,x)$, memorylessness makes the history in
\cref{reg-eq-common-history} and the current output a product
$\sigma_{v_1,v_2,x}^{U}\otimes\rho_x^{B_1B_2}$.  Thus the time-shared
state has the form \cref{reg-eq-quantum-aux-state}.  Divide the four
finite-block inequalities by $n$, use
$f_{i,n}/n\to0$ and $\ell_{i,n}/n\to0$.  When both limiting rates are
positive, every \((R_1-\eta,R_2-\eta)\), for fixed \(\eta>0\), lies in
one of the displayed finite-block rectangles for all sufficiently large
blocklengths.  For an axis point, replace the unused message by a singleton:
averaging over that message selects a fixed value for which the surviving
receiver's conditional error and leakage still vanish.  The unused
auxiliary is then constant, so both of its outer-bound differences vanish.
Taking \(\eta\downarrow0\) and the closure proves the claim
on the full nonnegative quadrant.
\end{proof}

\begin{remark}
The register $U$ includes quantum output histories and its dimension may grow
with $n$.  Accordingly, \cref{reg-thm-quantum-aux-outer} is stated with an
unrestricted finite-dimensional quantum auxiliary.
\end{remark}

\section{Applications to cq broadcast channels}
\label{sec:applications-cq}
\label{det-sec-deterministic-cq}

We apply the coding theorem to two classes of channels for which the
confidential-message capacity region can be characterized. We first
consider degraded cq broadcast channels, then deterministic classical
broadcast channels and the classical Blackwell example. Throughout this
section, the encoder chooses symbols from a fixed classical alphabet and
each message must remain secret from the unintended receiver, who knows
its own message.

\subsection{Degraded cq broadcast channels}
\label{sec:degr-cq}

Let \(W_i:x\mapsto\rho_x^{B_i}\), \(i=1,2\), denote the marginal
channels of \(W:x\mapsto\rho_x^{B_1B_2}\). We call \(W\)
\emph{degraded with stronger receiver \(B_1\)} if there is a channel
\(\mathcal D:B_1\to B_2\) such that
\begin{equation}
 \rho_x^{B_2}=\mathcal D(\rho_x^{B_1})
 \qquad\text{for every }x\in\mathcal X.
 \label{eq:degr-cq-condition}
\end{equation}
This condition concerns the two marginal channels. Receiver $B_1$ can
simulate receiver $B_2$'s observations and hence its decoder. As a result,
receiver $B_2$ cannot receive a positive-rate message that remains secret
from receiver $B_1$; only the first confidential message can be active.

\begin{theorem}[Confidential capacity of a degraded cq broadcast channel]
\label{thm:degr-cq-capacity}
If \eqref{eq:degr-cq-condition} holds, then the conditional
strong-secrecy capacity region is
\begin{equation}
 \mathcal C_{\mathrm{conf}}(W)
 =
 \left\{
 (R_1,R_2):
 R_2=0,\quad
 0\le R_1\le
 \max_{p_X}\bigl[I(X;B_1)-I(X;B_2)\bigr]
 \right\}.
 \label{eq:degr-cq-capacity}
\end{equation}
The mutual informations in \eqref{eq:degr-cq-capacity} are evaluated on
\(\sum_x p_X(x)\ketbra{x}\otimes\rho_x^{B_1B_2}\).
\end{theorem}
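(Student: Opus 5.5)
The proof splits into achievability and converse, and both reduce to earlier results once the degrading map \(\mathcal D\) is used.

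\emph{Achievability.} Fix \(p_X\) and apply \cref{prop:inactive-user} with \(U\) constant, \(V_1=X\), \(V_2\) constant, and the identity prefix. Then \(I(V_1;V_2B_2)=I(X;B_2)\), so every \(R_1<I(X;B_1)-I(X;B_2)\) is achievable with \(R_2=0\). Taking the closure and maximizing over \(p_X\) gives the stated segment. The maximum exists because the objective is continuous on the simplex. It is nonnegative, since a point mass gives zero and, by data processing through \(\mathcal D\), the difference is nonnegative for every \(p_X\).

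\emph{Converse.} I would use the regularized characterization of \cref{reg-thm-exact-capacity}, or the finite-block Fano inequalities \eqref{reg-eq-converse-one}--\eqref{reg-eq-converse-two} directly. The steps are as follows.

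First, show \(R_2=0\). By \eqref{reg-eq-converse-two} together with independence of the messages,
\[
\log|\cM_2|\le I(M_2;B_2^n|M_1)-I(M_2;B_1^n|M_1)+f_{2,n}+\ell_{2,n}.
\]
Here I replace \(I(M_2;B_2^n)\) by the conditional version: by independence, \(I(M_2;B_2^n)\le I(M_2;B_2^nM_1)=I(M_2;B_2^n|M_1)\). Since \(B_2^n\) is obtained from \(B_1^n\) by \(\mathcal D^{\otimes n}\), applied conditionally on \((M_1,M_2)\) at the level of the marginal states, data processing makes the difference nonpositive. A subtle point is that \eqref{eq:degr-cq-condition} concerns only the marginals. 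This suffices, because the conditional mutual informations depend only on the states \(\rho_{m_1,m_2}^{B_j^n}\), and these are mixtures of \(\rho_{x^n}^{B_j^n}\), to which \(\mathcal D^{\otimes n}\) applies linearly. Dividing by \(n\) gives \(R_2=0\).

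Second, bound \(R_1\) by a single-letter quantity. From \eqref{reg-eq-conditioned-one},
\[
\log|\cM_1|\le I(M_1;B_1^n|M_2)-I(M_1;B_2^n|M_2)+o(n).
\]
Now condition on the full input \(X^n\) and use the Markov structure \((M_1,M_2)\to X^n\to B_1^nB_2^n\) to write
\[
I(M_1;B_j^n|M_2)=I(M_1X^n;B_j^n|M_2)-I(X^n;B_j^n|M_1M_2).
\]
The difference of the terms \(I(X^n;B_j^n|M_1M_2)\) is \(\ge0\) by degradedness. Therefore
\[
\log|\cM_1|\le I(X^n;B_1^n|M_2)-I(X^n;B_2^n|M_2)+o(n).
\]
It then remains to single-letterize \(\max_{p_{X^n}}[I(X^n;B_1^n)-I(X^n;B_2^n)]\le n\max_{p_X}[I(X;B_1)-I(X;B_2)]\), after averaging over \(M_2\).

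\emph{Main obstacle.} This single-letterization is the hardest step. For a degraded cq channel I would argue the standard wiretap additivity. Write \(B_j^n=B_{j,1}\tilde B_j\), where \(\tilde B_j\) collects the remaining \(n-1\) outputs, and use the chain rule to get
\[
I(X^n;B_1^n)-I(X^n;B_2^n)=\bigl[I(X^n;B_{1,1})-I(X^n;B_{2,1})\bigr]+\bigl[I(X^n;\tilde B_1|B_{1,1})-I(X^n;\tilde B_2|B_{2,1})\bigr].
\]
Comparing \(I(X^n;\tilde B_1|B_{1,1})\) with \(I(X^n;\tilde B_1|B_{2,1})\) requires the degraded structure, since \(B_{2,1}\) is a degradation of \(B_{1,1}\) and \(\mathcal D\) can be applied letterwise. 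The fallback is the conditional-entropy formulation: write the difference as \(H(B_1^n)-H(B_2^n)-[H(B_1^n|X^n)-H(B_2^n|X^n)]\). The second bracket is additive, so it suffices to show \(H(B_1^n)-H(B_2^n)\) is subadditive across the letters. This follows by applying \(\mathcal D\) one letter at a time and using strong subadditivity in the form \(H(B_{1,1}C)-H(B_{2,1}C)\le H(B_{1,1})-H(B_{2,1})\) with \(C\) the other outputs. That inequality is the entropy form of data processing for the conditional mutual information \(I(R;C|\cdot)\) under the degrading map on one register. I expect this subadditivity step to need the most care. After it, time sharing the letters shows the bound is attained by a single letter, which gives \eqref{eq:degr-cq-capacity}.
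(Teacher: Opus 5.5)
Your proposal is correct and follows essentially the paper's route: the same inactive-user achievability, the same chain-rule and conditional-data-processing reduction from $M_1$ to $X^n$, and the same split of $I(X^n;B_1^n)-I(X^n;B_2^n)$ into a subadditive unconditional part and an additive product-state part. The differences are cosmetic. You get $R_2=0$ by data processing on $I(M_2;\cdot\,|M_1)$ rather than by letting receiver~1 simulate receiver~2's decoder. You prove subadditivity of $H(B_1^n)-H(B_2^n)$ by telescoping through hybrid states, degrading one letter at a time and using $H(C|B_{1,k})\le H(C|B_{2,k})$, whereas the paper uses a Stinespring dilation $B_1\to\tilde B_2F$ and $S(F^n|\tilde B_2^n)\le\sum_t S(F_t|\tilde B_{2,t})$. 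Both rest on the same strong-subadditivity fact, so your tentative first chain-rule decomposition is not needed.
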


\begin{proof}
For achievability, take \(U\) and \(V_2\) to be constant and set
\(V_1=X\) in the inactive-user specialization
\cref{prop:inactive-user}. Its bound becomes
\[
 R_1<I(X;B_1)-I(X;B_2),
\]
while no message is sent to \(B_2\). Stochastic prefixing cannot enlarge
the displayed maximum. Indeed, for every stochastic prefix \(V-X\),
the chain rule gives
\begin{align}
 I(X;B_1)-I(X;B_2)
 ={}& I(V;B_1)-I(V;B_2)\nonumber\\
 &+I(X;B_1|V)-I(X;B_2|V),
 \label{eq:degr-prefix-elimination}
\end{align}
and the last difference is nonnegative by conditional data processing.

For the converse, let a sequence of codes have decoding
errors tending to zero and satisfy the conditional secrecy criteria, and
write \(R_{i,n}=n^{-1}\log|\cM_i|\). Reliability and
\(I(M_i;B_i^n)\le n\log\dim B_i\) imply
\(\log|\cM_i|=O(n)\), so all Fano remainders below are \(o(n)\). Since
\(B_1\) can apply \(\mathcal D^{\otimes n}\) and then use receiver~2's
decoder, it can produce an estimate of \(M_2\) with the same error
probability as receiver~2. Fano's inequality and data processing therefore
give
\begin{align}
 nR_{2,n}
 &=H(M_2|M_1)\nonumber\\
 &\le I(M_2;B_1^n|M_1)+o(n)=o(n),
 \label{eq:degr-r2-converse}
\end{align}
where the final equality is secrecy of \(M_2\) from receiver~1. Hence
\(R_2=0\).

For the other rate, reliability at receiver~1 and secrecy from receiver~2
imply
\begin{align}
 nR_{1,n}
 &\le I(M_1;B_1^n|M_2)-I(M_1;B_2^n|M_2)+o(n).
 \label{eq:degr-r1-start}
\end{align}
Fix a value of \(M_2\), and let \(X^n\) be the random channel input produced
by the stochastic encoder. The conditional \(B_2^n\)-output is obtained
from the corresponding \(B_1^n\)-output by \(\mathcal D^{\otimes n}\).
The chain rule and conditional data processing therefore yield
\begin{align}
 I(M_1;B_1^n)-I(M_1;B_2^n)
 \le I(X^n;B_1^n)-I(X^n;B_2^n).
 \label{eq:degr-message-to-input}
\end{align}
It remains to single-letterize the expression on the right.

Let \(\mathcal V:B_1\to\widetilde B_2F\) be a Stinespring isometry of
\(\mathcal D\), where the marginal states on \(\widetilde B_2\) equal
the corresponding states on \(B_2\). For an arbitrary distribution on
\(X^n\), let \(\overline\omega\) be the state obtained after applying
\(\mathcal V^{\otimes n}\) to the average \(B_1^n\)-output. Since a
conditional output given \(x^n\) is a product state,
\begin{align}
 &I(X^n;B_1^n)-I(X^n;\widetilde B_2^n)\nonumber\\
 &=S(F^n|\widetilde B_2^n)_{\overline\omega}
   -\sum_{x^n}p(x^n)
      S(F^n|\widetilde B_2^n)_{\omega_{x^n}}\nonumber\\
 &\le \sum_{t=1}^n S(F_t|\widetilde B_{2,t})_{\overline\omega}
   -\sum_{t=1}^n\sum_{x_t}p_t(x_t)
      S(F_t|\widetilde B_{2,t})_{\omega_{x_t}}\nonumber\\
 &=\sum_{t=1}^n
   \bigl[I(X_t;B_{1,t})-I(X_t;B_{2,t})\bigr]\nonumber\\
 &\le n\max_{p_X}\bigl[I(X;B_1)-I(X;B_2)\bigr].
 \label{eq:degr-cqq-single-letterization}
\end{align}
Here we used
\(S(F^n|\widetilde B_2^n)\le
  \sum_t S(F_t|\widetilde B_{2,t})\), which follows by repeated strong
subadditivity, whereas the conditional state corresponding to a fixed
\(x^n\) is a tensor product and hence gives equality. Averaging
\eqref{eq:degr-message-to-input} over \(M_2\), inserting
\eqref{eq:degr-cqq-single-letterization} into
\eqref{eq:degr-r1-start}, and taking \(n\to\infty\) proves the converse.
This is the standard single-letterization for a degraded cqq wiretap
channel; see also \cite{WinterDegraded2016}.
\end{proof}

\begin{remark}[Relation to degradable quantum-input channels]
\label{rem:degr-cq-quantum-input}
The theorem concerns a fixed classical input alphabet and secrecy from
receiver $B_2$. The quantum-input model in
\cref{fq-sec-operational-model} allows arbitrary quantum preparations
and requires secrecy from the joint system $C=B_2E$, where $E$ is the
Stinespring environment. Its degraded case assumes that the full complementary channel can be simulated from receiver $B_1$:
$\mathcal N_1^c=\mathcal D\circ\mathcal N_1$ on every input state.
Under this condition, \cref{fq-cor-degraded} identifies the
confidential classical capacity with the quantum capacity
$Q(\mathcal N_1)$ of the degradable marginal channel. The messages in
that corollary are still classical.

The connection between the two formulas follows from the entropy
identities for an input ensemble $\{p(x),\theta_x\}$. Define
$\bar\theta=\sum_xp(x)\theta_x$ and
\[
 I_c(\theta,\mathcal N_1)
 :=H(\mathcal N_1(\theta))-H(\mathcal N_1^c(\theta)).
\]
For the induced cq state on $XB_1C$,
\begin{equation}
 I(X;B_1)-I(X;C)
 =I_c(\bar\theta,\mathcal N_1)
  -\sum_xp(x)I_c(\theta_x,\mathcal N_1).
 \label{eq:degr-ensemble-coherent-information}
\end{equation}
If every $\theta_x$ is pure, its Stinespring output on $B_1C$ is pure,
so each term in the sum vanishes. A pure-state decomposition of an
input maximizing coherent information therefore attains
$Q(\mathcal N_1)$ in the quantum-input model. For a fixed cq alphabet,
the available preparations need not include such an ensemble and
$B_2$ need not be a complementary output. Its capacity is given by
\eqref{eq:degr-cq-capacity}.
\end{remark}

\subsection{Deterministic classical broadcast channels}
\label{det-subsec-classical}

Fix maps $f_i:\mathcal X\to\mathcal Y_i$, $i=1,2$, and consider the
classical channel:
\begin{equation}
 \mathcal W:x\longmapsto(f_1(x),f_2(x)).
 \label{det-eq-channel}
\end{equation}
To apply our cq coding theorem, represent its outputs by the orthogonal
states
\[
 \ketbra{f_1(x)}^{B_1}\otimes\ketbra{f_2(x)}^{B_2}.
\]
This representation is operationally equivalent to the classical channel.
The classical confidential-message region was characterized in
\cite{CaiLam2000}; see also \cite{GoldfeldKramerPermuter2017}. We recover
it under conditional strong secrecy from the main coding theorem.

For an input distribution \(p_X\), let
\begin{equation}
 Y_1=f_1(X),\qquad Y_2=f_2(X),
 \label{det-eq-output-rvs}
\end{equation}
and define the downward-closed rectangle
\begin{equation}
 \mathcal R_{\mathrm{det}}(p_X)
 :=\left\{(R_1,R_2):
 \begin{array}{l}
 0\le R_1\le H_{p_X}(Y_1|Y_2),\\
 0\le R_2\le H_{p_X}(Y_2|Y_1)
 \end{array}\right\}.
 \label{det-eq-region-p}
\end{equation}

\begin{theorem}[Confidential classical capacity of a deterministic broadcast channel]
\label{det-thm-capacity}
The confidential classical capacity region of the channel
\eqref{det-eq-channel} is
\begin{equation}
 \mathcal C_{\mathrm{conf}}(\mathcal W)
 =\bigcup_{p_X}\mathcal R_{\mathrm{det}}(p_X).
 \label{det-eq-capacity}
\end{equation}
The union in \eqref{det-eq-capacity} is closed and convex, so an additional
time-sharing variable is unnecessary.
\end{theorem}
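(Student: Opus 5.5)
Achievability: fix \(p_X\) and apply \cref{thm:main} with \(U\) constant, \(V_1=Y_1=f_1(X)\), \(V_2=Y_2=f_2(X)\), and the prefix \(p(x|y_1,y_2)\) given by the conditional law of \(X\) given \((f_1(X),f_2(X))\). The channel is deterministic, so \(B_1\) is \(Y_1\) and \(B_2\) is \(Y_2\). This gives
\[
 I(V_1;B_1)=H(Y_1),\qquad I(V_1;V_2)=I(Y_1;Y_2),\qquad I(V_1;B_2|V_2)=0,
\]
and \eqref{eq:main-rate-one} becomes \(R_1<H(Y_1)-I(Y_1;Y_2)=H(Y_1|Y_2)\). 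Symmetrically, \eqref{eq:main-rate-two} becomes \(R_2<H(Y_2|Y_1)\). Axis points with one rate zero follow from \cref{prop:inactive-user} with the same choices; the right-hand side there is again \(H(Y_1)-I(Y_1;Y_2Y_2)=H(Y_1|Y_2)\). Since the capacity region is closed, the whole rectangle \(\mathcal R_{\mathrm{det}}(p_X)\) is achievable.

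Converse: use the outer bound \cref{reg-thm-quantum-aux-outer}. In the classical setting, a simpler direct route is the second term of \eqref{reg-eq-outer-one},
\[
 R_1\le I(V_1;Y_1|V_2U)-I(V_1;Y_2|V_2U)\le I(V_1;Y_1|V_2Y_2U)\le H(Y_1|Y_2U).
\]
The middle step uses the chain rule: \(I(V_1;Y_1|V_2U)\le I(V_1;Y_1Y_2|V_2U)=I(V_1;Y_2|V_2U)+I(V_1;Y_1|Y_2V_2U)\). Likewise \(R_2\le H(Y_2|Y_1U)\). Here \(U\) may be a quantum register. However, \(Y_1,Y_2\) are functions of \(X\), and the state \eqref{reg-eq-quantum-aux-state} has \(B_1B_2\) classical, so the conditional entropies remain meaningful. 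The step \(I(V_1;Y_1|Y_2V_2U)\le H(Y_1|Y_2U)\) holds because \(Y_1\) is classical, so \(H(Y_1|Y_2V_2V_1U)\ge0\), and because conditioning on \(V_2\) reduces entropy. The key remaining step is removing \(U\): show that
\[
 H(Y_1|Y_2U)\le H(Y_1|Y_2)_{p_X}
\]
by strong subadditivity, where \(p_X\) is the marginal induced by \(\sigma\). Both bounds are then evaluated with the same \(p_X\), which places the pair in \(\mathcal R_{\mathrm{det}}(p_X)\). Avoiding any time-sharing convexification is the point that needs care, and here it is automatic because both bounds are dominated by the same \(p_X\).

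Closedness and convexity: convexity follows from concavity of \(p_X\mapsto H(Y_1|Y_2)\) and \(p_X\mapsto H(Y_2|Y_1)\). Each is a conditional entropy of a deterministic function of \(X\), hence concave in \(p_X\). Given two rectangles, the mixture \(p_X=\lambda p+(1-\lambda)p'\) yields corner values at least the convex combinations of the two corners, so convex combinations of points stay in the union. Closedness follows from continuity on the compact simplex: if \((R_1^{(k)},R_2^{(k)})\in\mathcal R_{\mathrm{det}}(p^{(k)})\) converges, pass to a convergent subsequence \(p^{(k)}\to p\) and use continuity of the entropies. I expect this part to be routine. The main subtlety is the quantum-auxiliary step in the converse, and it disappears once one uses the classical structure and conditioning-reduces-entropy.
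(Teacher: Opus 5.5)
Your proof is correct. The achievability part and the closedness/convexity argument coincide with the paper's, but your converse takes a different route.

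The paper argues directly on an arbitrary $n$-block code. It gives $Y_2^n$ to receiver~1 and uses the secrecy constraint to obtain $\log|\mathcal M_1|\le H(Y_1^n|Y_2^n)+\delta_{1,n}+\eta_{1,n}$. It then splits this into $\sum_t H(Y_{1,t}|Y_{2,t})$ and single-letterizes both bounds simultaneously by concavity of conditional entropy at the averaged input law $\bar p_X$.

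You instead specialize the already-established outer bound of \cref{reg-thm-quantum-aux-outer} and keep only its $V_2$-conditioned term. You then strip $V_1,V_2$ and the quantum register $U$ using three facts: the chain rule, nonnegativity of $H(Y_1|Y_2V_1V_2U)$ for classical $Y_1$, and strong subadditivity. The time-sharing index $T$ hidden inside $U$ is removed by that same conditioning step, which plays the role of the paper's concavity step. The $X$-marginal of $\sigma$ is exactly the paper's $\bar p_X$.

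Your route is shorter once the earlier machinery is available. It also shows that the conditioned terms of the quantum-auxiliary outer bound are tight for deterministic channels. The paper's route is elementary and self-contained, with no Csiszár sum identity and no quantum auxiliary. More importantly, it is the template reused in \cref{prop:converse-deterministic} for arbitrary quantum encoders, including inputs entangled across channel uses. \cref{reg-thm-quantum-aux-outer} does not cover that setting, since it is formulated for finite classical input alphabets.

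One point to make explicit: the outer bound only gives containment in the closure of the union over $\sigma$. To conclude, you need the closedness of $\bigcup_{p_X}\mathcal R_{\mathrm{det}}(p_X)$, which you do prove in your last part.
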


\begin{proof}
\emph{Achievability.}
Fix \(p_X\), take \(U\) to be constant in \cref{thm:main}, and use the joint
law
\begin{equation}
 p(v_1,v_2,x)
 =p_X(x)\,\mathbf 1\{v_1=f_1(x),\ v_2=f_2(x)\}.
 \label{det-eq-auxiliary-law}
\end{equation}
Equivalently, \((V_1,V_2)=(Y_1,Y_2)\), with the generally correlated joint
law induced by \(p_X\), and \(p(x|v_1,v_2)\) is the corresponding conditional
distribution. Under \eqref{det-eq-channel}, \(B_i\) is a classical copy of
\(V_i\). Consequently,
\begin{align}
 I(V_1;B_1)-I(V_1;V_2)-I(V_1;B_2|V_2)
 &=H(Y_1)-I(Y_1;Y_2)\nonumber\\
 &=H(Y_1|Y_2),
 \label{det-eq-ach-one}
\end{align}
because \(I(V_1;B_2|V_2)=0\). The symmetric calculation gives
\begin{equation}
 I(V_2;B_2)-I(V_1;V_2)-I(V_2;B_1|V_1)
 =H(Y_2|Y_1).
 \label{det-eq-ach-two}
\end{equation}
If both conditional entropies are positive, every interior point of
\(\mathcal R_{\mathrm{det}}(p_X)\) is achievable by the
likelihood-selected code of \cref{thm:main}. If either conditional entropy
vanishes, the corresponding message is inactive and
\cref{prop:inactive-user} gives the remaining axis. Taking the closure
therefore gives the entire rectangle in every case.

\emph{Converse.}
Consider an arbitrary blocklength-\(n\) stochastic code. Let \(M_1,M_2\) be
independent and uniform, let \(X^n\) be the channel input produced by the
encoder, and let \(Y_i^n=f_i^{\otimes n}(X^n)\). Denote the two average
decoding errors by \(\epsilon_{i,n}\), and put
\begin{equation}
 \delta_{1,n}:=I(M_1;Y_2^n|M_2),\qquad
 \delta_{2,n}:=I(M_2;Y_1^n|M_1).
 \label{det-eq-leakages}
\end{equation}
Reliability and the finite output alphabets first imply
\begin{equation}
 (1-\epsilon_{i,n})\log|\mathcal M_i|
 \le n\log|\mathcal Y_i|+h_2(\epsilon_{i,n}),
 \label{det-eq-rate-bound}
\end{equation}
so \(\log|\mathcal M_i|=O(n)\). Fano's inequality then supplies
quantities \(\eta_{i,n}=o(n)\) such that
\begin{equation}
 H(M_i|Y_i^n,M_{\bar i})\le H(M_i|Y_i^n)\le\eta_{i,n},
 \qquad \bar 1=2,\ \bar 2=1.
 \label{det-eq-fano}
\end{equation}
Giving $Y_2^n$ to receiver $1$ as additional information and using the
secrecy constraint gives
\begin{align}
 \log|\mathcal M_1|
 &=H(M_1|M_2)\notag\\
 &\le I(M_1;Y_1^nY_2^n|M_2)+\eta_{1,n}\notag\\
 &=I(M_1;Y_2^n|M_2)
   +I(M_1;Y_1^n|Y_2^nM_2)+\eta_{1,n}\notag\\
 &\le H(Y_1^n|Y_2^n)+\delta_{1,n}+\eta_{1,n}\notag\\
 &\le\sum_{t=1}^n H(Y_{1,t}|Y_{2,t})
      +\delta_{1,n}+\eta_{1,n}.
 \label{det-eq-converse-one}
\end{align}
The last step follows from the chain rule and conditioning. Applying the
same argument to receiver 2 under this code-induced input law gives
\begin{equation}
 \log|\mathcal M_2|
 \le \sum_{t=1}^n H(Y_{2,t}|Y_{1,t})
      +\delta_{2,n}+\eta_{2,n}.
 \label{det-eq-converse-two}
\end{equation}

Let \(p_t(x):=\Prb\{X_t=x\}\) be the \(t\)-th input marginal induced by
this same code, and define their average
\begin{equation}
 \bar p_X(x):=\frac1n\sum_{t=1}^n p_t(x).
 \label{det-eq-average-input}
\end{equation}
Classical conditional entropy is concave in the joint distribution. Since
the joint law of \((Y_1,Y_2)\) depends linearly on \(p_X\), this gives the two
simultaneous estimates
\begin{align}
 \frac1n\sum_{t=1}^n H_{p_t}(Y_1|Y_2)
 &\le H_{\bar p_X}(Y_1|Y_2),
 \label{det-eq-concavity-one}\\
 \frac1n\sum_{t=1}^n H_{p_t}(Y_2|Y_1)
 &\le H_{\bar p_X}(Y_2|Y_1).
 \label{det-eq-concavity-two}
\end{align}
Both inequalities involve the same \(\bar p_X\).
Dividing \eqref{det-eq-converse-one}--\eqref{det-eq-converse-two} by \(n\),
using \(\epsilon_{i,n}\to0\) and \(\delta_{i,n}\to0\), and taking a
convergent subsequence of \(\bar p_X\) proves the converse.

The right-hand side of \eqref{det-eq-capacity} is convex.
Indeed, if \(r\in\mathcal R_{\mathrm{det}}(p)\),
\(s\in\mathcal R_{\mathrm{det}}(q)\), and \(0\le\lambda\le1\), concavity
implies that \(\lambda r+(1-\lambda)s\) lies in
\(\mathcal R_{\mathrm{det}}(\lambda p+(1-\lambda)q)\). It is closed because
the input simplex is compact and both conditional entropies are continuous.
\end{proof}

\begin{remark}[Injective deterministic realization]
\label{det-rem-injective}
If \(x\mapsto(f_1(x),f_2(x))\) is injective, then \(X\) is a function of
\((Y_1,Y_2)\), and therefore
\begin{equation}
 H(Y_1|Y_2)=H(X|Y_2),\qquad
 H(Y_2|Y_1)=H(X|Y_1).
 \label{det-eq-injective-entropies}
\end{equation}
Without injectivity, the expressions in \cref{det-thm-capacity} should remain
in terms of \(Y_1,Y_2\); different input symbols producing the same output
pair are operationally indistinguishable.
\end{remark}

\subsection{The classical Blackwell broadcast channel}
\label{det-subsec-blackwell}

The classical Blackwell channel \cite{BlackwellCh,vanDerMeulen1977,Gelfand1977}, denoted by \(\mathcal W_{\mathrm{BW}}\),
has the input alphabet \(\{0,1,2\}\) and the deterministic output map
\begin{equation}
 0\longmapsto(0,0),\qquad
 1\longmapsto(0,1),\qquad
 2\longmapsto(1,0).
 \label{det-eq-blackwell-map}
\end{equation}

Write
\begin{equation}
 p_0:=\Prb\{X=0\},\qquad p_1:=\Prb\{X=1\},\qquad
 p_2:=1-p_0-p_1.
 \label{det-eq-blackwell-p}
\end{equation}
For \(Y_2=1\), the value of \(Y_1\) is fixed. Conditional on \(Y_2=0\),
which has probability \(1-p_1=p_0+p_2\), the probabilities of \(Y_1=0,1\)
are \(p_0/(1-p_1)\) and \(p_2/(1-p_1)\), respectively. Hence
\begin{align}
 H(Y_1|Y_2)
 &=(1-p_1)h_2\!\left(\frac{p_0}{1-p_1}\right),
 \label{det-eq-blackwell-rate-one}\\
 H(Y_2|Y_1)
 &=(p_0+p_1)h_2\!\left(\frac{p_0}{p_0+p_1}\right),
 \label{det-eq-blackwell-rate-two}
\end{align}
where a term with a zero prefactor is understood as zero. Thus
\cref{det-thm-capacity} gives the full confidential-message capacity region
\begin{equation}
 \mathcal C_{\mathrm{conf}}(\mathcal W_{\mathrm{BW}})
 =\bigcup_{p_X}
 [0,H_{p_X}(Y_1|Y_2)]\times[0,H_{p_X}(Y_2|Y_1)],
 \label{det-eq-blackwell-region}
\end{equation}
with the two entropies given by
\eqref{det-eq-blackwell-rate-one}--\eqref{det-eq-blackwell-rate-two}.
This union is already closed and convex by \cref{det-thm-capacity}.

The two axis endpoints have simple zero-error codes. Using inputs $0,2$
with equal probabilities sends one perfectly confidential bit to receiver
$1$, since receiver $2$ always observes $0$. Using inputs $0,1$ does the
same for receiver $2$. Thus $(1,0)$ and $(0,1)$ are achievable. Each
individual rate is at most one bit per channel use because the respective
receiver has a binary output. Consequently, these are the individual
confidential-capacity maxima.

\begin{proposition}[Maximum confidential sum rate of the classical Blackwell channel]
\label{det-prop-blackwell-sum}
Let \(\varphi=(1+\sqrt5)/2\) be the golden ratio. The maximum sum rate in
the confidential-capacity region of the classical Blackwell channel is
\begin{equation}
 \max_{p_X}\bigl[H(Y_1|Y_2)+H(Y_2|Y_1)\bigr]
 =2\log\varphi.
 \label{det-eq-blackwell-sum}
\end{equation}
It is attained at
\begin{equation}
 p_0=\frac1{\sqrt5},\qquad
 p_1=p_2=\frac{1-1/\sqrt5}{2},
 \label{det-eq-blackwell-optimizer}
\end{equation}
where
\begin{equation}
 H(Y_1|Y_2)=H(Y_2|Y_1)=\log\varphi.
 \label{det-eq-blackwell-symmetric-rate}
\end{equation}
Numerically, the two rates are approximately \(0.694242\) bits per use,
and their sum is approximately \(1.388484\) bits per use.
\end{proposition}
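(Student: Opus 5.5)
We maximize $F(p_0,p_1)=(1-p_1)h_2\!\left(\frac{p_0}{1-p_1}\right)+(p_0+p_1)h_2\!\left(\frac{p_0}{p_0+p_1}\right)$ over the simplex, using the expressions \eqref{det-eq-blackwell-rate-one}--\eqref{det-eq-blackwell-rate-two}. The first step is to rewrite each term as a difference of joint entropies: $H(Y_1|Y_2)=H(Y_1Y_2)-H(Y_2)=H(X)-h_2(p_1)$, since the map is injective (\cref{det-rem-injective}). Similarly $H(Y_2|Y_1)=H(X)-h_2(p_2)$. Hence
\[
 F=2H(p_0,p_1,p_2)-h_2(p_1)-h_2(p_2).
\]
This function is concave on the simplex. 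Indeed, $H(X|Y_2)$ and $H(X|Y_1)$ are conditional entropies, which are concave in the joint law and therefore in $p_X$, as already used in \eqref{det-eq-concavity-one}. Concavity means any stationary point in the interior is a global maximum. Moreover, $F$ is symmetric under $p_1\leftrightarrow p_2$, so concavity lets us restrict to $p_1=p_2=q$ and $p_0=1-2q$.

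Along this line, $F(q)=2H(1-2q,q,q)-2h_2(q)$. Expanding, $H(1-2q,q,q)=-(1-2q)\log(1-2q)-2q\log q$ and $h_2(q)=-q\log q-(1-q)\log(1-q)$. This gives
\[
 F(q)=-2(1-2q)\log(1-2q)-2q\log q+2(1-q)\log(1-q).
\]
Differentiating, up to the factor $2/\ln2$, the stationarity condition reads $2\ln(1-2q)-\ln q-\ln(1-q)=0$. Equivalently $(1-2q)^2=q(1-q)$, that is, $5q^2-5q+1=0$. The root in $(0,1/2)$ is $q=\frac{5-\sqrt5}{10}=\frac{1-1/\sqrt5}{2}$, so $p_0=1-2q=1/\sqrt5$, which matches \eqref{det-eq-blackwell-optimizer}. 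Concavity of $F$ in $q$ confirms that this is the maximum.

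The last step is to evaluate $H(Y_1|Y_2)=(1-q)h_2\!\left(\frac{p_0}{1-q}\right)$ at this point. We have $1-q=\frac{5+\sqrt5}{10}=\frac{\varphi}{\sqrt5}$ and $p_0/(1-q)=1/\varphi=\varphi-1$. Using $\varphi^2=\varphi+1$, the relevant probabilities $1/\varphi$ and $1/\varphi^2$ sum to one. So the conditional distribution is $(1/\varphi,1/\varphi^2)$, with entropy $\frac{1}{\varphi}\log\varphi+\frac{2}{\varphi^2}\log\varphi=\log\varphi\cdot\frac{\varphi+2}{\varphi^2}$. Multiplying by $1-q=\varphi/\sqrt5$ gives $\log\varphi\cdot\frac{\varphi+2}{\sqrt5\,\varphi}$. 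Since $\varphi+2=\sqrt5\,\varphi$ (check: $\sqrt5\varphi=(\sqrt5+5)/2=\varphi+2$), this equals $\log\varphi$. By symmetry the same holds for $H(Y_2|Y_1)$, which gives \eqref{det-eq-blackwell-symmetric-rate} and the sum $2\log\varphi$.

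The main obstacle is this golden-ratio algebraic simplification. The concavity argument justifying the symmetric reduction is routine given the conditional-entropy representation. Finally, the maximal sum rate over the capacity region equals this maximum, because each rectangle $\mathcal R_{\mathrm{det}}(p_X)$ has its corner as its maximal-sum point.
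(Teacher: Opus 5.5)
Your proof is correct and follows essentially the same route as the paper. Both arguments use concavity together with the $p_1\leftrightarrow p_2$ symmetry to reduce to the symmetric line, solve a one-variable stationarity condition (your $5q^2-5q+1=0$ is the paper's $F'(t)=0$ at $t=1/\sqrt5$ under $t=1-2q$), and finish with the same golden-ratio evaluation; the only difference is that you use injectivity to write the objective as $2H(X)-h_2(p_1)-h_2(p_2)$, which makes the derivative slightly easier to compute.
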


\begin{proof}
Set
\begin{equation}
 S(p_0,p_1,p_2):=H(Y_1|Y_2)+H(Y_2|Y_1).
 \label{det-eq-blackwell-objective}
\end{equation}
The function \(S\) is concave in the input distribution and invariant under
interchanging \(p_1\) and \(p_2\). Symmetrizing any input distribution
therefore cannot decrease \(S\), so a maximizer may be chosen with
\begin{equation}
 p_1=p_2=\frac{1-p_0}{2}.
 \label{det-eq-blackwell-symmetry}
\end{equation}
Writing \(t=p_0\), the two conditional entropies coincide and their sum is
\begin{equation}
 F(t)=(1+t)h_2\!\left(\frac{2t}{1+t}\right),
 \qquad 0\le t\le1.
 \label{det-eq-blackwell-one-variable}
\end{equation}
For \(0<t<1\), direct differentiation gives
\begin{equation}
 F'(t)=\log\!\left(\frac{1-t^2}{4t^2}\right).
 \label{det-eq-blackwell-derivative}
\end{equation}
Thus \(F\) increases up to \(t=1/\sqrt5\) and decreases thereafter. At
this value,
\begin{equation}
 \frac{1+t}{2}=\frac{\varphi}{\sqrt5},
 \qquad
 \frac{2t}{1+t}=\frac1\varphi,
 \label{det-eq-blackwell-golden-identities}
\end{equation}
and \(1-1/\varphi=1/\varphi^2\). Hence
\begin{align}
 \frac{1+t}{2}h_2\!\left(\frac1\varphi\right)
 &=\frac{\varphi}{\sqrt5}
   \left(\frac1\varphi+\frac2{\varphi^2}\right)\log\varphi\nonumber\\
 &=\log\varphi,
 \label{det-eq-blackwell-golden-evaluation}
\end{align}
where the final equality uses \(1+2/\varphi=\sqrt5\). This proves
\eqref{det-eq-blackwell-sum}--\eqref{det-eq-blackwell-symmetric-rate}.
\end{proof}

Figure~\ref{fig:blackwell-cq-confidential} depicts the capacity region
\eqref{det-eq-blackwell-region}. The boundary is sampled by maximizing
$wH(Y_1|Y_2)+(1-w)H(Y_2|Y_1)$ over $p_X$ for different
$0\le w\le1$. Thus the shaded region represents the full capacity formula,
with a numerical rendering of its curved boundary.

\begin{figure}[H]
\centering
\pgfplotstableread{
Rone Rtwo
0.00000000 1.00000000
0.00001950 0.99999905
0.00133473 0.99990311
0.01019904 0.99902332
0.03299736 0.99609139
0.07009533 0.99013835
0.11771578 0.98087009
0.17128661 0.96849020
0.22707328 0.95342826
0.28252238 0.93615526
0.33608943 0.91709127
0.38695588 0.89657095
0.43478081 0.87483837
0.47951845 0.85205394
0.52129458 0.82830517
0.56032658 0.80361716
0.59687317 0.77796158
0.63120363 0.75126378
0.66357919 0.72340808
0.69424191 0.69424191
0.72340808 0.66357919
0.75126378 0.63120363
0.77796158 0.59687317
0.80361716 0.56032658
0.82830517 0.52129458
0.85205394 0.47951845
0.87483837 0.43478081
0.89657095 0.38695588
0.91709127 0.33608943
0.93615526 0.28252238
0.95342826 0.22707328
0.96849020 0.17128661
0.98087009 0.11771578
0.99013835 0.07009533
0.99609139 0.03299736
0.99902332 0.01019904
0.99990311 0.00133473
0.99999905 0.00001950
1.00000000 0.00000000
}\BlackwellBoundary

\begin{tikzpicture}
\begin{axis}[
  width=0.72\linewidth,
  height=0.61\linewidth,
  axis lines=left,
  xmin=0, xmax=1.04,
  ymin=0, ymax=1.04,
  xlabel={$R_1$ (bits/use)},
  ylabel={$R_2$ (bits/use)},
  xtick={0,0.25,0.5,0.75,1},
  ytick={0,0.25,0.5,0.75,1},
  tick label style={font=\small},
  label style={font=\small},
  clip=false
]
\addplot[
  name path=BlackwellCurve,
  draw=none
] table[x=Rone,y=Rtwo] {\BlackwellBoundary};

\addplot[
  name path=BlackwellBase,
  draw=none
] coordinates {(0,0) (1,0)};

\addplot[
  blue!18
] fill between[
  of=BlackwellCurve and BlackwellBase
];

\addplot[
  very thick,
  blue!60!black,
  line join=round
] table[x=Rone,y=Rtwo] {\BlackwellBoundary};

\addplot[
  densely dashed,
  gray!65
] coordinates {
  (0.69424191,0)
  (0.69424191,0.69424191)
  (0,0.69424191)
};

\addplot[
  only marks,
  mark=*,
  mark size=2.4pt,
  black
] coordinates {(0,1) (1,0) (0.69424191,0.69424191)};

\node[
  anchor=south west,
  font=\small,
  inner sep=2pt
] at (axis cs:0.704,0.704)
  {$(\log_2\varphi,\log_2\varphi)$};

\node[
  font=\small,
  text=blue!55!black
] at (axis cs:0.43,0.34)
  {$\mathcal C_{\mathrm{conf}}(\mathcal W_{\mathrm{BW}})$};
\end{axis}
\end{tikzpicture}
\caption{Confidential classical capacity  region of the classical Blackwell
channel. The endpoints $(1,0)$ and $(0,1)$ attain one perfectly
confidential bit per use for the respective receiver. The marked
symmetric point attains the maximum sum rate $2\log_2\varphi$.}
\label{fig:blackwell-cq-confidential}
\end{figure}

The coherent extension of the Blackwell map, which accepts arbitrary
qutrit states, will be considered separately in \cref{bw-q-sec}, where we compare its mutually confidential classical and quantum capacity regions.

\section{Confidential classical communication with quantum inputs}
\label{fq-sec-quantum-input}

The preceding results considered channels with a fixed classical input
alphabet. We now allow the encoder to prepare arbitrary quantum states,
including states entangled across channel uses, while the transmitted
messages remain classical. We also strengthen the secrecy requirement:
each message must remain secret from the coalition of the unintended
receiver and the Stinespring environment. This models an adversary with
joint access to the receiver output and the channel environment; secrecy
from the two systems separately would not imply secrecy from their joint
state.

Let
\[
 W_{\mathcal N}^{A\longrightarrow B_1B_2E}
\]
be a Stinespring isometry for a finite-dimensional quantum broadcast channel
\(\mathcal N^{A\to B_1B_2}\). The complementary systems are \(B_2E\) for
receiver 1 and \(B_1E\) for receiver 2. This joint requirement is invariant
under the choice of Stinespring dilation.

\begin{figure}[t]
\centering
\begin{tikzpicture}[
  font=\small,
  message/.style={
    draw, rounded corners=1.5pt, minimum width=10mm,
    minimum height=8mm, inner sep=2pt, fill=black!2
  },
  block/.style={
    draw, rounded corners=2pt, align=center, minimum height=13mm,
    inner sep=4pt, fill=blue!4
  },
  system/.style={
    draw, rounded corners=1.5pt, minimum width=12mm,
    minimum height=8mm, inner sep=2pt, fill=black!2
  },
  decoder/.style={
    draw, rounded corners=2pt, align=center, minimum width=18mm,
    minimum height=10mm, inner sep=3pt, fill=blue!4
  },
  wire/.style={
    -{Latex[length=2.1mm,width=1.4mm]}, semithick
  },
  secrecy/.style={
    draw=black!65, dashed, rounded corners=2pt, align=center,
    text width=6.15cm, minimum height=15mm, inner sep=4pt,
    fill=black!1
  }
]
\node[message] (m1) {$M_1$};
\node[message, below=7mm of m1] (m2) {$M_2$};
\coordinate (mcenter) at ($(m1)!0.5!(m2)$);

\node[block, right=11mm of mcenter, text width=2.45cm] (encoder)
  {Stochastic block\\encoder};
\node[block, right=17mm of encoder, text width=3.25cm] (channel)
  {Broadcast Stinespring\\
   isometry $W_{\mathcal N}^{\otimes n}$\\
   $A^n\longrightarrow B_1^nB_2^nE^n$};

\node[system] (b1)
  at ([xshift=12mm,yshift=12mm]channel.east) {$B_1^n$};
\node[system] (b2)
  at ([xshift=12mm]channel.east) {$B_2^n$};
\node[system] (env)
  at ([xshift=12mm,yshift=-12mm]channel.east) {$E^n$};

\node[decoder, right=7mm of b1] (dec1) {$\mathcal D_1^{(n)}$};
\node[decoder, right=7mm of b2] (dec2) {$\mathcal D_2^{(n)}$};
\node[message, right=6mm of dec1] (hat1) {$\widehat M_1$};
\node[message, right=6mm of dec2] (hat2) {$\widehat M_2$};

\draw[wire] (m1.east) -- (encoder.west);
\draw[wire] (m2.east) -- (encoder.west);
\draw[wire] (encoder.east) --
  node[above,font=\footnotesize] {$\sigma_{m_1,m_2}^{A^n}$}
  (channel.west);
\draw[wire] (channel.east) -- (b1.west);
\draw[wire] (channel.east) -- (b2.west);
\draw[wire] (channel.east) -- (env.west);
\draw[wire] (b1.east) -- (dec1.west);
\draw[wire] (b2.east) -- (dec2.west);
\draw[wire] (dec1.east) -- (hat1.west);
\draw[wire] (dec2.east) -- (hat2.west);

\coordinate (figurecenter) at ($(m1.west)!0.5!(hat2.east)$);
\node[secrecy, anchor=north east] (sec1)
  at ([xshift=-3mm,yshift=-29mm]figurecenter)
  {\textbf{Secrecy of $M_1$}\\[-0.5mm]
   coalition $B_2^nE^n$ with side information $M_2$\\[-0.5mm]
   $I(M_1;B_2^nE^n\mid M_2)\longrightarrow0$};
\node[secrecy, anchor=north west] (sec2)
  at ([xshift=3mm,yshift=-29mm]figurecenter)
  {\textbf{Secrecy of $M_2$}\\[-0.5mm]
   coalition $B_1^nE^n$ with side information $M_1$\\[-0.5mm]
   $I(M_2;B_1^nE^n\mid M_1)\longrightarrow0$};
\end{tikzpicture}
\caption{Quantum-input confidential-message model. The two secrecy
conditions protect $M_1$ from the coalition $B_2^nE^n$ given $M_2$, and
$M_2$ from $B_1^nE^n$ given $M_1$. The displayed strong-secrecy conditions are attained by our
constructions; the operational capacity definition requires
only vanishing conditional trace leakage.}
\label{fq-fig-operational-model}
\end{figure}

\subsection{Operational model}
\label{fq-sec-operational-model}

\begin{definition}[Quantum-input confidential-message code]
\label{fq-def-code}
An \(n\)-block code consists of two finite message sets
\(\cM_1,\cM_2\), a family of density operators
\[
 \bigl\{\sigma_{m_1,m_2}^{A^n}:
 (m_1,m_2)\in\cM_1\times\cM_2\bigr\},
\]
and decoding POVMs
\(\{\Lambda_{m_i}^{(i)}:m_i\in\cM_i\}\) on \(B_i^n\),
for \(i=1,2\). Private encoder randomness is already included by allowing
the states \(\sigma_{m_1,m_2}^{A^n}\) to be mixed. For independent uniform
messages, the induced state is
\begin{align}
 \Omega_n^{M_1M_2B_1^nB_2^nE^n}
 :=\frac{1}{|\cM_1||\cM_2|}
 \sum_{m_1,m_2}\ketbra{m_1,m_2}
 \otimes W_{\mathcal N}^{\otimes n}
 \sigma_{m_1,m_2}^{A^n}
 W_{\mathcal N}^{\dagger\otimes n}.
 \label{fq-eq-code-state}
\end{align}
Writing \(\Omega_{m_1,m_2}^{B_i^n}\) for the corresponding conditional
marginal, the average decoding errors are
\begin{equation}
 P_{e,i}^{(n)}
 :=1-\frac{1}{|\cM_1||\cM_2|}
 \sum_{m_1,m_2}
 \Tr\!\left[\Lambda_{m_i}^{(i)}
 \Omega_{m_1,m_2}^{B_i^n}\right],
 \qquad i=1,2.
 \label{fq-eq-average-error}
\end{equation}
The conditional trace-secrecy parameters are
\begin{align}
 \Delta_{1,n}
 &:=\frac12\left\|
 \Omega_n^{M_1M_2B_2^nE^n}
 -\pi^{M_1}\otimes\Omega_n^{M_2B_2^nE^n}
 \right\|_1,
 \label{fq-eq-trace-secrecy-one}\\
 \Delta_{2,n}
 &:=\frac12\left\|
 \Omega_n^{M_1M_2B_1^nE^n}
 -\pi^{M_2}\otimes\Omega_n^{M_1B_1^nE^n}
 \right\|_1.
 \label{fq-eq-trace-secrecy-two}
\end{align}
\end{definition}

A rate pair is achievable if there is a sequence of such codes for which
\(n^{-1}\log|\cM_i|\) tends to at least \(R_i\), both average errors vanish,
and \(\Delta_{1,n},\Delta_{2,n}\to0\). The closure of the achievable rate
pairs is denoted by \(\mathcal C_{\mathrm{conf}}(\mathcal N)\). The codes
constructed below satisfy the stronger conditions
\begin{equation}
 I(M_1;B_2^nE^n|M_2)_{\Omega_n}\longrightarrow0,
 \qquad
 I(M_2;B_1^nE^n|M_1)_{\Omega_n}\longrightarrow0.
 \label{fq-eq-mi-secrecy}
\end{equation}
For the converse, the trace criterion is sufficient. Conditional-entropy
continuity gives
\begin{align}
 I(M_1;B_2^nE^n|M_2)_{\Omega_n}
 \leq{}&2\Delta_{1,n}\log|\cM_1|
 +(1+\Delta_{1,n})
 h_2\!\left(\frac{\Delta_{1,n}}{1+\Delta_{1,n}}\right),
 \label{fq-eq-trace-to-mi}
\end{align}
and similarly after interchanging the receivers. Along a reliable code
sequence, Fano's inequality and
\(I(M_i;B_i^n)\le n\log\dim B_i\) also give
\begin{equation}
 (1-P_{e,i}^{(n)})\log|\cM_i|
 \le n\log\dim B_i+h_2(P_{e,i}^{(n)}),
 \label{fq-eq-rate-bound}
\end{equation}
so \(\log|\cM_i|=O(n)\). Consequently, the leakage in
\eqref{fq-eq-trace-to-mi} is \(o(n)\) whenever
\(\Delta_{i,n}\to0\).

The two colluding-adversary conditions also imply joint privacy from the
environment alone. Indeed, contractivity under partial trace and the
triangle inequality give
\begin{align}
 \frac12\left\|
 \Omega_n^{M_1M_2E^n}
 -\pi^{M_1}\otimes\pi^{M_2}\otimes\Omega_n^{E^n}
 \right\|_1
 \leq \Delta_{1,n}+\Delta_{2,n}.
 \label{fq-eq-joint-environment-secrecy}
\end{align}

\subsection{The cq theorem with enlarged secrecy outputs}
\label{fq-sec-enlarged-cq}

A prescribed finite family of quantum input preparations induces a cq channel. To apply the coding theorem with receiver--environment secrecy, we first enlarge the output registers in the resolvability argument.

\begin{corollary}[Enlarged secrecy outputs]
\label{fq-cor-enlarged-cq}
Let \( x\longmapsto \rho_x^{B_1B_2E} \) be a memoryless cq channel. Fix finite \(U,V_1,V_2,X\) and the law \eqref{eq:factorization}
\[
 p(u,v_1,v_2,x)
 =p(u)p(v_1,v_2|u)p(x|u,v_1,v_2),
\]
and evaluate all information quantities on
\begin{align}
 \omega^{UV_1V_2XB_1B_2E}
 :=\sum_{u,v_1,v_2,x} p(u)p(v_1,v_2|u)p(x|u,v_1,v_2)
 \, \ketbra{u,v_1,v_2,x} \otimes\rho_x^{B_1B_2E}.
 \label{fq-eq-tripartite-cq-state}
\end{align}
Every nonnegative pair satisfying
\begin{align}
 R_1
 &<I(V_1;B_1|U)_\omega-I(V_1;V_2|U)_\omega
   -I(V_1;B_2E|V_2,U)_\omega,
 \label{fq-eq-cq-rate-one}\\
 R_2
 &<I(V_2;B_2|U)_\omega-I(V_1;V_2|U)_\omega
   -I(V_2;B_1E|V_1,U)_\omega
 \label{fq-eq-cq-rate-two}
\end{align}
is achievable with vanishing average error and conditional strong secrecy
in the sense of \eqref{fq-eq-mi-secrecy}. As in
\cref{prop:inactive-user}, if one rate is zero, only the other displayed inequality is required. The closure of the resulting union, over all finite auxiliaries and distributions \eqref{eq:factorization}, is an inner bound on the confidential classical capacity region with an environment. The same inner region is obtained by restricting the union to
\[
 |\mathcal U|\le 3,\qquad
 |\mathcal V_1|\le|\mathcal X|,\qquad
 |\mathcal V_2|\le|\mathcal X|.
\]
\end{corollary}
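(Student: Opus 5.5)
The plan is to rerun the proof of \cref{thm:main} verbatim for the cq channel \(x\mapsto\rho_x^{B_1B_2E}\), changing only which output system each step uses. The decoders and reliability analysis see only \(B_1\) and \(B_2\); the secrecy analysis replaces the eavesdropper output \(B_2\) by \(B_2E\) and \(B_1\) by \(B_1E\).

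Concretely, define the prefixed pair channel \(\rho_{u,v_1,v_2}^{B_1B_2E}\) as in \eqref{eq:prefixed-pair-channel} and take \(U\) constant first. Use the same likelihood-selected double-binned code of \cref{sec:code}. \Cref{lem:cloud-mixing,lem:planted-factorization,lem:selector-balancing} involve no output system at all. \Cref{thm:reliability} uses only the marginals \(B_1\) and \(B_2\) of the effective channels \eqref{eq:effective-channel}, so it holds unchanged under \eqref{eq:reliability-packing}. For secrecy, apply \cref{thm:bipartite-resolvability} with \(B_2\) replaced by the composite system \(B_2E\). That proof uses \cref{lem:one-sided} and \cref{lem:cq-soft-covering} for an arbitrary finite-dimensional output, with \(B=B_2E\), \(A=V_1\), \(S=V_2\). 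The hypotheses therefore become \(\widetilde R_1>I(V_1;V_2)\) and \(R_1'+\widetilde R_1>I(V_1;V_2B_2E)\). Symmetrically, \(B_1E\) replaces \(B_1\) for the other receiver. \Cref{cor:trace-secrecy} then gives exponentially small expected trace leakage toward \(B_2^nE^n\) given \(M_2\), and toward \(B_1^nE^n\) given \(M_1\).

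Next, redo the auxiliary-rate elimination of \cref{sec-auxialiary-rate-elimination} with \(D_1=I(V_1;B_2E|V_2)\) and \(D_2=I(V_2;B_1E|V_1)\), using the chain rule \eqref{eq:chain-two}. This yields \eqref{fq-eq-cq-rate-one}--\eqref{fq-eq-cq-rate-two} for constant \(U\). Deterministic codebook selection and Winter's bound \eqref{eq:trace-to-mi} then give \eqref{fq-eq-mi-secrecy}, since \(\log M_1=O(n)\) and the conditioning system is still finite-dimensional.

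The time-sharing variable is restored through \cref{lem:superletter}. Its additivity identities hold with \(B_iE\) in place of \(B_i\), because the superletter state remains a tensor product over coordinates. Rational approximation and continuity finish the argument. The inactive-user case is \cref{prop:inactive-user} with eavesdropper \(B_2E\), using data processing \(I(V_1;B_2E|U)\le I(V_1;V_2B_2E|U)\).

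For the cardinality bounds, repeat \cref{prop:auxiliary-cardinalities} with \(B_2\) replaced by \(B_2E\) in \(F_1\) and \(B_1\) by \(B_1E\) in \(F_2\). The entropy-concavity argument needs only that the output marginal entropies \(H(B_1)\), \(H(B_2)\), \(H(B_2E)\), \(H(B_1E)\) are constant on the polytope \(\mathcal T\). This holds because all of them are determined by \(p_X\).

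I expect the only delicate point to be exactly this last check. Each convexity decomposition of \(F_j\) must have its subtracted term written as an entropy of a mixture of fixed conditional states, such as \(H(\sum_k t_k\sigma_k^{V_2B_2E})\), plus linear terms plus constants. I will verify this explicitly rather than rely on the displayed formulas, some of whose subscripts look garbled. Everything else transfers mechanically, since no lemma used anywhere depends on the identity of the eavesdropper's output system beyond its finite dimension.
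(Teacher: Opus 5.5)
Your proposal is correct and follows essentially the same route as the paper. The likelihood selector and the $B_1$/$B_2$ packing analysis are unchanged, the oriented resolvability step uses $B_2E$ and $B_1E$ as soft-covering outputs, the same elimination is run with $D_1=I(V_1;B_2E|V_2)$ and $D_2=I(V_2;B_1E|V_1)$, and \cref{prop:inactive-user} and \cref{prop:auxiliary-cardinalities} are invoked with the enlarged unintended outputs. Your extra checks also go through, and the displayed $F_1,F_2$ formulas in the cardinality proof are in fact correct as written; their $E$-versions need only $H(B_1)$ and $H(B_2)$ to be constant on $\mathcal T$.
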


\begin{proof}
Use the normalized likelihood selector and fine-index decoders from
the cq theorem. The planted-law packing analysis depends only on the
marginals on \(B_1\) and \(B_2\), and is therefore unchanged. In the first
oriented resolvability argument, replace the soft-covering output \(B_2\) by
\(B_2E\); in the second, replace \(B_1\) by \(B_1E\). Ordinary cq soft
covering applies to these enlarged finite-dimensional outputs, while the
selector-balancing estimates depend only on the classical law of
\((U,V_1,V_2)\). The resulting auxiliary-rate conditions are
\begin{align*}
 \widetilde R_1&>I(V_1;V_2|U),
 &R'_1+\widetilde R_1&>I(V_1;V_2B_2E|U),\\
 \widetilde R_2&>I(V_1;V_2|U),
 &R'_2+\widetilde R_2&>I(V_2;V_1B_1E|U),
\end{align*}
together with the unchanged packing inequalities
\[
 R_i+R'_i+\widetilde R_i<I(V_i;B_i|U),\qquad i=1,2.
\]
The same auxiliary-rate choice and elimination used in the cq theorem gives
\eqref{fq-eq-cq-rate-one}--\eqref{fq-eq-cq-rate-two}. Finally, select one
deterministic codebook for which both decoding errors and both trace-secrecy
parameters vanish. The exponential decay of the trace-secrecy parameters and
\eqref{fq-eq-trace-to-mi} give \eqref{fq-eq-mi-secrecy}.
For an inactive receiver, apply \cref{prop:inactive-user} with unintended
output \(B_2E\) or \(B_1E\), respectively. The cardinality bounds follow from Proposition~\ref{prop:auxiliary-cardinalities}.
\end{proof}


\subsection{A multiletter capacity formula}
\label{fq-sec-regularized-capacity}

Applying the cq theorem to block quantum preparations gives an operational
multiletter characterization. Each preparation is treated as a letter of a
superchannel, and the converse uses the independent messages themselves as
block auxiliaries. The preparations may be mixed and entangled across the
physical channel uses. A stochastic prefix over a finite ensemble can be
incorporated into the density operator
\[
 \sum_x p(x|u,v_1,v_2)\sigma_x^A
 =:\sigma_{u,v_1,v_2}^A,
\]
which we use below.

For each integer \(q\geq1\), let \(\mathfrak S_q^{\mathrm{ind}}(\mathcal N)\)
be the collection of states
\begin{align}
 \omega^{UV_1V_2B_1^qB_2^qE^q}
 :=\sum_{u,v_1,v_2} p(u)p(v_1|u)p(v_2|u)\,\ketbra{u,v_1,v_2} \otimes W_{\mathcal N}^{\otimes q}
 \sigma_{u,v_1,v_2}^{A^q}
 W_{\mathcal N}^{\dagger\otimes q},
 \label{fq-eq-independent-block-state}
\end{align}
where \(U,V_1,V_2\) are arbitrary finite classical variables and the
\(\sigma_{u,v_1,v_2}^{A^q}\) are arbitrary density operators. In
particular, the latter may be mixed and entangled across the \(q\) channel
inputs. For \(\omega\in\mathfrak S_q^{\mathrm{ind}}(\mathcal N)\), put
\begin{align}
 J_1^{(q)}(\omega)
 &:=I(V_1;B_1^q|U)_\omega
   -I(V_1;B_2^qE^q|V_2,U)_\omega,
 \label{fq-eq-block-private-one}\\
 J_2^{(q)}(\omega)
 &:=I(V_2;B_2^q|U)_\omega
   -I(V_2;B_1^qE^q|V_1,U)_\omega.
 \label{fq-eq-block-private-two}
\end{align}
Define the downward-closed region
\begin{align}
 \mathcal R_q^{\mathrm{ind}}(\mathcal N)
 :=\bigcup_{\omega\in\mathfrak S_q^{\mathrm{ind}}(\mathcal N)}
 \left\{(r_1,r_2)\in\mathbb R_+^2:
 r_i\leq J_i^{(q)}(\omega),\ i=1,2\right\}.
 \label{fq-eq-block-region}
\end{align}
The union is convex because \(U\) time-shares among the component regions.

\begin{theorem}[Regularized capacity region]
\label{fq-thm-regularized-capacity}
For any quantum broadcast channel $\mathcal{N}$, in the colluding receiver--environment secrecy model of
\cref{fq-def-code}, we have
\begin{align}
 \mathcal C_{\mathrm{conf}}(\mathcal N)
 =\overline{\bigcup_{q\geq1}
 \frac1q\mathcal R_q^{\mathrm{ind}}(\mathcal N)}.
 \label{fq-eq-exact-capacity}
\end{align}
All auxiliary systems in this expression are classical.
\end{theorem}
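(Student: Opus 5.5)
The proof has two inclusions, modelled on the proof of \cref{reg-thm-exact-capacity}: achievability through \cref{fq-cor-enlarged-cq}, and a converse in which the messages serve as block auxiliaries.

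\emph{Achievability.} Fix $q$ and a state $\omega\in\mathfrak S_q^{\mathrm{ind}}(\mathcal N)$. Its preparations $\sigma_{u,v_1,v_2}^{A^q}$ form a finite family, which defines a classical-input superchannel
\[
 (u,v_1,v_2)\longmapsto W_{\mathcal N}^{\otimes q}\sigma_{u,v_1,v_2}^{A^q}W_{\mathcal N}^{\dagger\otimes q}
\]
with outputs $B_1^qB_2^qE^q$. I take the input alphabet $\mathcal X$ to be the set of triples $(u,v_1,v_2)$ and use a deterministic prefix. Applying \cref{fq-cor-enlarged-cq} to independent uses of this superchannel, with $V_1,V_2$ conditionally independent given $U$, kills the term $I(V_1;V_2|U)$. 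The achievable rates per superletter are then $J_1^{(q)}(\omega)-\eta$ and $J_2^{(q)}(\omega)-\eta$, and the inactive-user part of the corollary covers the axes.

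Each superletter uses $q$ physical channel uses, so I divide by $q$. For blocklengths that are not multiples of $q$, I pad with a fixed input state, exactly as in \cref{sec:time-sharing}; this costs only $o(1)$ in rate. Secrecy against $B_2^nE^n$ is exactly the enlarged-output secrecy of the corollary, since the padded positions carry a fixed product state independent of the messages. This gives one inclusion after taking closures.

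\emph{Converse.} Take a code with vanishing errors and $\Delta_{i,n}\to0$. I use \eqref{fq-eq-rate-bound} to get $\log|\cM_i|=O(n)$ and \eqref{fq-eq-trace-to-mi} to get leakage $\ell_{i,n}=o(n)$. Fano's inequality then gives
\[
 \log|\cM_1|\le I(M_1;B_1^n)-I(M_1;B_2^nE^n|M_2)+f_{1,n}+\ell_{1,n},
\]
with $f_{1,n}=o(n)$, and the symmetric bound for $\log|\cM_2|$. Now I set $q=n$, take $U$ constant, $V_i=M_i$ (independent and uniform), and $\sigma_{v_1,v_2}^{A^n}$ the code state. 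This state lies in $\mathfrak S_n^{\mathrm{ind}}$, and the two right-hand sides are $J_i^{(n)}$ evaluated on a common state. Hence $(\log|\cM_1|-o(n),\log|\cM_2|-o(n))\in\mathcal R_n^{\mathrm{ind}}$. I divide by $n$, handle the axis cases with a singleton message as in \cref{reg-thm-exact-capacity}, and take closures.

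The main obstacle is making the achievability step rigorous. \cref{fq-cor-enlarged-cq} needs a finite classical input alphabet and finite-dimensional outputs; both hold here. It also requires the rate pair to be strictly feasible for the superchannel. When one $J_i^{(q)}$ is nonpositive that coordinate is zero, so the inactive-user version is needed there.

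One further point needs checking: that the Stinespring environment of the superchannel is $E^q$, so that the colluding-secrecy requirement matches the corollary's $B_2E$ output. This holds because $W_{\mathcal N}^{\otimes q}$ is an isometry whose outputs are exactly $B_1^qB_2^qE^q$, and the joint secrecy requirement is invariant under the choice of Stinespring dilation.
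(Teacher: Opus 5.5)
Your proposal is correct and follows the paper's proof. Achievability applies \cref{fq-cor-enlarged-cq} to the superchannel with letter $X=(U,V_1,V_2)$, where conditional independence removes $I(V_1;V_2|U)$, and the converse uses Fano's inequality, $o(n)$ leakage, and the messages themselves as block auxiliaries with $q=n$. The one difference is at the axes: you pass to a fixed value of the unused message, as in the cq converse, whereas the paper keeps the code, averages the preparations over $v_2$, and uses $I(V_1;B_2^nE^n|V_2)=I(V_1;V_2B_2^nE^n)\ge I(V_1;B_2^nE^n)$; both arguments are valid.
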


\begin{proof}
We first prove achievability. Fix \(q\), a state \(\omega\) of the form
\eqref{fq-eq-independent-block-state}, and back off arbitrarily from each
positive endpoint of its rectangle; a zero endpoint is covered by the
inactive-user clause of \cref{fq-cor-enlarged-cq}. Regard \(q\) physical
channel uses as one superchannel use and,
conditional on \((u,v_1,v_2)\), prepare
\(\sigma_{u,v_1,v_2}^{A^q}\). Equivalently, take the classical
superchannel letter to be \(X=(U,V_1,V_2)\) and use the deterministic prefix
that prepares this state. This gives a finite-input cq superchannel with
output states
\[
 \rho_{u,v_1,v_2}^{B_1^qB_2^qE^q}
 :=W_{\mathcal N}^{\otimes q}
 \sigma_{u,v_1,v_2}^{A^q}
 W_{\mathcal N}^{\dagger\otimes q}.
\]
Apply \cref{fq-cor-enlarged-cq}. Because
\(I(V_1;V_2|U)_\omega=0\), its two rate bounds reduce to
\eqref{fq-eq-block-private-one}--\eqref{fq-eq-block-private-two}. A code
using \(N\) superchannel uses employs \(Nq\) physical channel uses, so its
rates per physical use are divided by \(q\). Taking the union and closure
proves the direct inclusion in \eqref{fq-eq-exact-capacity}.

For the converse, consider an arbitrary reliable and secure \(n\)-block
code and its state \(\Omega_n\) from \eqref{fq-eq-code-state}. Fano's
inequality and data processing through receiver \(1\)'s measurement give
\begin{equation}
 \log|\cM_1|
 \leq I(M_1;B_1^n)_{\Omega_n}+o(n).
 \label{fq-eq-fano-one}
\end{equation}
By \eqref{fq-eq-trace-to-mi},
\(I(M_1;B_2^nE^n|M_2)_{\Omega_n}=o(n)\). Since the messages are
independent,
\begin{align}
 I(M_1;M_2B_2^nE^n)_{\Omega_n}
 &=I(M_1;B_2^nE^n|M_2)_{\Omega_n}.
 \label{fq-eq-independence-one}
\end{align}
Subtracting this \(o(n)\) quantity from \eqref{fq-eq-fano-one} yields
\begin{align}
 \log|\cM_1|
 \leq I(M_1;B_1^n)_{\Omega_n}
 -I(M_1;B_2^nE^n|M_2)_{\Omega_n}+o(n).
 \label{fq-eq-converse-one}
\end{align}
The symmetric argument gives
\begin{align}
 \log|\cM_2|
 \leq I(M_2;B_2^n)_{\Omega_n}
 -I(M_2;B_1^nE^n|M_1)_{\Omega_n}+o(n).
 \label{fq-eq-converse-two}
\end{align}
Now take \(q=n\), let \(U\) be constant, set \(V_1=M_1\) and
\(V_2=M_2\), and choose
\(\sigma_{v_1,v_2}^{A^n}=\sigma_{m_1,m_2}^{A^n}\). The messages are
independent, so the resulting state belongs to
\(\mathfrak S_n^{\mathrm{ind}}(\mathcal N)\). If both target rates are
positive, \eqref{fq-eq-converse-one}--\eqref{fq-eq-converse-two} give
positive rectangle endpoints for sufficiently large $n$. Reducing the
target rates by $o(1)$ then gives a point in
$n^{-1}\mathcal R_n^{\mathrm{ind}}(\mathcal N)$.

At an axis point, say $R_2=0<R_1$, replace $V_2$ by a constant and use the
averaged preparations
\[
 \bar\sigma_{v_1}^{A^n}
 :=\sum_{v_2}p(v_2)\sigma_{v_1,v_2}^{A^n}.
\]
This makes $J_2^{(n)}=0$ and leaves the $V_1B_1^n$ marginal unchanged.
Moreover, independence and data processing give
\[
 I(V_1;B_2^nE^n|V_2)
 =I(V_1;V_2B_2^nE^n)
 \ge I(V_1;B_2^nE^n),
\]
so the replacement cannot decrease $J_1^{(n)}$. The positive rate can
again be reduced by $o(1)$ to obtain a point in the regularized region.
The other axis is symmetric, and constant auxiliaries give the origin.
Taking limits and the closure proves the reverse inclusion.
\end{proof}

\begin{remark}[Relation to the correlated Marton region]
\label{fq-rem-correlated-region}
For a fixed \(q\), permitting a general law
\(p(u)p(v_1,v_2|u)\) and using the two bounds
\begin{align*}
 r_1&\leq I(V_1;B_1^q|U)-I(V_1;V_2B_2^qE^q|U),\\
 r_2&\leq I(V_2;B_2^q|U)-I(V_2;V_1B_1^qE^q|U)
\end{align*}
gives the natural correlated Marton inner region. Its regularized union is
also equal to \eqref{fq-eq-exact-capacity}: it contains the independent
region, while every code it produces is subject to the converse just proved.
Thus the correlated expression is a one-letter inner bound, while the
conditionally independent expression yields the same regularized union.
\end{remark}

\subsection{Deterministic Quantum Broadcast Channels}
\label{sec:deterministic_quantum-broadcast}

In this section, we consider deterministic quantum broadcast channels, by which we mean isometric broadcast channels $\U: \H^A \rightarrow \H^{B} \otimes \H^{C}$ of the form $\U\ket{x}^A = \ket{y(x)}^{B}\ket{z(x)}^{C}$ for some functions $y$ and $z$. These channels provide coherent isometric realizations of deterministic classical broadcast channels, for which the capacity regions simplify (see \cite[Example 8.2]{el2011network}).

If $x\mapsto(y(x),z(x))$ is injective, its linear extension $\U\ket{x}^A = \ket{y(x)}^{B}\ket{z(x)}^{C}$ 
is an isometry, defining a quantum channel whose inputs may be superpositions or mixed states. Basis-state preparations reproduce the
classical channel. While a general classical deterministic broadcast channel need not have that property, we only consider injective maps $x\mapsto(y(x),z(x))$ here. We show that, for such channels, the capacity region admits a single-letter characterization, giving a rate region that is the quantum analogue of the rate region for deterministic classical channels obtained in \cite{CaiLam2000}.

Since the additional
Stinespring environment $E$ is trivial for an isometric broadcast channel, the secrecy
conditions of \cref{fq-sec-operational-model} protect each message from
the unintended receiver given that receiver's own message.

\begin{theorem}
\label{thm:capacity-deterministic_quantum}
Let
\(
  \, \mathcal U:\mathcal H^A\longrightarrow
  \mathcal H^B\otimes\mathcal H^C
\)
be an isometry satisfying
\(
  \,\mathcal U\ket{x}^A
  =\ket{y(x)}^B\ket{z(x)}^C,
  \; x\in \mathcal X,
\)
for fixed orthonormal bases of \(\H^A\), \(\H^B\), and \(\H^C\).
For \(p_X\in\mathcal P(\mathcal X)\), define
\[
  \omega_p^{XBC}
  :=\sum_x p_X(x)[x]^X
       \otimes[y(x)]^B\otimes[z(x)]^C.
\]
Then the confidential-message capacity region for the deterministic quantum broadcast channel $\U$ is
\[ \mathcal C_{\mathrm{conf}}(\mathcal U) =
  \operatorname{conv}\bigcup_{p_X \in \mathcal P(\mathcal X)} \left\{
    (R_1,R_2)\in\mathbb R_+^2:
    \begin{array}{l}
      R_1\le H(X|C)_{\omega_p},\\
      R_2\le H(X|B)_{\omega_p}
    \end{array}
  \right\}.
\]
\end{theorem}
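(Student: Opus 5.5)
The plan has two parts. Achievability reduces to the classical deterministic case. The converse is a single-letterization whose quantum part is the main difficulty.

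\emph{Achievability.} Since the isometry has trivial environment, the secrecy conditions of \cref{fq-sec-operational-model} coincide with those of the cq model when $\mathcal U$ is fed only the basis states $\ket{x}$. These preparations realize exactly the classical deterministic channel $x\mapsto(y(x),z(x))$ with orthogonal output states. Injectivity of $x\mapsto(y(x),z(x))$ lets \cref{det-rem-injective} rewrite the conditional output entropies as $H(Y_1|Y_2)=H(X|C)_{\omega_p}$ and $H(Y_2|Y_1)=H(X|B)_{\omega_p}$. Hence \cref{det-thm-capacity} achieves every rectangle in the union. The convex hull is then obtained by time sharing, which is also available through $U$ in \cref{fq-cor-enlarged-cq}. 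I expect the hull to be redundant, since \cref{det-thm-capacity} shows the union is already convex.

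\emph{Converse.} Start from the multiletter bounds \eqref{fq-eq-converse-one}--\eqref{fq-eq-converse-two} with $E$ trivial. Because the messages are independent, these give
\[
 \log|\cM_1|\le I(M_1;B^n|M_2)-I(M_1;C^n|M_2)+o(n),
\]
and the symmetric bound for $M_2$. The aim is to bound the right-hand side by $\sum_t H(X_t|C_t)$ for a classical input law $p_t$ that the code determines. The same $p_t$ must also give the bound $\sum_t H(X_t|B_t)$ for the second rate. Averaging these $p_t$ then uses concavity of $H(X|C)$ and $H(X|B)$ in $p_X$, exactly as in \eqref{det-eq-concavity-one}--\eqref{det-eq-concavity-two}. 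A limit point of the averaged laws then places the rate pair in the region.

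The structural fact I would exploit is that the marginals of $\mathcal U\rho\,\mathcal U^\dagger$ have a block form:
\[
 \Tr_C[\mathcal U\rho\,\mathcal U^\dagger]
 =\sum_{x,x'}\rho_{xx'}\,\delta_{z(x),z(x')}\,\ket{y(x)}\!\bra{y(x')}.
\]
The analogous formula holds on $C$. I would split the input space into the $z$-blocks $\{x:z(x)=z\}$. Within a block, $x\mapsto y(x)$ is injective, so $B$ is an isometric copy of the block component.

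The plan for the key step is to write the leakage difference as
\[
 [H(B^n|M_2)-H(C^n|M_2)]-[H(B^n|M_1M_2)-H(C^n|M_1M_2)].
\]
I would then compare each bracket with the state obtained by fully dephasing the input in the $x$ basis, which is the classical channel. Concretely, I expect dephasing in the $z$-basis on $C^n$, which leaves the $C^n$ marginal invariant, to show that $I(M_1;C^n|M_2)$ is unchanged. I also expect that the extra coherence within $z$-blocks can only enter $B^n$ in a way already bounded by $H(X^n|C^n)$ of the dephased classical input. After this, the chain rule and "conditioning reduces entropy" give $\sum_tH(X_t|C_t)$, with $p_t$ the diagonal of the $t$-th reduced input.

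\emph{Main obstacle.} The hard step is justifying this reduction from quantum to classical preparations. The subtracted conditional entropy $H(B^n|C^nM_1M_2)$ can be negative for pure inputs, so the classical inequality $I(M_1;Y_1^n|Y_2^nM_2)\le H(Y_1^n|Y_2^n)$ from \eqref{det-eq-converse-one} does not carry over verbatim. My first attempt would bound $I(M_1;B^n|M_2)-I(M_1;C^n|M_2)\le H(B^n|M_2)-H(C^n|M_2)+H(C^n)-\ldots$ using Araki--Lieb together with the block structure above. If that fails, I would instead show that replacing each $\sigma_{m_1,m_2}$ by a pure-state decomposition followed by $x$-basis dephasing cannot decrease either private difference. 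The bounds $\log|\cM_i|=O(n)$ from \eqref{fq-eq-rate-bound} keep all remainders $o(n)$.
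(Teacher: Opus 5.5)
Your achievability argument is fine and is the same as the paper's: basis-state inputs, $V_1=Y$, $V_2=Z$, and injectivity to rewrite $H(Y|Z)$ and $H(Z|Y)$ as $H(X|C)$ and $H(X|B)$.

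The converse has a genuine gap. You correctly identify that $H(B^n|C^nM_1M_2)$ may be negative, but none of your proposed remedies works as stated.

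\emph{Invariance claim.} The claim that $z$-basis dephasing leaves the $C^n$ marginal, and hence $I(M_1;C^n|M_2)$, unchanged is false. By your own block formula, the $C$ marginal retains coherences between $x\neq x'$ with $y(x)=y(x')$. For the Blackwell isometry, encoding one bit as $(\ket{0}\pm\ket{1})/\sqrt2$ gives $C$ the orthogonal states $\ket{\pm}$. So $I(M_1;C)=1$ before the $z$-measurement and $0$ after it.

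\emph{Dephasing fallback.} Your claim that a pure-state decomposition followed by $x$-basis dephasing cannot decrease either private difference also fails. Encoding $M_1$ as $(\ket{0}\pm\ket{2})/\sqrt2$ gives $I(M_1;B)-I(M_1;C)=1$. Dephasing maps both messages to the same state and gives $0$. The correct bound must be expressed through the diagonal of the \emph{average} input; it cannot come from classicalizing the code message by message.

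\emph{Araki--Lieb route.} This is not specified far enough to check.

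\textbf{The missing idea.} You need only the favorable direction of data processing, not invariance.

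Measure $C^n$ in the computational basis to produce a classical $Z^n$. This does not touch $M_1M_2B^n$, and it gives $I(M_1;Z^n|M_2)\le I(M_1;C^n|M_2)=\delta_{1,n}$. Then
\[
 I(M_1;B^n|M_2)\le I(M_1;B^nZ^n|M_2)
 =I(M_1;Z^n|M_2)+I(M_1;B^n|Z^nM_2)
 \le\delta_{1,n}+H(B^n|Z^nM_2).
\]
The last step holds because every conditioning register in $M_1M_2Z^n$ is classical, so $H(B^n|M_1M_2Z^n)\ge0$. This is exactly what disposes of the negativity you worried about.

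Next, $H(B^n|Z^nM_2)\le H(B^n|Z^n)\le H(Y^n|Z^n)$. The first inequality is strong subadditivity. The second comes from dephasing $B^n$ conditionally on $Z^n$.

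The law of $(Y^n,Z^n)$ obtained by measuring both outputs is $\langle y^n,z^n|\Omega^{B^nC^n}|y^n,z^n\rangle$. Injectivity of $x\mapsto(y(x),z(x))$ makes this the pushforward of the diagonal of the average input, since no off-diagonal input term contributes.

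Measuring $B^n$ instead gives $H(Z^n|Y^n)$ for the same law. Your chain-rule, concavity and compactness step then completes the converse.
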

The proof is divided into achievability and converse parts.

\subsubsection{Achievability}
The achievability proof below uses only encoders that are classical in the distinguished input basis. We subsequently prove a converse for arbitrary encoder states, including coherent inputs and states entangled across channel uses.
\begin{proposition}[Achievability]
\label{prop:direct_coding_deterministic}
For every \(p_X\in\mathcal P(\mathcal X)\), every rate pair satisfying
\[
0\le R_1\le H(X|C)_{\omega_p},
\qquad
0\le R_2\le H(X|B)_{\omega_p}
\]
is achievable. The conditional entropies above are evaluated with respect to the state of the following form:
    \begin{align*}
        \omega_p^{XBC} = \sum_{x} p_X(x) [x]^X \otimes [y(x)]^{B} \otimes [z(x)]^{C}
    \end{align*} Consequently, the inner region
\[
\mathcal R_{\mathrm{det}}(\mathcal U)
:=
\operatorname{conv}
\bigcup_{p_X\in\mathcal P(\mathcal X)}
\left\{
(R_1,R_2)\in\mathbb R_+^2:
R_1\le H(X|C)_{\omega_p},\
R_2\le H(X|B)_{\omega_p}
\right\}
\]
satisfies
\[
\mathcal R_{\mathrm{det}}(\mathcal U)
\subseteq\mathcal C_{\mathrm{conf}}(\mathcal U).
\]
\end{proposition}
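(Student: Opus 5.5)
The plan is to reduce the proposition to the classical deterministic result, \cref{det-thm-capacity}, by restricting the encoder to the distinguished input basis. If every channel input is a computational basis state $\ket{x}$, or a classical mixture of them, the isometry $\mathcal U$ produces the orthogonal product states $[y(x)]^B\otimes[z(x)]^C$. This is exactly the cq representation of the classical deterministic broadcast channel $x\mapsto(y(x),z(x))$ used in \cref{det-subsec-classical}.

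The Stinespring environment of an isometry is trivial, so secrecy in the sense of \cref{fq-def-code} coincides with secrecy against the unintended receiver alone, given its own message. Therefore any code for this classical channel that satisfies conditional strong secrecy is also a code for $\mathcal U$ under the quantum-input model. Its trace-secrecy parameters vanish by Pinsker's inequality. Alternatively, one can invoke \cref{fq-cor-enlarged-cq} with trivial $E$.

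With this reduction in hand, the steps are as follows.

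First, fix $p_X$ and apply the achievability part of \cref{det-thm-capacity}, with $f_1=y$ and $f_2=z$. This gives every rate pair with $R_1\le H(Y_1|Y_2)$ and $R_2\le H(Y_2|Y_1)$. Equivalently, one can use \cref{thm:main} with $(V_1,V_2)=(y(X),z(X))$ and constant $U$; the boundary axes then come from \cref{prop:inactive-user}, and the closure is taken at the end.

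Second, translate these entropies into the form stated in the proposition. Since $x\mapsto(y(x),z(x))$ is injective, \cref{det-rem-injective} gives $H(Y_1|Y_2)=H(X|Y_2)$ and $H(Y_2|Y_1)=H(X|Y_1)$. The state $\omega_p^{XBC}$ is classical, with $B$ and $C$ holding copies of $y(X)$ and $z(X)$. Hence $H(X|C)_{\omega_p}=H(X|z(X))$ and $H(X|B)_{\omega_p}=H(X|y(X))$, which identifies the rectangle for $p_X$ with the set in the proposition.

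Third, pass to the convex hull. The capacity region is closed and convex by time sharing between codes, which concatenates blocks and keeps the leakages additive and vanishing. This already follows from the convexity of the union in \cref{det-thm-capacity}. Consequently, $\mathcal R_{\mathrm{det}}(\mathcal U)\subseteq\mathcal C_{\mathrm{conf}}(\mathcal U)$.

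The main point to check carefully is the operational equivalence. A classical stochastic encoder $E_n(x^n|m_1,m_2)$ must be realized by the mixed preparation $\sigma_{m_1,m_2}=\sum_{x^n}E_n(x^n|m_1,m_2)[x^n]$. Its output under $\mathcal U^{\otimes n}$ then has to match the cq state \eqref{eq:operational-state} block by block. This holds because $\mathcal U$ maps the basis projectors $[x]$ to product basis projectors. The secrecy criterion of the quantum model (trace secrecy against $B_2^nE^n$ with $E$ trivial) then reduces to the cq criterion, so no further estimates are needed.
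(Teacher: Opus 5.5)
Your proposal is correct and is essentially the paper's argument. You restrict to computational-basis inputs, apply \cref{thm:main} with constant $U$ and $(V_1,V_2)=(y(X),z(X))$, use injectivity to rewrite $H(Y|Z)$ and $H(Z|Y)$ as $H(X|C)_{\omega_p}$ and $H(X|B)_{\omega_p}$, and time-share for the convex hull; you route through \cref{det-thm-capacity}, whose achievability is proved exactly this way, and your explicit checks of the model equivalence (trivial environment, mixed basis preparations realizing the stochastic encoder, Pinsker for trace secrecy) and of the boundary via \cref{prop:inactive-user} and closure fill in details the paper leaves implicit.
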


\begin{proof}
Fix \(p_X \in \mathcal P(\mathcal X)\), and let \( Y=y(X),\; Z=z(X). \)
In Theorem~\ref{thm:main}, take the time-sharing auxiliary to be constant and choose \( V_1=Y,\; V_2=Z. \)
The channel input is the basis state \(\ket{x}\) associated with
\((V_1,V_2)=(y(x),z(x))\). Equivalently, we choose
\[
p(v_1,v_2,x)
=
p_X(x)\,
\mathbf 1\{v_1=y(x),\,v_2=z(x)\}.
\]
Injectivity of \(x\mapsto(y(x),z(x))\) ensures that \(x\) is a
well-defined function of every pair \((v_1,v_2)\) in the support. Since the channel outputs are orthogonal
classical states, the first rate bound reduces to
\[
\begin{aligned}
I(V_1;B)-I(V_1;V_2)-I(V_1;C|V_2)
 &=H(Y)-I(Y;Z)\\
 &=H(Y|Z).
\end{aligned}
\]
Similarly, the second bound reduces to \(H(Z|Y)\).

Because \(\mathcal U\) is an isometry, \(x\mapsto(y(x),z(x))\) is
injective, and hence \(H(X|Y,Z)=0\). Consequently,
\[
H(Y|Z)=H(X|Z)=H(X|C)_{\omega_p},
\]
and, similarly,
\[
H(Z|Y)=H(X|Y)=H(X|B)_{\omega_p}.
\]
Thus every pair satisfying the claimed bounds is achievable.
Time sharing gives their convex hull in Proposition~\ref{prop:direct_coding_deterministic}.
\end{proof}

\subsubsection{The Single-Letter Converse}
Although the preceding construction uses only classical input states, the following converse applies to arbitrary quantum encoders and shows that input coherence cannot enlarge the confidential-message capacity region.

\begin{proposition}[Converse]
\label{prop:converse-deterministic}
For every deterministic quantum broadcast channel \(\mathcal U\),
\[
\mathcal C_{\mathrm{conf}}(\mathcal U) \subseteq \mathcal R_{\mathrm{det}}(\mathcal U).
\]
\end{proposition}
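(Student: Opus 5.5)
The approach is to bound each rate by a classical conditional entropy of the computational-basis statistics of the input. To get there, we replace the unintended receiver's system by its dephased (classical) version before splitting the mutual information. Dephasing keeps the conditional entropy we subtract nonnegative, which removes the usual obstacle that quantum conditional entropies can be negative. Since $\mathcal U$ is an isometry, the environment $E$ is trivial. Let $\Delta_B,\Delta_C$ be the completely dephasing maps in the fixed bases $\{\ket{y}\}$, $\{\ket{z}\}$, and write $Y^n:=\Delta_B^{\otimes n}(B^n)$, $Z^n:=\Delta_C^{\otimes n}(C^n)$ for the resulting classical registers.

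\textbf{Bound on $R_1$.} Take an arbitrary reliable and secure code with state $\Omega_n$, as in \cref{fq-def-code}. First, Fano's inequality and \eqref{fq-eq-rate-bound} give $\log|\cM_1|\le I(M_1;B^n|M_2)+o(n)$; conditioning on the independent $M_2$ can only increase the mutual information. Next, $Z^n$ is obtained by measuring $C^n$ and leaves the $B^n$ marginal untouched, so
\[
 I(M_1;B^n|M_2)\le I(M_1;B^nZ^n|M_2)
 =I(M_1;Z^n|M_2)+I(M_1;B^n|Z^nM_2).
\]
By data processing, $I(M_1;Z^n|M_2)\le I(M_1;C^n|M_2)$. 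This is $o(n)$ by the trace-to-mutual-information bound \eqref{fq-eq-trace-to-mi} together with the trace-secrecy condition.

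For the remaining term,
\[
 I(M_1;B^n|Z^nM_2)\le H(B^n|Z^n)-H(B^n|Z^nM_1M_2).
\]
The subtracted entropy is conditioned on classical registers only, so it is nonnegative and can be dropped.

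Then apply $\Delta_B^{\otimes n}$. It is unital on $B^n$ and leaves $Z^n$ fixed, so $H(B^nZ^n)$ cannot decrease while $H(Z^n)$ is unchanged, giving $H(B^n|Z^n)\le H(Y^n|Z^n)$. The joint law of $(Y^n,Z^n)$ is the image under $x^n\mapsto(y(x^n),z(x^n))$ of the diagonal distribution $P_{X^n}$ of the average input state $\bar\sigma^{A^n}$ in the computational basis. This holds because $\mathcal U$ maps basis states to product basis states and dephasing kills all off-diagonal terms.

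Subadditivity then gives $H(Y^n|Z^n)\le\sum_t H(Y_t|Z_t)$. By injectivity each term equals $H(X|C)_{\omega_{p_t}}$, where $p_t$ is the $t$-th marginal of $P_{X^n}$.

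\textbf{Bound on $R_2$ and conclusion.} Interchanging the roles of the receivers gives $\log|\cM_2|\le\sum_t H(X|B)_{\omega_{p_t}}+o(n)$, with the same $p_t$. Set $\bar p=\frac1n\sum_t p_t$. Both $H(Y|Z)$ and $H(Z|Y)$ are concave in $p_X$, because the joint law of $(Y,Z)$ is linear in $p_X$. Hence both normalized sums are bounded by the corresponding entropies at the single distribution $\bar p$, exactly as in \eqref{det-eq-concavity-one}--\eqref{det-eq-concavity-two}.

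Dividing by $n$, letting $n\to\infty$ and passing to a convergent subsequence of $\bar p$ places every achievable pair in the rectangle for some $p_X$. These rectangles lie in $\mathcal R_{\mathrm{det}}(\mathcal U)$, so even without the convex hull the inclusion $\mathcal C_{\mathrm{conf}}(\mathcal U)\subseteq\mathcal R_{\mathrm{det}}(\mathcal U)$ follows.

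The step I expect to be delicate is the inequality $I(M_1;B^n|M_2)\le I(M_1;B^nZ^n|M_2)$ and the control of the subtracted term. The plan relies on giving receiver $1$ only the dephased $Z^n$, never the coherent $C^n$. With the full $C^n$, the entropy $H(B^n|C^nM_1M_2)$ could be negative for entangled encoder states, and the argument would break; with $Z^n$ it stays nonnegative.
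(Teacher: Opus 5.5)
Your proposal is correct and follows essentially the same route as the paper's proof. You measure $C^n$ in the computational basis, give $Z^n$ to receiver~1, split $I(M_1;B^nZ^n|M_2)$ by the chain rule, drop the nonnegative entropy conditioned on classical registers, dephase $B^n$, and single-letterize via the chain rule and concavity at the common averaged law $\bar p$. One step should be stated more precisely: the diagonal output law is the push-forward of $P_{X^n}$ because $x\mapsto(y(x),z(x))$ is injective (automatic since $\mathcal U$ is an isometry), not merely because dephasing removes off-diagonal terms, and the paper cites injectivity explicitly to rule out off-diagonal input terms landing on diagonal output entries.
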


\begin{proof}
Consider a code of the form 
\begin{equation}
 \Omega^{M_1M_2B^nC^n}
 =
 \frac{1}{|\mathcal M_1||\mathcal M_2|}
 \sum_{m_1,m_2}\proj{m_1,m_2}\otimes
 \U^{\otimes n}
 \theta_{m_1,m_2}^{A^n}
 (\U^\dagger)^{\otimes n}.
 \label{det-c-code-state}
\end{equation}
where \(\theta^{A^n}_{m_1,m_2} \in \D((\H^A)^{\otimes n}\)) are arbitrary. Let $\epsilon_{i,n}$ be receiver $i$'s average decoding error, and put
\[
 \delta_{1,n}=I(M_1;C^n|M_2)_\Omega,\qquad
 \delta_{2,n}=I(M_2;B^n|M_1)_\Omega.
\]
Measure $C^n$ in the computational basis and denote its classical
outcome by $Z^n$. This operation leaves the marginal
$M_1M_2B^n$ unchanged. Data processing for the secrecy term gives
\begin{equation}
 I(M_1;Z^n|M_2)\le\delta_{1,n}.
 \label{det-c-measured-leakage}
\end{equation}
Giving $Z^n$ to receiver $1$ can only help decoding. Fano's inequality,
message independence, and the chain rule therefore imply
\begin{align}
 (1-\epsilon_{1,n})\log|\mathcal M_1|-h_2(\epsilon_{1,n})
 &\le I(M_1;B^n|M_2)\notag\\
 &\le I(M_1;B^nZ^n|M_2)\notag\\
 &=I(M_1;Z^n|M_2)
   +I(M_1;B^n|Z^nM_2)\notag\\
 &\le\delta_{1,n}+H(B^n|Z^nM_2)\notag\\
 &\le\delta_{1,n}+H(B^n|Z^n)\notag\\
 &\le\delta_{1,n}+H(Y^n|Z^n).
 \label{det-c-converse-one}
\end{align}
All conditioning registers in the fourth line are classical. Hence
$H(B^n|M_1M_2Z^n)\ge0$, which gives the Holevo-information bound used
there. The next line follows from strong subadditivity. In the last
line, $Y^n$ denotes computational-basis measurement of $B^n$:
dephasing increases the entropy of each state conditioned on $Z^n$.

Interchanging the two receivers gives
\begin{equation}
 (1-\epsilon_{2,n})\log|\mathcal M_2| -h_2(\epsilon_{2,n})
 \le\delta_{2,n}+H(Z^n|Y^n).
 \label{det-c-converse-two}
\end{equation}
The classical entropies in both bounds refer to the same joint law
\begin{equation}
 q(y^n,z^n)
 =
 \langle y^n,z^n|
 \Omega^{B^nC^n}
 |y^n,z^n\rangle,
 \label{det-c-diagonal-law}
\end{equation} 
and the input distribution is given by \[p(x^n) = \bra{x^n} \overline{\theta}^{A^n} \ket{x^n}, \qquad \overline{\theta}^{A^n} =  \frac{1}{|\mathcal{M}_1| |\mathcal{M}_2|}\sum_{m_1,m_2}  \theta^{A^n}_{m_1,m_2} .\]
Since
\[
\overline\theta^{A^n}
=
\sum_{x^n,\widetilde x^n}
\alpha_{x^n,\widetilde x^n}
\ket{x^n}\!\bra{\widetilde x^n},
\]
the diagonal output law satisfies
\[
q(y^n,z^n)
=
\sum_{\substack{x^n:\,
y^n(x^n)=y^n\\
z^n(x^n)=z^n}}
p(x^n).
\]
Indeed, injectivity of \(x\mapsto(y(x),z(x))\) implies that every valid pair \((y^n,z^n)\) has at most one preimage, so no off-diagonal coefficient \(\alpha_{x^n,\widetilde x^n}\), \(x^n\ne\widetilde x^n\), contributes to the jointly measured output law.

For each $i$, define \(p_i(x) = \bra{x} \overline{\theta}^{A_i} \ket{x}\), where \(\overline{\theta}^{A_i} = \tc_{A^{[n]\backslash i}} \overline{\theta}^{A^n}  \), and further define the probability distribution \[ \overline{p}_n(x) = \displaystyle \frac{1}{n} \sum_{i=1}^n p_i(x). \] Both bounds use the same family $\{ p_i(x) \}$ of input probability distributions.
Let $q_i (b,c) = \displaystyle \sum_{\substack{x: \,y(x) =b, \\ z(x) =c}} p_i (x)$ be its marginal at coordinate $i$, and let
$\bar q=n^{-1}\sum_{i=1}^nq_i$. 

Again, since \( \U\) is an isometry, \( x \mapsto (y(x),z(x))\) is injective. This implies that \( H(X|Y,Z) = 0\). Then, the classical chain rule and concavity of conditional entropy give
\begin{align}
 \frac1nH(Y^n|Z^n) & \le \frac1n\sum_{t=1}^n H(Y|Z)_{q_t}
 \le H(Y|Z)_{\overline{q}} = H(X|C)_{\omega_{\overline{p}_n}} \notag \\
 \frac1nH(Z^n|Y^n)
 &\le H(Z|Y)_{\overline{q}} = H(X|B)_{\omega_{\overline{p}_n}}.
 \label{det-c-single-letter}
\end{align}
Put
\[
R_{j,n}:=\frac1n\log|\mathcal M_j|.
\]
Dividing \eqref{det-c-converse-one} and \eqref{det-c-converse-two} by \(n\) gives
\begin{align*}
(1-\epsilon_{1,n})R_{1,n}
-\frac1n h_2(\epsilon_{1,n})
&\le
\frac{\delta_{1,n}}n
+H(X|C)_{\omega_{\bar p_n}},\\
(1-\epsilon_{2,n})R_{2,n}
-\frac1n h_2(\epsilon_{2,n})
&\le
\frac{\delta_{2,n}}n
+H(X|B)_{\omega_{\bar p_n}}.
\end{align*}
Equations~\eqref{fq-eq-trace-to-mi} and
\eqref{fq-eq-rate-bound} imply
\[
 \frac{\delta_{i,n}}{n}\longrightarrow0,\qquad i=1,2,
\]
since the trace-secrecy parameters vanish and
\(\log|\mathcal M_i|=O(n)\).
By compactness of \(\mathcal P(\mathcal X)\), there exist a
subsequence \(n_k\) and a distribution \(p_\ast\) such that
\(\bar p_{n_k}\to p_\ast\).
Taking limits along this same subsequence, using
\(\epsilon_{i,n}\to0\),
\(\liminf_{n\to\infty}R_{i,n}\ge R_i\), and continuity of
conditional entropy, gives
\[
 R_1\le H(X|C)_{\omega_{p_\ast}},
 \qquad
 R_2\le H(X|B)_{\omega_{p_\ast}}.
\]
Therefore, \[ \mathcal C_{\mathrm{conf}}(\mathcal U) \subseteq\mathcal R_{\mathrm{det}}(\mathcal U).
\]

\end{proof}
\vspace{-5pt}

\begin{proof}[Proof of Theorem~\ref{thm:capacity-deterministic_quantum}]
Proposition~\ref{prop:direct_coding_deterministic} gives the direct inclusion, while Proposition~\ref{prop:converse-deterministic} gives the reverse inclusion.
\end{proof}

\subsection{A degraded Stinespring broadcast channel}
\label{fq-sec-degraded}

We return to the degraded setting of \cref{sec:degr-cq}, now with arbitrary
quantum preparations and the enlarged secrecy outputs. The relevant
condition is degradability of receiver $1$'s marginal: its degrading map
must reproduce the entire complementary output $B_2E$. Let
\begin{align*}
 \mathcal N_1(\rho)
 &:=\Tr_{B_2E}\!\left[W_{\mathcal N}\rho
 W_{\mathcal N}^{\dagger}\right],\\
 \mathcal N_1^c(\rho)
 &:=\Tr_{B_1}\!\left[W_{\mathcal N}\rho
 W_{\mathcal N}^{\dagger}\right],
\end{align*}
so that \(\mathcal N_1^c\) has output \(B_2E\). Suppose that receiver
\(1\)'s marginal is degradable, i.e., there is a channel
\(\mathcal D^{B_1\to B_2E}\) such that
\begin{equation}
 \mathcal N_1^c=\mathcal D\circ\mathcal N_1.
 \label{fq-eq-degrading-map}
\end{equation}

\begin{corollary}[Degraded case]
\label{fq-cor-degraded}
Under \eqref{fq-eq-degrading-map},
\begin{align}
 \mathcal C_{\mathrm{conf}}(\mathcal N)
 =\left\{(R_1,0):
 0\leq R_1\leq Q(\mathcal N_1)\right\},
 \label{fq-eq-degraded-region}
\end{align}
where
\begin{align}
 Q(\mathcal N_1)
 =\max_{\rho^A}
 \left\{H(B_1)_{W_{\mathcal N}\rho W_{\mathcal N}^{\dagger}}
 -H(B_2E)_{W_{\mathcal N}\rho W_{\mathcal N}^{\dagger}}\right\}
 \label{fq-eq-degraded-coherent-information}
\end{align}
is the quantum capacity of the degradable channel \(\mathcal N_1\),
expressed by its maximum single-letter coherent information.
The rates $R_1,R_2$ here are rates of confidential classical messages.
\end{corollary}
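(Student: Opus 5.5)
We prove the two inclusions separately: show that receiver 2 cannot sustain a positive rate, then match $R_1$ with the coherent information of $\mathcal N_1$.

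\emph{Achievability.} We take $U$ and $V_2$ constant and invoke the inactive-user clause of \cref{fq-cor-enlarged-cq} with legitimate output $B_1$ and unintended output $B_2E$.
\begin{enumerate}[label=(\roman*)]
\item Fix an input $\rho^A$ attaining the maximum in \eqref{fq-eq-degraded-coherent-information}, and take a pure-state ensemble $\{p(x),\ketbra{\psi_x}\}$ decomposing it. This gives a finite cq channel $x\mapsto W_{\mathcal N}\ketbra{\psi_x}W_{\mathcal N}^\dagger$ on $B_1B_2E$, and we set $V_1=X$.
\item The achievable rate is $I(X;B_1)-I(X;B_2E)$. By \eqref{eq:degr-ensemble-coherent-information} in \cref{rem:degr-cq-quantum-input}, this equals $I_c(\rho,\mathcal N_1)-\sum_xp(x)I_c(\psi_x,\mathcal N_1)$.
\item Each output $W_{\mathcal N}\ketbra{\psi_x}W_{\mathcal N}^\dagger$ is pure on $B_1B_2E$, so $H(B_1)=H(B_2E)$ for it and the subtracted sum vanishes. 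The rate is therefore $I_c(\rho,\mathcal N_1)=Q(\mathcal N_1)$.
\item The single-letter formula for $Q(\mathcal N_1)$ holds because $\mathcal N_1$ is degradable. Rates arbitrarily close to it are achievable, and closure gives the full segment.
\end{enumerate}

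\emph{Converse for $R_2$.} This mirrors \eqref{eq:degr-r2-converse}. Receiver 1 can apply $\mathcal D^{\otimes n}$ to $B_1^n$. This produces a state on $(B_2E)^n$ with the same joint law with $M_1M_2$ as the true state, because $\mathcal D\circ\mathcal N_1=\mathcal N_1^c$ holds as a channel identity and hence on entangled block inputs. Receiver 1 then traces out $E^n$ and runs receiver 2's decoder. Fano's inequality and data processing give
\[
 nR_{2,n}\le I(M_2;B_1^n|M_1)+o(n)\le I(M_2;B_1^nE^n|M_1)+o(n)=o(n),
\]
where the last step uses \eqref{fq-eq-trace-to-mi} and \eqref{fq-eq-rate-bound}. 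Hence $R_2=0$.

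\emph{Converse for $R_1$.} We start from \eqref{fq-eq-converse-one}. Conditioning on $M_2$ and using message independence, as in \eqref{reg-eq-conditioned-one}, gives
\[
 nR_1\le I(M_1;B_1^n|M_2)-I(M_1;B_2^nE^n|M_2)+o(n).
\]
For each fixed $m_2$, the term $I(M_1;B_1^n)-I(M_1;B_2^nE^n)$ is a private-information expression for $\mathcal N_1^{\otimes n}$ with complement $(\mathcal N_1^c)^{\otimes n}$. We bound it in three steps.
\begin{enumerate}[label=(\roman*)]
\item Purify each code state $\sigma_{m_1,m_2}$ into a pure-state ensemble indexed by $(m_1,g)$.
\item Apply the ensemble identity \eqref{eq:degr-ensemble-coherent-information} to this refined ensemble. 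The conditional coherent informations $I_c(\psi_{m_1,g},\mathcal N_1^{\otimes n})$ vanish by purity. We then need to undo the refinement from $(m_1,g)$ back to $M_1$. This relies on the sign of the conditional terms: under degradability, $I_c(\sigma,\mathcal N_1^{\otimes n})\ge0$ for every $\sigma$, since $I_c(\sigma,\mathcal N_1^{\otimes n})=H(B_1^n)-H(\mathcal D^{\otimes n}(B_1^n))$ equals a conditional entropy $H(F^n|\cdot)$ that is nonnegative by data processing.
\item Combining, $I(M_1;B_1^n)-I(M_1;B_2^nE^n)=I_c(\bar\sigma_{m_2},\mathcal N_1^{\otimes n})-\sum_{m_1}\tfrac1{|\cM_1|}I_c(\sigma_{m_1,m_2},\mathcal N_1^{\otimes n})\le I_c(\bar\sigma_{m_2},\mathcal N_1^{\otimes n})$. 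This is bounded by $Q^{(1)}(\mathcal N_1^{\otimes n})=nQ(\mathcal N_1)$, by additivity of coherent information for degradable channels. Equivalently, one can use the subadditivity argument of \eqref{eq:degr-cqq-single-letterization} with the Stinespring dilation of $\mathcal D$.
\end{enumerate}

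The main obstacle is the $R_1$ converse for arbitrary entangled block inputs. The coherent-information additivity step has to be justified by the strong-subadditivity argument of \eqref{eq:degr-cqq-single-letterization} adapted to quantum inputs. The nonnegativity of the conditional $I_c$ terms must also be checked carefully.
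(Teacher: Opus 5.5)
Your proposal is correct and is essentially the paper's argument. Achievability uses a pure-state decomposition of the maximizer and the inactive-user clause of \cref{fq-cor-enlarged-cq}. The converse rests on the ensemble identity \eqref{eq:degr-ensemble-coherent-information}, nonnegativity of coherent information for the degradable $\mathcal N_1^{\otimes n}$ (concave in the input and zero on pure inputs), and Devetak--Shor additivity.

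The differences are minor. The paper runs the converse inside the regularized formula of \cref{fq-thm-regularized-capacity}: it shows $J_2^{(q)}\le0$ by data processing through $\Tr_E\circ\mathcal D$, and bounds $J_1^{(q)}$ after dropping $V_2$ from the leakage term. You instead work directly on the code state, let receiver~1 simulate receiver~2's decoder, and condition the legitimate term on $M_2$; the two routes are equivalent. Your refinement steps (i)--(ii) are unnecessary, since step (iii) already applies the identity directly to the $M_1$-ensemble.
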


\begin{proof}
First, tracing out \(E\) after the degrading map in
\eqref{fq-eq-degrading-map} gives a channel from \(B_1\) to \(B_2\).
Consequently, data processing implies
\[
 I(V_2;B_2^q|U)\leq I(V_2;B_1^q|U).
\]
Moreover, conditional independence of \(V_1,V_2\) gives
\begin{align*}
 I(V_2;B_1^qE^q|V_1,U)
 &=I(V_2;V_1B_1^qE^q|U)\\
 &\geq I(V_2;B_1^q|U).
\end{align*}
It follows that \(J_2^{(q)}(\omega)\leq0\) for every \(q\) and every
admissible state. Hence \(R_2=0\).

Put \(C:=B_2E\). For the first coordinate, conditional independence
similarly gives
\begin{align*}
 J_1^{(q)}(\omega)
 &=I(V_1;B_1^q|U)_\omega-I(V_1;V_2C^q|U)_\omega\\
 &\leq I(V_1;B_1^q|U)_\omega-I(V_1;C^q|U)_\omega\\
 &\leq \max_{\rho^{A^q}} I_c(\rho,\mathcal N_1^{\otimes q}).
\end{align*}
To see the second inequality explicitly, for fixed \(u\) define
\[
 \bar\sigma_{u,v_1}
 :=\sum_{v_2}p(v_2|u)\sigma_{u,v_1,v_2},
 \qquad
 \bar\sigma_u:=\sum_{v_1}p(v_1|u)\bar\sigma_{u,v_1}.
\]
Then
\begin{align*}
 I(V_1;B_1^q)_{\omega_u}-I(V_1;C^q)_{\omega_u}
 =I_c(\bar\sigma_u,\mathcal N_1^{\otimes q})
 -\sum_{v_1}p(v_1|u)
 I_c(\bar\sigma_{u,v_1},\mathcal N_1^{\otimes q}).
\end{align*}
Coherent information is nonnegative on every input of a degradable channel: it is concave and vanishes on pure inputs. Thus, the last sum is nonnegative, proving the bound after averaging over \(u\). The maximum coherent information is additive for degradable channels \cite{DevetakShor2005}, and therefore, the last expression equals \(qQ(\mathcal N_1)\).

For achievability, let \(\rho^A\) maximize
\eqref{fq-eq-degraded-coherent-information} and choose a pure-state ensemble
\(\rho=\sum_{v_1}p(v_1)\psi_{v_1}\). Take \(U,V_2\) constant and prepare
\(\psi_{v_1}\). Since
\(W_{\mathcal N}\psi_{v_1}W_{\mathcal N}^{\dagger}\) is pure on
\(B_1C\), its two marginals have equal entropy. Hence
\begin{align*}
 I(V_1;B_1)-I(V_1;C)
 =H(B_1)_\rho-H(C)_\rho
 =I_c(\rho,\mathcal N_1),
\end{align*}
and \cref{fq-cor-enlarged-cq} achieves every strict rate below
\(Q(\mathcal N_1)\). Taking the closure proves
\eqref{fq-eq-degraded-region}.
\end{proof}

If the broadcast channel itself is isometric,
\(W_{\mathcal N}^{A\to B_1B_2}\), then \(E\) is one-dimensional and
\eqref{fq-eq-degrading-map} reduces to the usual condition that \(B_2\) can
be simulated from \(B_1\). The symmetric statement holds when
\(B_1E\) can instead be simulated from \(B_2\): then \(R_1=0\) and
\(R_2\) ranges from zero to the quantum capacity of receiver \(2\)'s
degradable marginal.

The next section applies the quantum-input model to the Blackwell and
Platypus isometries and compares confidential classical rates with rates
for quantum communication.

\section{Confidential classical versus quantum communication}
\label{sec:classical-quantum-comparison}
\label{sec:degr-platypus}

We now compare confidential classical communication with quantum
communication over the same isometric broadcast channels. We first determine both capacity regions for the Blackwell isometry and show that the confidential classical capacity region strictly exceeds the
quantum capacity region for this channel, see also Figure~\ref{fig:blackwell-cap-regions}. We further show a separation between the achievable rate regions  for the
Platypus isometry.

\subsection{The coherent Blackwell channel}
\label{bw-q-sec}

The coherent Blackwell channel is defined by the isometry
\begin{equation}
 U_{\mathrm{BW}}:\quad
 \lvert0\rangle\longmapsto\lvert00\rangle,\qquad
 \lvert1\rangle\longmapsto\lvert01\rangle,\qquad
 \lvert2\rangle\longmapsto\lvert10\rangle.
 \label{bw-q-isometry}
\end{equation}
Here the input is a qutrit and the two output qubits belong to receivers
$1$ and $2$. An input state $\theta$ is mapped to
$U_{\mathrm{BW}}\theta U_{\mathrm{BW}}^\dagger$. The isometry acts
linearly on superpositions. For example,
\begin{equation}
 U_{\mathrm{BW}}
 \frac{\lvert0\rangle+\lvert2\rangle}{\sqrt2}
 =
 \frac{\lvert0\rangle+\lvert1\rangle}{\sqrt2}^{B_1}
 \lvert0\rangle^{B_2}.
 \label{det-eq-blackwell-coherence}
\end{equation}
The channel accepts arbitrary qutrit states, while the classical Blackwell channel $\mathcal W_{\mathrm{BW}}$ from \cref{det-eq-blackwell-map}
accepts only classical symbols. For every blocklength $n$, its output is
supported on $\mathcal S^{\otimes n}$, where
\begin{equation}
 \mathcal S=\operatorname{span}
 \{\lvert00\rangle,\lvert01\rangle,\lvert10\rangle\}.
 \label{bw-output-support}
\end{equation}
We first determine its capacity for confidential classical messages and
then its capacity for quantum communication.

\subsubsection{Confidential classical communication}
\label{bw-c-sec}

The encoder may prepare an arbitrary state
$\theta_{m_1,m_2}^{A^n}$ for each pair of independent uniform messages.
These states may be mixed and entangled across channel uses.
The induced state is
\begin{equation}
 \Omega^{M_1M_2B_1^nB_2^n}
 =
 \frac{1}{|\mathcal M_1||\mathcal M_2|}
 \sum_{m_1,m_2}\ketbra{m_1,m_2}\otimes
 U_{\mathrm{BW}}^{\otimes n}
 \theta_{m_1,m_2}^{A^n}
 (U_{\mathrm{BW}}^\dagger)^{\otimes n}.
 \label{bw-c-code-state}
\end{equation}
Each receiver uses a local POVM to recover its message. We require
vanishing average decoding errors and
\[
 I(M_1;B_2^n|M_2)_\Omega\longrightarrow0,\qquad
 I(M_2;B_1^n|M_1)_\Omega\longrightarrow0.
\]
These are the conditional strong-secrecy requirements of the quantum-input
model in \cref{fq-sec-operational-model}, with a trivial environment (which suffices because the channel is isometric).
The converse below also applies to that model's trace-secrecy criterion:
\eqref{fq-eq-trace-to-mi} and \eqref{fq-eq-rate-bound} imply that both
mutual-information leakages are $o(n)$, which is sufficient for the
argument.

\begin{corollary}
[Confidential classical capacity of the Blackwell isometry]
\label{bw-c-capacity}
The confidential classical capacity region of the coherent Blackwell
channel is
\begin{equation}
 \mathcal C_{\mathrm{conf}}(U_{\mathrm{BW}})
 =
 \mathcal C_{\mathrm{conf}}(\mathcal W_{\mathrm{BW}}),
 \label{bw-c-region}
\end{equation}
where the right-hand side is the classical region in
\eqref{det-eq-blackwell-region}.
\end{corollary}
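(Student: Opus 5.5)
This is a direct consequence of \cref{thm:capacity-deterministic_quantum} combined with \cref{det-thm-capacity} and \cref{det-rem-injective}. We would proceed in three steps.

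\textbf{Step 1: the isometric case applies.} The Blackwell map $0\mapsto(0,0)$, $1\mapsto(0,1)$, $2\mapsto(1,0)$ is injective. Hence $U_{\mathrm{BW}}$ is an isometry of the form required in \cref{thm:capacity-deterministic_quantum}, with $B=B_1$ and $C=B_2$. Because the channel is isometric, the environment is trivial. The secrecy model of \cref{fq-sec-operational-model} therefore coincides with the one stated above, and \cref{thm:capacity-deterministic_quantum} gives
\[
 \mathcal C_{\mathrm{conf}}(U_{\mathrm{BW}})
 =\operatorname{conv}\bigcup_{p_X}\bigl\{(R_1,R_2)\in\mathbb R_+^2:R_1\le H(X|C)_{\omega_p},\ R_2\le H(X|B)_{\omega_p}\bigr\}.
\]
The converse in \cref{prop:converse-deterministic} already covers arbitrary coherent and entangled encoders. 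So no further argument about input coherence is needed; it is only a matter of checking that its hypotheses hold for this channel.

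\textbf{Step 2: identify the entropies.} In $\omega_p$ the registers $B$ and $C$ are classical copies of $Y_1=f_1(X)$ and $Y_2=f_2(X)$. Injectivity gives $H(X|Y_1,Y_2)=0$, so by \eqref{det-eq-injective-entropies} we have $H(X|C)_{\omega_p}=H(Y_1|Y_2)$ and $H(X|B)_{\omega_p}=H(Y_2|Y_1)$. Each rectangle in the union above is therefore exactly $\mathcal R_{\mathrm{det}}(p_X)$ from \eqref{det-eq-region-p}.

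\textbf{Step 3: remove the convex hull.} By \cref{det-thm-capacity}, the union $\bigcup_{p_X}\mathcal R_{\mathrm{det}}(p_X)$ is already closed and convex. The argument is concavity of conditional entropy in $p_X$: a convex combination of points in $\mathcal R_{\mathrm{det}}(p)$ and $\mathcal R_{\mathrm{det}}(q)$ lies in $\mathcal R_{\mathrm{det}}(\lambda p+(1-\lambda)q)$. The convex hull therefore adds nothing, and the region equals \eqref{det-eq-blackwell-region}, which is $\mathcal C_{\mathrm{conf}}(\mathcal W_{\mathrm{BW}})$.

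The only step needing care is Step 1. We must confirm that the trace-secrecy capacity definition of \cref{fq-sec-operational-model} and the strong-secrecy requirement stated here define the same region. This follows because \eqref{fq-eq-trace-to-mi} and \eqref{fq-eq-rate-bound} make the leakages $o(n)$ in the converse, while the achievability in \cref{prop:direct_coding_deterministic} gives strong secrecy. Beyond that, the proof is bookkeeping.
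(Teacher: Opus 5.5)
Your proposal is correct and follows the paper's route: the paper proves this corollary in one line by invoking \cref{thm:capacity-deterministic_quantum}. Your Step 2 (identifying $H(X|C)$ and $H(X|B)$ with $H(Y_1|Y_2)$ and $H(Y_2|Y_1)$ by injectivity), your Step 3 (dropping the convex hull using the convexity shown in \cref{det-thm-capacity}), and your check that the trace-secrecy and strong-secrecy regions coincide are the routine details that the paper leaves implicit.
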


\begin{proof}
The corollary follows immediately from Theorem~\ref{thm:capacity-deterministic_quantum}.
\end{proof}


\subsubsection{Quantum communication}
\label{bw-quantum-sec}

We use the unassisted entanglement-transmission model
\cite{DupuisHaydenLi2010}. For block-length $n$, the sender holds the
$A_i$ halves of two independent maximally entangled states
$\Phi_{d_i}^{R_iA_i}$, where $\dim R_i=\dim A_i=d_i$.
An arbitrary channel encodes $A_1A_2$ into the $n$ qutrit inputs.
Receiver $i$ applies a local decoding channel from $B_i^n$ to an
$d_i$-dimensional system $\widehat A_i$.
There is no preshared entanglement, feedback, or auxiliary communication.
A rate pair \((Q_1,Q_2)\) is achievable when the trace distance of the decoded joint
state from
$\Phi_{d_1}^{R_1\widehat A_1}\otimes
\Phi_{d_2}^{R_2\widehat A_2}$ tends to zero and
$\liminf_{n\to\infty}n^{-1}\log d_i\ge Q_i$.

Let
$I_c(R\rangle B)=-H(R|B)=H(B)-H(RB)$ denote the coherent information.
The following entropy bound applies to the output of any block encoder.

\begin{lemma}[Multiletter coherent-information bound]
\label{bw-q-lemma}
For every $n\ge1$, finite-dimensional $R_1,R_2$, and state
$\rho^{R_1R_2B_1^nB_2^n}$ whose output is supported on
$\mathcal S^{\otimes n}$,
\begin{equation}
 I_c(R_1\rangle B_1^n)_\rho+
 I_c(R_2\rangle B_2^n)_\rho\le n.
 \label{bw-q-multiletter}
\end{equation}
\end{lemma}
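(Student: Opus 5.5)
The plan is to reduce the two coherent informations to one joint entropy expression for the output on $\mathcal S^{\otimes n}$, and then to exploit the Blackwell constraint that the pattern $\lvert11\rangle$ never occurs. Let $F$ purify $\rho^{R_1R_2B_1^nB_2^n}$. Since $\rho$ is otherwise arbitrary, this purification is the only global structure available.

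The first step is to write
\[
 I_c(R_1\rangle B_1^n)+I_c(R_2\rangle B_2^n)
 =H(B_1^n)-H(R_1B_1^n)+H(B_2^n)-H(R_2B_2^n),
\]
and to trade one of the joint entropies for a term on the complementary side, using $H(R_1B_1^n)=H(R_2B_2^nF)$. I would then chain the bounds. Monotonicity of coherent information under enlarging the output gives $I_c(R_1\rangle B_1^n)\le I_c(R_1\rangle B_1^nB_2^nR_2)$. Adding $I_c(R_2\rangle B_2^n)$, the chain rule gives
\[
 I_c(R_1\rangle B_1^nB_2^nR_2)+I_c(R_2\rangle B_2^n)
 =H(B_1^n|B_2^nR_2)+H(B_2^n)-H(F).
\]
Strong subadditivity then bounds the right-hand side by $H(B_1^n|B_2^n)+H(B_2^n)-H(F)$. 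The symmetric chain, with the roles of $1$ and $2$ swapped, gives the analogous bound. The aim is a combination in which only conditional entropies of $B_1^n$ given a \emph{dephased} version of $B_2^n$ survive, together with a compensating negative term.

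The second step uses the support condition. Dephasing the conditioning system can only increase conditional entropy, so $H(B_1^n|B_2^n)\le H(B_1^n|Z_2^n)$, where $Z_2^n$ is the computational-basis measurement of $B_2^n$. On $\mathcal S$, the outcome $Z_{2,t}=1$ forces $B_{1,t}$ into $\lvert0\rangle$. After subadditivity over $t$, each letter therefore contributes at most $\Pr\{Z_{2,t}=0\}$ from receiver $1$'s conditional entropy. Symmetrically, each letter contributes at most $\Pr\{Z_{1,t}=0\}$ from receiver $2$'s side.

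The last step is to combine these per-letter contributions with the remaining $H(B_j^n)-H(F)$ terms and the constraint $\Pr\{Z_{1,t}=1\}+\Pr\{Z_{2,t}=1\}\le1$, which holds because $\lvert11\rangle$ is absent. The target is a single-letter function of the letter state on $\mathcal S$ whose maximum is $1$, as for the two endpoints $(1,0)$ and $(0,1)$.

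I expect this final balancing to be the main obstacle. The naive bounds give $H(B_1B_2)\le\log3$ per letter, which is too weak. The classical quantity $H(Y_1|Y_2)+H(Y_2|Y_1)$ also reaches $2\log\varphi>1$, so the argument must genuinely use the purification term $-H(F)$ and coherence, rather than dephased classical entropies alone.

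My first attempt would be a convex split. Write each letter state on $\mathcal S$ as a mixture over the flag $\{00\}$ versus $\operatorname{span}\{01,10\}$, which is the eigenspace decomposition of $Z_1+Z_2$. On the two-dimensional block $\operatorname{span}\{\lvert01\rangle,\lvert10\rangle\}$, the two receivers act as complementary outputs of a qubit, so $I_c(R_1\rangle B_1)+I_c(R_2\rangle B_2)\le0$ there by the complementary-channel identity, i.e.\ purity on $RB_1B_2F$. The entropy of the block flag would contribute at most one bit per use.

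Making this split rigorous multi-letter will require measuring the flag sequence, which is a function of the total excitation pattern, without disturbing the coherent information by more than that flag's entropy. If that fails, I would fall back on proving a superadditive single-letter bound via the chain rule over $t$, with histories $(B_1^{t-1},B_{2,t+1}^n)$ as in \cref{reg-lem-csiszar}.
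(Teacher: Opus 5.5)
There is a genuine gap, and it sits exactly at the step you single out as the main obstacle: nothing in the proposal yields a bound in which the $n$ bits are charged only once for the two receivers together. Steps 1--2 cannot work even in principle. Enlarging receiver~1's output to $B_1^nB_2^nR_2$ hands it the other receiver's systems. After strong subadditivity, your chain (and its mirror image, which gives the same quantity) equals $H(B_1^nB_2^n)-H(F)=I_c(R_1R_2\rangle B_1^nB_2^n)$. This is the coherent information to a single receiver holding both outputs. The left-hand side of the lemma is convex in $\rho$, so it suffices to consider pure states on $R_1R_2\otimes\mathcal S^{\otimes n}$. For these $H(F)=0$, and your bound becomes $H(B_1^nB_2^n)$. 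That equals $n\log3$ when the output is maximally mixed on $\mathcal S^{\otimes n}$. For example, take $R_2$ trivial and $R_1$ purifying that state; there the true left-hand side is $0$. Three consequences follow. The purification term cannot be the compensating resource. The split of the reference into $R_1$ and $R_2$ has already been discarded. The later dephasing and per-letter counting can only return $\max_q[(1-q)+h_2(q)]=\log3$ per use.

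The convex split does not close the gap either. The flag $b_{1,t}\oplus b_{2,t}$ is a joint observable. Measuring the flag sequence and then tracing out $B_2^n$ acts on $R_1B_1^n$ exactly as computational-basis dephasing $\Delta_1$, and symmetrically for receiver~2. Your observation that the sum is nonpositive given the flag therefore only says that coherent information through a dephased output is nonpositive; all of the content lies in the losses. For pure $\psi$ with $\alpha=\psi^{R_1B_1^n}$ and $\sigma=\psi^{R_2B_2^n}$, one has $I_c(R_1\rangle B_1^n)\le H(\Delta_1\alpha)-H(\alpha)=D(\alpha\Vert\Delta_1\alpha)$, and similarly for receiver~2. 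Each loss can separately equal $n$; take $R_1B_1^n$ maximally entangled and $B_2^n$ in $\lvert0\rangle^{\otimes n}$. Charging the flag entropy to each receiver therefore gives only $2n$. What is required is the joint complementarity relation $D(\alpha\Vert\Delta_1\alpha)+D(\sigma\Vert\Delta_2\sigma)\le n$, and the proposal supplies no mechanism for it. The Csisz\'ar-type fallback is not developed and does not address this coupling.

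The paper proves this relation as follows. Expand $\lvert\psi\rangle=\sum_x\lvert x\rangle^{B_1^n}\lvert\psi_x\rangle$ and form $\omega^{XD}=\sum_x\ketbra{x}\otimes\sigma_x$, which satisfies $H(\omega)=H(\Delta_1\alpha)$. Then use the phase operators $(-1)^{x\cdot y}$ on $B_2^n$. These fix every $\sigma_x$ because the support condition forces $x_ty_t=0$, while their uniform average dephases $B_2^n$. Data processing of relative entropy under the resulting channel gives $D(\sigma\Vert\Delta_2\sigma)\le D(\omega\Vert\tau^X\otimes\sigma)=n-H(\omega)+H(\sigma)$. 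Since $H(\alpha)=H(\sigma)$, this is exactly the required inequality.
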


\begin{proof}
Conditional entropy is concave, so the left side of
\eqref{bw-q-multiletter} is convex in $\rho$. A pure-state decomposition
within $R_1R_2\otimes\mathcal S^{\otimes n}$ therefore reduces the proof to
a pure state $\psi$. Put
\[
 L=R_1B_1^n,\qquad D=R_2B_2^n,\qquad
 \alpha=\psi^L,\qquad \sigma=\psi^D,\qquad
 E=H(\alpha)=H(\sigma).
\]
Let $\Delta_i$ dephase $B_i^n$ in the computational basis, acting as the
identity on any other registers. Expand
\[
 \lvert\psi\rangle
 =\sum_{x\in\{0,1\}^n}\lvert x\rangle^{B_1^n}
   \lvert\psi_x\rangle^{R_1D},\qquad
 \sigma_x=\Tr_{R_1}\ketbra{\psi_x}.
\]
Thus $\sigma_x$ are subnormalized positive operators,
$\sum_x\sigma_x=\sigma$, and
\[
 \omega^{XD}=\sum_x\ketbra{x}\otimes\sigma_x
\]
is a normalized cq state. Each nonzero $\psi_x$ is pure on $R_1D$,
so its two reduced operators have the same nonzero eigenvalues. The
block-diagonal entropy formula gives
\begin{equation}
 H(\omega)=H(\Delta_1\alpha).
 \label{bw-q-branch-entropy}
\end{equation}

For $x\in\{0,1\}^n$, define a phase operator on $D$ by
\[
 Z_x=I^{R_2}\otimes
 \sum_{y\in\{0,1\}^n}(-1)^{x\cdot y}\ketbra{y}^{B_2^n}.
\]
The support assumption implies that $\sigma_x$ is supported on strings
$y$ with $x_ty_t=0$ at every coordinate $t$. Hence
$Z_x\sigma_xZ_x=\sigma_x$. Consider the channel
\[
 \mathcal T(\eta^{XD})
 =\sum_x Z_x\langle x|\eta|x\rangle^D Z_x
\]
and the uniform state $\tau^X=2^{-n}I^X$. They satisfy
\begin{equation}
 \mathcal T(\omega)=\sigma,\qquad
 \mathcal T(\tau^X\otimes\sigma)
 =2^{-n}\sum_x Z_x\sigma Z_x=\Delta_2\sigma.
 \label{bw-q-phase-average}
\end{equation}
For quantum relative entropy
$D(a\Vert b)=\Tr a(\log a-\log b)$, data processing now gives
\begin{align}
 H(\Delta_2\sigma)-E
 &=D(\sigma\Vert\Delta_2\sigma)\notag\\
 &\le D(\omega\Vert\tau^X\otimes\sigma)
 =n-H(\omega)+E.
 \label{bw-q-dpi}
\end{align}
These relative entropies are finite because
$\operatorname{supp}\sigma_x\subseteq\operatorname{supp}\sigma$ and
$\operatorname{supp}\sigma\subseteq\operatorname{supp}(\Delta_2\sigma)$.
The first equality follows by taking the trace against the
block-diagonal operator $\log(\Delta_2\sigma)$ on its support.
Combining \eqref{bw-q-branch-entropy} and \eqref{bw-q-dpi} yields
\begin{equation}
 H(\Delta_1\alpha)+H(\Delta_2\sigma)-2E\le n.
 \label{bw-q-dephased-bound}
\end{equation}

If $p_x=\Tr\sigma_x$ and
$\alpha_x=\Tr_D\ketbra{\psi_x}$, then
$H(\Delta_1\alpha)=H(p)+\sum_{x:p_x>0}p_xH(\alpha_x/p_x)
\ge H(p)$.
Moreover, $H(p)=H(\Delta_1\psi^{B_1^n})\ge H(B_1^n)_\psi$,
since dephasing cannot decrease entropy. Applying the same
block-diagonal entropy formula to $\Delta_2\sigma$ gives
$H(\Delta_2\sigma)\ge H(B_2^n)_\psi$. Thus
\[
 I_c(R_1\rangle B_1^n)_\psi+
 I_c(R_2\rangle B_2^n)_\psi
 =H(B_1^n)_\psi+H(B_2^n)_\psi-2E\le n.
\]
The convexity reduction proves the claim for mixed states as well.
\end{proof}

\begin{proposition}[Quantum capacity of the coherent Blackwell channel]
\label{bw-q-capacity}
The unassisted quantum-capacity region of $U_{\mathrm{BW}}$ is
\begin{equation}
 \mathcal Q(U_{\mathrm{BW}})
 =\{(Q_1,Q_2)\in\mathbb R_{\ge0}^2:Q_1+Q_2\le1\}.
 \label{bw-q-region}
\end{equation}
\end{proposition}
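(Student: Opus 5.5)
The plan has two parts: a converse obtained from Lemma~\ref{bw-q-lemma}, and achievability of the two corner points $(1,0)$ and $(0,1)$ followed by time sharing.

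\emph{Converse.} Fix an entanglement-transmission code with dimensions $d_1,d_2$ and error $\epsilon_n\to0$. Let $\rho^{R_1R_2B_1^nB_2^n}$ be the state produced by the encoder acting on $\Phi_{d_1}\otimes\Phi_{d_2}$ followed by $U_{\mathrm{BW}}^{\otimes n}$. Its output lies in $\mathcal S^{\otimes n}$, so Lemma~\ref{bw-q-lemma} gives
\[
I_c(R_1\rangle B_1^n)_\rho+I_c(R_2\rangle B_2^n)_\rho\le n .
\]
Receiver~$i$'s local decoder $B_i^n\to\widehat A_i$ cannot increase coherent information, so $I_c(R_i\rangle B_i^n)\ge I_c(R_i\rangle\widehat A_i)$. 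The reduced decoded state on $R_i\widehat A_i$ is within trace distance $\epsilon_n$ of $\Phi_{d_i}$. By Alicki--Fannes--Winter continuity of conditional entropy,
\[
I_c(R_i\rangle\widehat A_i)\ge\log d_i-2\epsilon_n\log d_i-g(\epsilon_n),
\]
where $g(\epsilon_n)\to0$. Therefore
\[
(1-2\epsilon_n)(\log d_1+\log d_2)\le n+o(1).
\]
Dividing by $n$ and taking the limit gives $Q_1+Q_2\le1$. Nonnegativity is part of the definition.

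\emph{Achievability.} The region is the convex hull of $(0,0)$, $(1,0)$ and $(0,1)$, so time sharing reduces the task to the two corners. For $(1,0)$, restrict inputs to $\operatorname{span}\{\lvert0\rangle,\lvert2\rangle\}$. The isometry then acts as $\lvert0\rangle\mapsto\lvert0\rangle^{B_1}\lvert0\rangle^{B_2}$ and $\lvert2\rangle\mapsto\lvert1\rangle^{B_1}\lvert0\rangle^{B_2}$, as in \eqref{det-eq-blackwell-coherence}. This is a noiseless qubit channel to receiver~1, with $B_2$ left in the fixed state $\lvert0\rangle$ and decoupled. One qubit per use therefore reaches receiver~1 perfectly, and receiver~2 can prepare a trivial $\widehat A_2$ with $d_2=1$. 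The corner $(0,1)$ is symmetric, using the inputs $\lvert0\rangle$ and $\lvert1\rangle$. Time sharing with block fractions $\lambda$ and $1-\lambda$ yields every $(Q_1,Q_2)$ with $Q_1+Q_2\le1$, and the closure covers the boundary.

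The main obstacle is bookkeeping in the converse rather than anything conceptual. The continuity step must be applied to each receiver's reduced state on $R_i\widehat A_i$, which is close to $\Phi_{d_i}$ because the joint state is close to the product of the two maximally entangled states. The correction term $\epsilon_n\log d_i$ is $o(n)$ because $\log d_i\le n\log2$ once coherent information is used to bound it. Lemma~\ref{bw-q-lemma} supplies the essential entropy inequality, so no further structural argument is needed.
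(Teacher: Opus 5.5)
Your proposal is correct and follows the paper's proof essentially step for step. The converse combines Lemma~\ref{bw-q-lemma} with data processing through each local decoder and Winter's continuity bound to obtain $(1-2\epsilon_n)\log(d_1d_2)\le n+2g(\epsilon_n)$, and achievability time-shares the noiseless-qubit subspaces $\operatorname{span}\{\lvert0\rangle,\lvert2\rangle\}$ and $\operatorname{span}\{\lvert0\rangle,\lvert1\rangle\}$; your closing remark about controlling $\epsilon_n\log d_i$ is unnecessary, since that inequality can be divided by $n$ directly (and strictly it gives $\log d_i\le(n+g(\epsilon_n))/(1-2\epsilon_n)$ rather than $\log d_i\le n$).
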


\begin{proof}
Let $\varepsilon_n$ be the joint trace-distance error of an $n$-block
code, and let $\rho$ be its state before decoding. Each decoded marginal
has error at most $\varepsilon_n$. Conditional entropy continuity
\cite{Winter2016} and data processing on the receiver give, for $i=1,2$,
\[
 I_c(R_i\rangle B_i^n)_\rho
 \ge I_c(R_i\rangle\widehat A_i)
 \ge (1-2\varepsilon_n)\log d_i-g(\varepsilon_n),
 \qquad
 g(t)=(1+t)h_2\!\left(\frac{t}{1+t}\right).
\]
The output of every encoder is supported on $\mathcal S^{\otimes n}$.
By \cref{bw-q-lemma},
\begin{equation}
 (1-2\varepsilon_n)\log(d_1d_2)
 \le n+2g(\varepsilon_n).
 \label{bw-q-finite-error}
\end{equation}
Dividing by $n$ and letting $\varepsilon_n\to0$ proves the converse,
including for mixed encodings and inputs entangled across channel uses.

The input subspace
$\operatorname{span}\{\lvert0\rangle,\lvert2\rangle\}$ transmits a
noiseless qubit to receiver $1$ while receiver $2$ receives
$\lvert0\rangle$. The subspace
$\operatorname{span}\{\lvert0\rangle,\lvert1\rangle\}$ does the same for
receiver $2$. Time sharing between these two subspace codes achieves
every nonnegative rate pair with sum at most one.
\end{proof}

Together, \cref{bw-c-capacity,bw-q-capacity} give two different capacity
regions for the same isometric channel. The confidential classical achievable rate region
contains the triangle generated by $(0,0)$, $(1,0)$, and $(0,1)$, and it also
contains the symmetric point $(\log_2\varphi,\log_2\varphi)$, whose sum
$2\log_2\varphi>1$ exceeds the quantum sum-rate capacity. When rates are
compared numerically in bits and qubits per use, respectively,
\begin{equation}
 \mathcal Q(U_{\mathrm{BW}})
 \subsetneq\mathcal C_{\mathrm{conf}}(U_{\mathrm{BW}}).
 \label{bw-capacity-separation}
\end{equation}

The full rate regions are shown in Figure~\ref{fig:blackwell-cap-regions}.

\begin{figure}[H]
\centering
\pgfplotstableread{
Rone Rtwo
0.00000000 1.00000000
0.00001950 0.99999905
0.00133473 0.99990311
0.01019904 0.99902332
0.03299736 0.99609139
0.07009533 0.99013835
0.11771578 0.98087009
0.17128661 0.96849020
0.22707328 0.95342826
0.28252238 0.93615526
0.33608943 0.91709127
0.38695588 0.89657095
0.43478081 0.87483837
0.47951845 0.85205394
0.52129458 0.82830517
0.56032658 0.80361716
0.59687317 0.77796158
0.63120363 0.75126378
0.66357919 0.72340808
0.69424191 0.69424191
0.72340808 0.66357919
0.75126378 0.63120363
0.77796158 0.59687317
0.80361716 0.56032658
0.82830517 0.52129458
0.85205394 0.47951845
0.87483837 0.43478081
0.89657095 0.38695588
0.91709127 0.33608943
0.93615526 0.28252238
0.95342826 0.22707328
0.96849020 0.17128661
0.98087009 0.11771578
0.99013835 0.07009533
0.99609139 0.03299736
0.99902332 0.01019904
0.99990311 0.00133473
0.99999905 0.00001950
1.00000000 0.00000000
}\BlackwellBoundary

\begin{tikzpicture}
\begin{axis}[
  width=0.72\linewidth,
  height=0.61\linewidth,
  axis lines=left,
  xmin=0, xmax=1.04,
  ymin=0, ymax=1.04,
  xlabel={Receiver 1},
  ylabel={Receiver 2},
  xtick={0,0.25,0.5,0.75,1},
  ytick={0,0.25,0.5,0.75,1},
  tick label style={font=\small},
  label style={font=\small},
  clip=false
]
\addplot[
  name path=BlackwellCurve,
  draw=none
] table[x=Rone,y=Rtwo] {\BlackwellBoundary};

\addplot[
  name path=BlackwellBase,
  draw=none
] coordinates {(0,0) (1,0)};

\addplot[
  blue!18
] fill between[
  of=BlackwellCurve and BlackwellBase
];

\addplot[
  fill=orange!35,
  fill opacity=0.75,
  draw=orange!70!black,
  thick
] coordinates {
  (0,0)
  (1,0)
  (0,1)
  (0,0)
};

\addplot[
  very thick,
  blue!60!black,
  line join=round
] table[x=Rone,y=Rtwo] {\BlackwellBoundary};


\addplot[
  only marks,
  mark=*,
  mark size=2.4pt,
  black
] coordinates {(0,1) (1,0) (0.69424191,0.69424191)};

\node[
  anchor=south west,
  font=\small,
  inner sep=2pt
] at (axis cs:0.704,0.704)
  {$(\log_2\varphi,\log_2\varphi)$};
\node[
  font=\small,
  text=orange!70!black
] at (axis cs:0.4,0.4)
  {$Q(U_{\mathrm{BW}})$};
\node[
  font=\small,
  text=blue!55!black
] at (axis cs:0.6,0.6)
  {$C_{\mathrm{conf}}(U_{\mathrm{BW}})$};
\end{axis}
\end{tikzpicture}
\caption{Quantifying rates numerically in bits per use for the confidential classical capacity and in qubits per use for the quantum capacity, we plot the two capacity regions for the Blackwell isometry.}
\label{fig:blackwell-cap-regions}
\end{figure}

\subsection{The Platypus isometry}
\label{sec:plat-example}

For \(0\le s\le\tfrac12\), consider the isometry \cite{LeditzkyEtAl2023}
\begin{align}
 F_s|0\rangle
 &=\sqrt{s}\,|00\rangle+\sqrt{1-s}\,|11\rangle,\nonumber\\
 F_s|1\rangle&=|20\rangle,\qquad
 F_s|2\rangle=|21\rangle,
 \label{eq:plat-isometry}
\end{align}
with a qutrit input, a qutrit output $B_1$, and a qubit output $B_2$.
Write \(\mathcal N_s\) and \(\mathcal N_s^c\) for its marginals to
\(B_1\) and \(B_2\), respectively. The restriction to
\(s\le\tfrac12\) loses no generality: \(F_s\) and \(F_{1-s}\) are related
by swaps of the input basis vectors \(|1\rangle,|2\rangle\) and of the two
output computational bases. Because \(F_s\) is itself an isometry, the additional environment \(E\) in the quantum-input model is one-dimensional.

At $s=0$, this isometry is equivalent to the Blackwell isometry on its
output support. To see this, interchange the input labels $0$ and $2$,
relabel the occupied $B_1$ basis vectors $|2\rangle,|1\rangle$ as
$|0\rangle,|1\rangle$, and swap the $B_2$ basis vectors. Thus the
Blackwell results also determine both capacity regions at $s=0$.

\begin{proposition}[Confidential-region bounds]
\label{prop:plat-private-bounds}
For every \(0\le s\le\tfrac12\),
\begin{equation}
 \operatorname{conv}\{(0,0),(1,0),(0,1)\}
 \subseteq \mathcal C_{\mathrm{conf}}(F_s)
 \subseteq [0,1]^2.
 \label{eq:plat-triangle-square}
\end{equation}
The inner inclusion is attained by zero-error, perfectly secret one-use codes
and time sharing.
\end{proposition}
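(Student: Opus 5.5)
The inner inclusion will come from two explicit one-use codes plus time sharing. The outer inclusion will come from the Fano-plus-leakage converse already used for \cref{prop:converse-deterministic}, together with a support argument for receiver~1's qutrit.

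\emph{Inner bound.} For the point \((0,1)\), I would use the classical inputs \(|1\rangle,|2\rangle\), which \(F_s\) maps to \(|2\rangle^{B_1}|0\rangle^{B_2}\) and \(|2\rangle^{B_1}|1\rangle^{B_2}\). Receiver~2 reads one bit with zero error, and receiver~1 always holds \(|2\rangle\), so the leakage to receiver~1 is exactly zero.

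For the point \((1,0)\), I would use the private encoder randomness permitted in \cref{fq-def-code}:
\begin{itemize}
\item message \(0\) is sent as \(|0\rangle\);
\item message \(1\) is sent as the mixed input \(s\ketbra{1}+(1-s)\ketbra{2}\).
\end{itemize}
Both preparations give the same \(B_2\) marginal \(s\ketbra{0}+(1-s)\ketbra{1}\), so the bit is perfectly secret from receiver~2. At \(B_1\), message \(0\) is supported on \(\operatorname{span}\{|0\rangle,|1\rangle\}\) and message \(1\) on \(|2\rangle\). These supports are orthogonal, so receiver~1 decodes with zero error. The environment is trivial because \(F_s\) is an isometry.

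The origin is trivial. Time sharing between these codes on disjoint blocks, with the messages split accordingly, gives a product code: errors stay zero and each conditional leakage is a sum of zeros. This yields the full triangle.

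\emph{Outer bound for \(R_2\).} Since \(\dim B_2=2\), \eqref{fq-eq-rate-bound} already gives \(R_2\le1\).

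\emph{Outer bound for \(R_1\).} This is the main step, because the crude dimension bound only gives \(\log 3\). I would follow \eqref{det-c-converse-one}. Giving \(B_2^n\) to receiver~1 and using Fano and secrecy yields
\[
\log|\cM_1|\le I(M_1;B_2^n|M_2)+I(M_1;B_1^n|B_2^nM_2)+o(n)\le H(B_1^n|B_2^n)+o(n).
\]
The key is then to bound \(H(B_1^n|B_2^n)\) by a support argument:
\begin{itemize}
\item Dephase \(B_2^n\) into a classical register \(Z^n\). By data processing, \(H(B_1^n|B_2^n)\le H(B_1^n|Z^n)\).
\item The output of \(F_s\) lies in \(\operatorname{span}\{|00\rangle,|11\rangle,|20\rangle,|21\rangle\}\). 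Hence for \(z=0\), \(B_1\) is confined to \(\operatorname{span}\{|0\rangle,|2\rangle\}\), and for \(z=1\) to \(\operatorname{span}\{|1\rangle,|2\rangle\}\).
\item Therefore each conditional state \(\langle z^n|\rho|z^n\rangle\) of \(B_1^n\) lives in a \(2^n\)-dimensional subspace, so \(H(B_1^n|Z^n)\le n\).
\end{itemize}
Dividing by \(n\) and using vanishing leakage via \eqref{fq-eq-trace-to-mi} gives \(R_1\le1\). Together with \(R_2\le1\), this gives \(\mathcal C_{\mathrm{conf}}(F_s)\subseteq[0,1]^2\).
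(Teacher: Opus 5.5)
The inner bound and the bound $R_2\le1$ are fine. The paper obtains $R_2\le1$ from $C(\mathcal N_s^c)=1$, which is the same qubit-dimension bound you use. The gap is in your $R_1$ converse. The second inequality in your display needs $I(M_1;B_1^n|B_2^nM_2)\le H(B_1^n|B_2^n)$, that is, $H(B_1^n|B_2^nM_1M_2)\ge0$. This fails because $B_2^n$ is a quantum system that can be entangled with $B_1^n$.

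The intermediate bound $\log|\cM_1|\le H(B_1^n|B_2^n)+o(n)$ is in fact false. Take your own $(1,0)$ code used $n$ times, and write $\lvert\phi_s\rangle=\sqrt s\lvert00\rangle+\sqrt{1-s}\lvert11\rangle$. The per-use average output is $\tfrac12\ketbra{\phi_s}+\tfrac12\bigl(s\ketbra{20}+(1-s)\ketbra{21}\bigr)$. Its $B_2$ marginal is $s\ketbra{0}+(1-s)\ketbra{1}$. Hence $H(B_1|B_2)=1+\tfrac12h_2(s)-h_2(s)=1-\tfrac12h_2(s)$ per use, while $\log|\cM_1|=n$. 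The deficit $\tfrac n2h_2(s)$ is linear in $n$ for every $0<s\le\tfrac12$. Your later step $H(B_1^n|B_2^n)\le H(B_1^n|Z^n)$ is true, but it comes after the inequality has already failed.

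The fix is to reorder the steps exactly as in \eqref{det-c-converse-one}. First measure $B_2^n$ in the computational basis to get $Z^n$; this leaves the marginal on $M_1M_2B_1^n$ unchanged. Then
\[
I(M_1;B_1^n|M_2)\le I(M_1;Z^n|M_2)+I(M_1;B_1^n|Z^nM_2)\le\delta_{1,n}+H(B_1^n|Z^n).
\]
Here $H(B_1^n|Z^nM_1M_2)\ge0$ because every conditioning register is now classical. Your support argument then gives $H(B_1^n|Z^n)\le n$. As a check, the code above has $H(B_1|Z)=1$ per use, consistent with $R_1=1$.

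With this correction your route is valid and differs from the paper's. The paper averages the encoder over the other message to obtain a point-to-point code. It then invokes $C(\mathcal N_s)=1$ from Leditzky et al., a nontrivial external result that needs no secrecy. Your corrected argument is self-contained and uses only the support structure of $F_s$. However, it relies essentially on secrecy: without a vanishing leakage term, $I(M_1;Z^n|M_2)$ alone could be as large as $n$.
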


\begin{proof}
For the point \((1,0)\), use the two equiprobable input states
\begin{equation}
 \sigma_0=|0\rangle\!\langle0|,
 \qquad
 \sigma_1=s|1\rangle\!\langle1|
              +(1-s)|2\rangle\!\langle2|.
 \label{eq:plat-private-code-one}
\end{equation}
Receiver~1 obtains
\begin{align}
 \mathcal N_s(\sigma_0)
 &=s|0\rangle\!\langle0|+(1-s)|1\rangle\!\langle1|,\nonumber\\
 \mathcal N_s(\sigma_1)&=|2\rangle\!\langle2|,
 \label{eq:plat-private-code-one-bob}
\end{align}
which have orthogonal supports. Receiver~2 obtains the same state for the
two messages:
\begin{equation}
 \mathcal N_s^c(\sigma_0)=\mathcal N_s^c(\sigma_1)
 =s|0\rangle\!\langle0|+(1-s)|1\rangle\!\langle1|.
 \label{eq:plat-private-code-one-eve}
\end{equation}
Thus one confidential bit is transmitted to receiver~1 with zero error
and perfect secrecy. For the point \((0,1)\), use the equiprobable inputs
\(|1\rangle\!\langle1|\) and \(|2\rangle\!\langle2|\). Receiver~2
obtains the orthogonal states \(|0\rangle\!\langle0|\) and
\(|1\rangle\!\langle1|\), while receiver~1 obtains
\(|2\rangle\!\langle2|\) in both cases. Time sharing gives the left
inclusion in \eqref{eq:plat-triangle-square}.

Write $C$, $P$, and $Q$ for point-to-point classical, private classical,
and quantum capacity, respectively. The Platypus marginals satisfy
\begin{equation}
 P(\mathcal N_s)=C(\mathcal N_s)=1,
 \qquad
 P(\mathcal N_s^c)=Q(\mathcal N_s^c)=C(\mathcal N_s^c)=1
 \label{eq:plat-marginal-capacities}
\end{equation}
\cite{LeditzkyEtAl2023}. Given any broadcast code, average its stochastic
encoder uniformly over the other message. For receiver \(i\), this is a
point-to-point classical code for the corresponding marginal channel with
the same message size and the same average decoding error. Hence
\(R_1\le C(\mathcal N_s)=1\) and
\(R_2\le C(\mathcal N_s^c)=1\), proving the right inclusion.
\end{proof}

The coding theorem also provides inner bounds from finite ensembles of
qutrit states. Each family \(x\mapsto\sigma_x\) induces the cq broadcast
channel
\begin{equation}
 x\longmapsto F_s\sigma_xF_s^\dagger,
 \label{eq:plat-induced-cq-channel}
\end{equation}
to which \cref{thm:main} applies directly. Optimizing over these families
and their auxiliary distributions provides a way to investigate whether rates beyond time sharing are achievable.

For quantum communication, we use the unassisted entanglement-transmission
model of \cref{bw-quantum-sec}. Any simultaneous code with rates
$(Q_1,Q_2)$ gives a point-to-point code for the first marginal: initialize
the second source input in its maximally mixed state, discard its
reference, and retain the first receiver's decoder. The error for the
first task cannot increase. Consequently, $Q_1\le Q(\mathcal N_s)$.

\begin{proposition}[A Platypus rate-pair separation]
\label{prop:plat-private-quantum-separation}
For \(0<s\le\tfrac12\), set
\begin{equation}
 u_s:=\log\!\left(1+\sqrt{1-s}\right)<1.
 \label{eq:plat-quantum-upper}
\end{equation}
For every \(\lambda\in(u_s,1)\), the confidential classical rate pair
\begin{equation}
 (R_1,R_2)=(\lambda,1-\lambda)
 \label{eq:plat-private-separated-pair}
\end{equation}
is achievable, whereas the same pair is not achievable for simultaneous
quantum communication.
\end{proposition}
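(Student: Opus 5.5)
The plan has three parts: achievability from the triangle of \cref{prop:plat-private-bounds}, a reduction of the quantum converse to a point-to-point bound for the first marginal, and a single-letter upper bound on $Q(\mathcal N_s)$.

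\emph{Achievability.} This part is immediate. By \cref{prop:plat-private-bounds}, the triangle $\operatorname{conv}\{(0,0),(1,0),(0,1)\}$ lies in $\mathcal C_{\mathrm{conf}}(F_s)$. For $\lambda\in(u_s,1)\subset[0,1]$, the pair $(\lambda,1-\lambda)=\lambda(1,0)+(1-\lambda)(0,1)$ lies on its hypotenuse. It is therefore attained by time sharing the two zero-error, perfectly secret one-use codes in the proportions $\lambda$ and $1-\lambda$. I will also check $u_s<1$: for $0<s\le\tfrac12$ we have $\sqrt{1-s}<1$, so $1+\sqrt{1-s}<2$ and $\log(1+\sqrt{1-s})<1$. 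Hence the interval $(u_s,1)$ is nonempty.

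\emph{Quantum converse reduction.} I will use the reduction stated just before the proposition. Any simultaneous entanglement-transmission code with rates $(Q_1,Q_2)$ yields a point-to-point code for $\mathcal N_s$, so $Q_1\le Q(\mathcal N_s)$. It therefore suffices to prove $Q(\mathcal N_s)\le u_s$. Then $\lambda>u_s\ge Q(\mathcal N_s)$ rules out $Q_1=\lambda$, regardless of $Q_2$.

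\emph{Main step: $Q(\mathcal N_s)\le\log(1+\sqrt{1-s})$.} This is the main obstacle, because the Platypus channel is not degradable. Single-letter coherent information therefore does not give a converse, and a regularization-free bound is needed.
\begin{itemize}
\item I would first look for this bound in \cite{LeditzkyEtAl2023} and cite it directly if it is stated there.
\item Failing that, I would derive it from the Holevo--Werner transposition bound $Q(\mathcal N)\le\log\lVert\Theta\circ\mathcal N\rVert_\diamond$, where $\Theta$ is the transpose on $B_1$. This bound needs no regularization, since the diamond norm is multiplicative under tensor products.
\item To evaluate it, I would write the Choi operator of $\mathcal N_s$ from the Kraus operators $K_0=\sqrt s\,\ketbra{0}+\ket{1}\!\bra{0}\cdot0+\ldots$ obtained by tracing $B_2$ in \eqref{eq:plat-isometry}. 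Concretely, $K_{b}=\langle b|^{B_2}F_s$ for $b=0,1$, which gives $K_0=\sqrt s\,\ket0\!\bra0+\ket2\!\bra1$ and $K_1=\sqrt{1-s}\,\ket1\!\bra0+\ket2\!\bra2$.
\item Then I would bound $\lVert\Theta\circ\mathcal N_s\rVert_\diamond$ by the dual SDP, exhibiting feasible $Y_0,Y_1\ge0$ with $Y_0-Y_1$ matching the partially transposed Choi operator. The goal is to show that the $B$-side operator norm is at most $1+\sqrt{1-s}$.
\item The off-diagonal coherence between $\ket0$ and $\ket2$ (through $K_0$) and between $\ket1$ and $\ket2$ (through $K_1$) produces negative eigenvalues after partial transpose. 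Their magnitudes are of order $\sqrt{s}$ and $\sqrt{1-s}$, which is where a term like $\sqrt{1-s}$ should enter.
\item If the diamond-norm computation fails to close, the fallback is the Rains information or max-Rains bound, with a PPT state guessed by the same symmetry.
\end{itemize}
Assembling the three parts gives the claimed separation.
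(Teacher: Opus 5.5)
Your proposal is correct and follows the paper's proof. Achievability is by time sharing the two zero-error, perfectly secret one-use codes of \cref{prop:plat-private-bounds}, and the converse reduces a simultaneous quantum code to a point-to-point code for $\mathcal N_s$ and cites $Q(\mathcal N_s)\le\log(1+\sqrt{1-s})$ from \cite{LeditzkyEtAl2023}. Your transposition-bound fallback also closes, though the paper simply cites the bound: there are $Y_0,Y_1\ge0$ with $Y_0-Y_1$ equal to the partially transposed Choi operator and $\lVert\Tr_{B_1}(Y_0+Y_1)\rVert_\infty=1+\sqrt{1-s}$ for $s\le\tfrac12$, where the norm is taken on the input system after tracing out $B_1$, not on a ``$B$-side'' operator.
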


\begin{proof}
The confidential classical pair follows by time sharing between the two
one-use codes in the proof of \cref{prop:plat-private-bounds}.
On the other hand, every simultaneous quantum code gives a point-to-point
quantum code for the first marginal, and hence \(Q_1\le Q(\mathcal N_s)\).
The upper bound
\begin{equation}
 Q(\mathcal N_s)\le
 \log\!\left(1+\sqrt{1-s}\right)=u_s
 \label{eq:plat-q-bound-used}
\end{equation}
was proved in \cite{LeditzkyEtAl2023}. Since \(\lambda>u_s\), the quantum
rate pair in \eqref{eq:plat-private-separated-pair} is impossible.
\end{proof}

For every $0<s\le\tfrac12$, the proposition gives confidential classical achievable
rate pairs with both coordinates positive that lie outside the quantum
region. Determining the full confidential classical and quantum-capacity regions for these parameters would sharpen this comparison.

\section{Discussion}

The normalized likelihood selector gives a common encoding law for
reliability and secrecy.  Its planted representation exposes the
independence needed for fine-index packing.  Its two posterior
orientations instead make one selected correlation index uniform, allowing a
one-sided resolvability argument for the other codebook.  Comparing both
orientations with the global selector gives the bipartite cq
resolvability bound.  These ingredients establish the quantum-output
counterpart of the classical confidential-message inner bound under
conditional strong secrecy.

The auxiliary-alphabet reduction makes this cq inner bound a
finite-dimensional optimization.  Its proof uses entropy concavity to
reduce the supports while preserving or increasing both the averaged rate
bounds.  The multiletter capacity formulas follow by applying the coding
theorem to blocks, with arbitrary quantum preparations allowed in the
receiver--environment secrecy model.  Finding conditions under which
these formulas become single-letter remains a natural question.
A finite-block bipartite cq resolvability bound for the likelihood
selector would also extend the present analysis beyond its asymptotic
form.

The Blackwell comparison shows why the input model matters.  The
classical channel specifies outputs for classical symbols, whereas its
coherent isometric extension also transmits superpositions.  For this
extension, a multiletter entropy bound determines the quantum-capacity
region, while basis encoding already achieves confidential classical
rate pairs outside that region.  The Platypus example gives another
separation at specified rate pairs. 

\section*{Acknowledgements}

This work was supported by the Natural Sciences and Engineering
Research Council of Canada (NSERC)
(ALLRP-586858-2023, RGPIN-2025-02094, ALLRP-578455-2022).
Nous remercions le Conseil de recherches en sciences naturelles
et en g\'enie du Canada (CRSNG) de son soutien
(ALLRP-586858-2023, RGPIN-2025-02094, ALLRP-578455-2022). F.S.'s work is supported by the European Commission through a
Marie Sk{\l}odowska-Curie Global Fellowship. He gratefully acknowledges
the hospitality of the Institute for Quantum Computing (IQC),
University of Waterloo, and Freie Universit\"at Berlin.
AI was used during the manuscript's editing and the development and debugging of the code used for our numerical findings. All AI-assisted material was reviewed and verified by the authors, who take full responsibility for the results and conclusions.

\begingroup
\small
\bibliographystyle{unsrt}
\bibliography{likelihood_marton_confidential}
\endgroup

\appendix
\section{Proof of the memoryless cq soft-covering lemma}
\label{app:cq-soft-covering}
\label{appn:cq-soft-covering}
Delete letters outside \(\supp q_X\), and restrict the output Hilbert space
to \(\supp\bar\zeta\).  This is without loss of generality.  Indeed,
if \(\lvert\psi\rangle\in\ker\bar\zeta\), then
\(0=\sum_xq_X(x)\langle\psi|\zeta_x|\psi\rangle\).  Positivity implies
\(\zeta_x|\psi\rangle=0\) for every \(x\) with \(q_X(x)>0\), and hence
\(\supp\zeta_x\subseteq\supp\bar\zeta\).  Strong-typical sequences below
therefore contain no zero-probability letters, and all nonzero eigenvalues
relevant to the estimates lie in fixed finite supports.

Write
\[
 H(Z)=H(\bar\zeta),\qquad H(Z|X)=\sum_xq_X(x)H(\zeta_x),
 \qquad I(X;Z)=H(Z)-H(Z|X).
\]
Choose \(\delta>0\) with \(R>I(X;Z)+4\delta\).  Standard typical-subspace
estimates \cite{Wilde2017} give a spectral typical projector \(\Pi\) of
\(\bar\zeta^{\otimes n}\) obeying, for some \(a>0\),
\begin{align}
 \Tr[\Pi\bar\zeta^{\otimes n}]&\ge1-2^{-an},
 \label{eq:soft-average-success}\\
 \rank\Pi&\le2^{n[H(Z)+\delta]}.
 \label{eq:soft-average-rank}
\end{align}
Choose a strong-typicality tolerance \(\tau>0\), smaller than
\(\frac12\min_{x:q_X(x)>0}q_X(x)\), such that every \(\tau\)-typical type
\(\widehat q\) satisfies
\[
 \left|\sum_x \left[ \widehat q(x)-q_X(x)\right] H(\zeta_x)\right|\le\delta.
\]
Choose the spectral typicality widths for the finitely many states
\(\zeta_x\) so that their total eigenvalue slack is at most \(\delta\).
For a \(\tau\)-typical \(x^n\), permute tensor factors so that equal letters
are contiguous, diagonalize each \(\zeta_x\), and take the tensor product of
the corresponding spectral typical projectors over the resulting letter
blocks.  Use the identity on a zero-count block and undo the permutation.
Denote the resulting projector by \(\Pi_{x^n}\).
Finite-alphabet large-deviation bounds for the letter counts and conditional
eigenvalue sequences, after decreasing \(a\) if necessary, give estimates
uniform over
\(x^n\in\cT_\tau^{(n)}\), such that
\begin{align}
 q_X^{\otimes n}(\cT_\tau^{(n)})&\ge1-2^{-an},
 \label{eq:soft-classical-success}\\
 \Tr[\Pi_{x^n}\zeta_{x^n}]&\ge1-2^{-an},
 \label{eq:soft-conditional-success}\\
 \Pi_{x^n}\zeta_{x^n}\Pi_{x^n}
 &\le2^{-n[H(Z|X)-2\delta]}\Pi_{x^n}.
 \label{eq:soft-conditional-bound}
\end{align}
The last two statements hold for \(x^n\in\cT_\tau^{(n)}\).  They follow by
multiplying eigenvalues within each conditional block.  No simultaneous
diagonalization of the different \(\zeta_x\)'s is used.

For \(x^n\in\cT_\tau^{(n)}\), set
\[
 \widetilde\zeta_{x^n}=\Pi_{x^n}\zeta_{x^n}\Pi_{x^n},
 \qquad
 \zeta'_{x^n}=\Pi\widetilde\zeta_{x^n}\Pi,
\]
and set both operators to zero for atypical \(x^n\).  The gentle measurement
lemma gives the explicit estimate
\begin{equation}
 \begin{split}
 \E_{X^n}\lVert\zeta_{X^n}-\widetilde\zeta_{X^n}\rVert_1
 &\le q_X^{\otimes n}((\cT_\tau^{(n)})^c)\\
 &\quad+2\E\!\left[\one\{X^n\in\cT_\tau^{(n)}\}
 \sqrt{1-\Tr[\Pi_{X^n}\zeta_{X^n}]}\right]
 \le2^{-a_1n}
 \end{split}
 \label{eq:soft-first-truncation}
\end{equation}
for some \(a_1>0\), where the exponent may be decreased from line to line, and $(\cT_\tau^{(n)})^c$ denotes the complement of $\cT_\tau^{(n)}$, which is the set of sequences that are not $\tau$-typical.
Also,
\begin{align}
 \E\Tr[(I-\Pi)\widetilde\zeta_{X^n}]
 &\le\Tr[(I-\Pi)\bar\zeta^{\otimes n}]
 +\E\lVert\zeta_{X^n}-\widetilde\zeta_{X^n}\rVert_1
 \le2^{-a_2n}.
 \label{eq:soft-outer-failure}
\end{align}
Gentle measurement for subnormalized states and Jensen's inequality now give
\begin{align}
 \E\lVert\widetilde\zeta_{X^n}
       -\Pi\widetilde\zeta_{X^n}\Pi\rVert_1
 &\le2\E\sqrt{\Tr[(I-\Pi)\widetilde\zeta_{X^n}]}\nonumber\\
 &\le2\sqrt{\E\Tr[(I-\Pi)\widetilde\zeta_{X^n}]}
 \le2^{-a_3n}.
 \label{eq:soft-second-truncation}
\end{align}
Combining the two truncations yields a sequence
\(\eta_n\le2^{-a_4n}\) such that
\begin{equation}
 \E\lVert\zeta_{X^n}-\zeta'_{X^n}\rVert_1\le\eta_n.
 \label{eq:soft-total-truncation}
\end{equation}

Put \(\lambda_n:=2^{-n[H(Z|X)-2\delta]}\).  For typical \(x^n\),
\[
 \Pi\widetilde\zeta_{x^n}\Pi
 \le\lambda_n\Pi\Pi_{x^n}\Pi\le\lambda_n\Pi.
\]
For atypical \(x^n\), \(\zeta'_{x^n}=0\), so the same final bound is
trivial.  Hence every \(\zeta'_{x^n}\) is supported on \(\Pi\), and
\begin{equation}
 0\le\zeta'_{x^n}\le\lambda_n\Pi.
 \label{eq:soft-truncated-bound}
\end{equation}
Let
\[
 \widehat\zeta'_\cC=\frac1{L_n}\sum_\ell\zeta'_{X^n(\ell)},
 \qquad \bar\zeta'_n=\E\zeta'_{X^n}.
\]
For a Hermitian operator supported on a \(D\)-dimensional space,
\(\lVert T\rVert_1\le\sqrt D\lVert T\rVert_2\).  Independence and
centering of the codeword terms give
\begin{align}
 \E\lVert\widehat\zeta'_\cC-\bar\zeta'_n\rVert_1
 &\le\sqrt{\frac{\rank\Pi}{L_n}
 \left(\E\Tr[(\zeta'_{X^n})^2]-\Tr[(\bar\zeta'_n)^2]\right)}
 \nonumber\\
 &\le\sqrt{\frac{\rank\Pi}{L_n}
 \E\Tr[(\zeta'_{X^n})^2]}.
 \label{eq:soft-second-moment}
\end{align}
Because \(\Tr\zeta'_{x^n}\le1\) and
\(\zeta'_{x^n}\le\lambda_n\Pi\),
\(\Tr[(\zeta'_{x^n})^2]\le\lambda_n\).  Combining this with
\eqref{eq:soft-average-rank} gives
\begin{equation}
 \E\lVert\widehat\zeta'_\cC-\bar\zeta'_n\rVert_1
 \le2^{-\frac n2[R-I(X;Z)-3\delta]}.
 \label{eq:soft-second-moment-exponent}
\end{equation}
Finally, convexity, \eqref{eq:soft-total-truncation}, and the triangle
inequality yield
\[
 \E\lVert\widehat\zeta_\cC-\bar\zeta^{\otimes n}\rVert_1
 \le2\eta_n+2^{-\frac n2[R-I(X;Z)-3\delta]}.
\]
Choose
\[
 0<c<\min\!\left\{a_4,
 \frac{R-I(X;Z)-3\delta}{2}\right\}.
\]
After increasing \(n_0\) to absorb the fixed prefactors, division by two
gives \eqref{eq:cq-soft-covering}.
\end{document}